\documentclass[11pt]{article}
\usepackage[letterpaper,hmargin=0.95in,vmargin=0.95in]{geometry}

\usepackage[ruled]{algorithm2e} 

\SetAlFnt{\small}
\SetAlCapFnt{\small}
\SetAlCapNameFnt{\small}
\SetAlCapHSkip{0pt}
\IncMargin{-\parindent}

\usepackage{amsmath,amsthm,amsfonts,latexsym,verbatim,threeparttable,amssymb,amsfonts,rotating,pifont}

\usepackage[numbers]{natbib}
\usepackage{caption}
\usepackage{adjustbox,multirow}

\usepackage{float}

\usepackage{enumitem}
\usepackage{latexsym}
\usepackage{mathtools}
\usepackage{multirow}
\usepackage{multicol}
\usepackage{mwe}
\usepackage{makecell}
\usepackage{pgfplots}
\usepackage[section]{placeins}
\usepackage{pifont}
\usepackage{ragged2e}
\usepackage{rotating}
\usepackage{subcaption}
\usepackage{tabu}
\usepackage{tabularx}
\usepackage{threeparttable}
\usepackage{tikz}
\usepackage[normalem]{ulem}
\usepackage{url}
\usepackage{verbatim}
\usepackage{wrapfig}
\usepackage{xcolor}
\usepackage{xspace}
\usepackage[framemethod=tikz]{mdframed}
\usepackage{algpseudocode}
\definecolor{DarkRed}{rgb}{0.5,0.1,0.1}
\definecolor{ForestGreen}{rgb}{0.1333,0.5451,0.1333}
\definecolor{BabyBlue}{rgb}{0.54, 0.81, 0.94}
\definecolor{Cobalt}{rgb}{0.0, 0.28, 0.67}
\definecolor{Blond}{rgb}{0.98, 0.94, 0.75}
\usepackage{tikz}
\usetikzlibrary{decorations.pathreplacing}
\makeatletter
\DeclareRobustCommand{\rvdots}{%
  \vbox{
    \baselineskip4\p@\lineskiplimit\z@
    \kern-\p@
    \hbox{.}\hbox{.}\hbox{.}
  }}
\makeatother

\input{symbols_env}
\usepackage{verbatim,color}
\newcommand{\F}{\ensuremath{\mathbb{F}}}
\newcommand{\defined}{\ensuremath{\stackrel{def}{=}}}
\newcommand{\view}{\ensuremath{\mathsf{view}}}

\newcommand{\Order}{\ensuremath{\mathcal{O}}}
\newcommand{\starF}{\ensuremath{{F}^{\star}}}
\newcommand{\starq}{\ensuremath{{q}^{\star}}}
\newcommand{\stars}{\ensuremath{{s}^{\star}}}
\newcommand{\starf}{\ensuremath{{f}^{\star}}}
\newcommand{\Gr}{\ensuremath{\mathbb{G}}}
\newcommand{\STAR}{\ensuremath{\mathsf{star}}}
\newcommand{\param}{\ensuremath{L}}
\newcommand{\default}{\ensuremath{\perp}}
\newcommand{\intersection}{\ensuremath{\cap}}

\newcommand{\D}{\ensuremath{\mathsf{D}}}
\newcommand{\Adv}{\ensuremath{\mathsf{Adv}}}
\newcommand{\Sender}{\ensuremath{\mathsf{S}}}

\newcommand{\Partyset}{\ensuremath{\mathcal{P}}}
\newcommand{\Honest}{\ensuremath{\mathcal{H}}}
\newcommand{\Happy}{\ensuremath{\mathcal{H}}}
\newcommand{\Bad}{\ensuremath{\mathcal{C}}}

\newcommand{\VCORE}{\ensuremath{\mathcal{V}}}
\newcommand{\WCORE}{\ensuremath{\mathcal{W}}}

\newcommand{\BC}{\ensuremath{\mathcal{BC}}}

\newcommand{\SecretSpace}{\ensuremath{\mathcal{S}}}
\newcommand{\R}{\ensuremath{\mathcal{R}}}
\newcommand{\Unhappy}{\ensuremath{\mathcal{UH}}}
\newcommand{\Group}{\ensuremath{\mathcal{G}}}
\newcommand{\CSet}{\ensuremath{\mathcal{C}}}
\newcommand{\DSet}{\ensuremath{\mathcal{D}}}
\newcommand{\ESet}{\ensuremath{\mathcal{E}}}
\newcommand{\FSet}{\ensuremath{\mathcal{F}}}
\newcommand{\RCORE}{\ensuremath{\mathsf{CORE}}}
\newcommand{\RC}{\ensuremath{\mathcal{RC}}}
\newcommand{\CC}{\ensuremath{\mathcal{CC}}}
\newcommand{\AList}{\ensuremath{\mathcal{A}}}
\newcommand{\BList}{\ensuremath{\mathcal{B}}}
\newcommand{\CList}{\ensuremath{\mathcal{C}}}
\newcommand{\DList}{\ensuremath{\mathcal{D}}}
\newcommand{\Support}{\ensuremath{\mathcal{SS}}}



\newcommand{\OK}{\ensuremath{\texttt{OK}}}
\newcommand{\Accuse}{\ensuremath{\texttt{accuse}}}
\newcommand{\complaint}{\ensuremath{\texttt{complaint}}}
\newcommand{\NEQ}{\ensuremath{\texttt{NEQ}}}
\newcommand{\EQ}{\ensuremath{\texttt{EQ}}}
\newcommand{\disagreef}{\ensuremath{\texttt{disagree}\mbox{-}\texttt{row}}}
\newcommand{\disagreeg}{\ensuremath{\texttt{disagree}\mbox{-}\texttt{column}}}
\newcommand{\agreef}{\ensuremath{\texttt{agree}\mbox{-}\texttt{row}}}
\newcommand{\agreeg}{\ensuremath{\texttt{agree}\mbox{-}\texttt{column}}}
\newcommand{\agree}{\ensuremath{\texttt{agree}}}
\newcommand{\disagree}{\ensuremath{\texttt{disagree}}}

\newcommand{\Sh}{\ensuremath{\mathsf{Sh}}}
\newcommand{\Rec}{\ensuremath{\mathsf{Rec}}}

\newcommand{\Gen}{\ensuremath{\mathsf{G}}}
\newcommand{\Recover}{\ensuremath{\mathsf{R}}}

\newcommand{\Shamir}{\ensuremath{\mathsf{Sha}}}

\newcommand{\OEC}{\ensuremath{\mathsf{OEC}}}
\newcommand{\RSDec}{\ensuremath{\mathsf{RS}\mbox{-}\mathsf{Dec}}}

\newcommand{\BGW}{\ensuremath{\mathsf{7BGW\mbox{-}VSS}}}
\newcommand{\ModBGW}{\ensuremath{\mathsf{5BGW\mbox{-}VSS}}}
\newcommand{\GIKR}{\ensuremath{\mathsf{GIKR}}}
\newcommand{\GIKRI}{\ensuremath{\mathsf{4GIKR\mbox{-}VSS}}}
\newcommand{\GIKRII}{\ensuremath{\mathsf{3GIKR\mbox{-}VSS}}}
\newcommand{\GIKRIII}{\ensuremath{\mathsf{2GIKR\mbox{-}VSS}}}
\newcommand{\FGGRS}{\ensuremath{\mathsf{3FGGRS\mbox{-}VSS}}}
\newcommand{\WSSSh}{\ensuremath{\mathsf{WSS}{-}\mathsf{Sh}}}
\newcommand{\WSSRec}{\ensuremath{\mathsf{WSS}{-}\mathsf{Rec}}}
\newcommand{\WSS}{\ensuremath{\mathsf{WSS}}}
\newcommand{\FGGRSWSS}{\ensuremath{\mathsf{3FGGRS\mbox{-}WSS}}}
\newcommand{\KKK}{\ensuremath{\mathsf{3KKK\mbox{-}VSS}}}
\newcommand{\KKKWSS}{\ensuremath{\mathsf{3KKK\mbox{-}WSS}}}
\newcommand{\WC}{\ensuremath{\mathsf{WC}}}
\newcommand{\AKP}{\ensuremath{\mathsf{3AKP\mbox{-}VSS}}}
\newcommand{\AKPWC}{\ensuremath{\mathsf{3AKP\mbox{-}WC}}}
\newcommand{\OneVSS}{\ensuremath{\mathsf{1GIKR\mbox{-}VSS}}}

\newcommand{\BCG}{\ensuremath{\mathsf{BCG\mbox{-}AVSS}}}
\newcommand{\PCR}{\ensuremath{\mathsf{PCR\mbox{-}AVSS}}}
\newcommand{\CHP}{\ensuremath{\mathsf{CHP\mbox{-}AVSS}}}

\newcommand{\AWPS}{\ensuremath{\mathsf{WPS}}}
\newcommand{\PR}{\ensuremath{\mathsf{PR}}}



\newenvironment{myitemize}
{\begin{list}{$\bullet$}{ 
\itemindent=-0.1in
\itemsep=0.0in
\parsep=0.0in
\topsep=0.0in
\partopsep=0.0in}}{\end{list}}
\newcounter{itemcount}

\newenvironment{myenumerate}
{\setcounter{itemcount}{0}\begin{list}
{\arabic{itemcount}.}{\usecounter{itemcount} \itemindent=-0.2cm
\itemsep=0.0in
\parsep=0.0in
\topsep=5pt
\partopsep=0.0in}}{\end{list}}

\newenvironment{mydescription}
{\setcounter{itemcount}{0}\begin{list}
{\arabic{itemcount}.}{\usecounter{itemcount} \itemindent=-0.5cm
\itemsep=0.0in
\parsep=0.0in
\topsep=5pt
\partopsep=0.0in}}{\end{list}}

 \newtheorem{theorem}{Theorem}[section]

\newtheorem{lemma}[theorem]{Lemma}

 \theoremstyle{definition}
 \newtheorem{definition}[theorem]{Definition}

 \newtheorem{notation}[theorem]{Notation}
 
\usepackage{color}
\usepackage{framed}

\begin{document}
\title{\bf A Survey on Perfectly-Secure Verifiable Secret-Sharing}

\author{Anirudh Chandramouli\footnote{International Institute of Information Technology, Bangalore India.
 Email: {\tt{anirudh.c@iiitb.ac.in}}.}   \and  Ashish Choudhury\footnote{International Institute of Information Technology, Bangalore India.
 Email: {\tt{ashish.choudhury@iiitb.ac.in}}. This research is an outcome of the R \& D work undertaken in the project under the Visvesvaraya PhD  Scheme of  
 Ministry of Electronics \& Information Technology, Government of India, being implemented by Digital India Corporation (formerly Media Lab Asia).} 
  \and Arpita Patra\footnote{Indian Institute of Science, Bangalore India. Email: {\tt arpita@iisc.ac.in}.  Arpita Patra would like to acknowledge financial support from SERB MATRICS (Theoretical Sciences) Grant 2020 and Google India AI/ML Research Award 2020.
}
 }
\maketitle
\begin{abstract}
 {\it Verifiable Secret-Sharing} (VSS) is a fundamental primitive in secure distributed computing. It is used as a building block in several distributed computing tasks, such
  as Byzantine agreement and secure multi-party computation.  In this article, we consider VSS schemes with {\it perfect} security,
  tolerating {\it computationally unbounded} adversaries. 
  We comprehensively survey the existing perfectly-secure VSS schemes in three different communication settings, namely synchronous, asynchronous and hybrid setting
  and provide full details of the existing schemes in these settings. The aim of this survey is to provide a clear knowledge and foundation
  to researchers who are interested in knowing and extending the state-of-the-art perfectly-secure VSS schemes.
\end{abstract}

\section{Introduction}
\label{sec:intro}
A central concept in cryptographic protocols is that of {\it Secret Sharing} (SS) \cite{Sha79,Bla79}. Consider a set $\Partyset = \{P_1, \ldots, P_n \}$
 of mutually distrusting parties, where the distrust is modeled by a centralized {\it adversary}, who can control up to $t$ parties. Then 
  a SS scheme allows a designated {\it dealer} $\D \in \Partyset$  to share a secret
  $s$ among $\Partyset$, by providing each $P_i$ a {\it share} of the secret $s$. The sharing is done in such a way that the adversary
   controlling any subset of at most $t$ share-holders fails
   to learn anything about $s$, while any subset of at least $t + 1$ share-holders can jointly recover $s$. In a SS scheme, it is assumed that {\it all} the parties
   including the ones under the adversary's control follow the protocol instructions correctly (thus, the adversary is assumed to only {\it eavesdrop}
   the computation and communication of the parties under its control). VSS \cite{CGMA85} extends the notion of SS to the more 
   powerful {\it malicious/active}
   adversarial model, where the adversary can completely dictate the behaviour of the parties under its control during a protocol execution. Moreover, 
 $\D$ is allowed to be potentially corrupted. A VSS scheme consists of a sharing phase and a reconstruction phase, each implemented by a publicly-known protocol.
   During the sharing phase, $\D$ shares its secret in a {\it verifiable} fashion, which is later reconstructed during the reconstruction phase. If $\D$ is {\it honest}, then the privacy
   of its secret is maintained during the sharing phase and the shared secret is later robustly reconstructed, irrespective of the behaviour of the corrupt parties. 
   The interesting property of VSS is the {\it verifiability} property, which guarantees that even if $\D$ is {\it corrupt},  it has ``consistently/correctly" shared some value among the
   parties and the same value is later reconstructed. One can interpret VSS as a distributed commitment, where, during the sharing phase, $\D$ publicly commits to a private input
   (known only to $\D$) and later during the reconstruction phase, the committed value is reconstructed publicly (even if $\D$ does not cooperate). 
    Due to its central importance in secure distributed-computing tasks, such as secure {\it multi-party computation} (MPC) \cite{BGW88,RB89}
    and Byzantine agreement (BA) \cite{FM97}, VSS has been studied in various settings, based on the following categorizations. 
    \begin{myitemize}
    \item {\bf Conditional vs Unconditional Security}: If the adversary is {\it computationally bounded} (where it is allowed to perform only polynomial amount of computations), then 
    the notion of security achieved is conditional/cryptographic \cite{Ped91,BKP11,BCG19},
     whereas unconditionally-secure VSS provides security even against a {\it computationally unbounded}
    adversary. Unconditionally-secure VSS can be further categorized as {\it perfectly-secure} where all security guarantees are achieved in an error-free fashion \cite{BGW88},
    and {\it statistically-secure} where a negligible error is allowed  \cite{RB89,CDDHR99,KPR10}.
    \item {\bf Type of Communication}: Here we have three categories. The {\it synchronous} model assumes that the parties
     are synchronized through a global clock and there are strict (publicly-known) upper bounds on the message delays \cite{BGW88,RB89,GIKR01,FGGRS06,KKK09,AKP20}. 
     The second category is the 
     {\it asynchronous}
     model \cite{BCG93,BKR94,BH07,PCR14,PCR15,CP17},
      where the parties are not synchronized and where the messages can be arbitrarily (but finitely)
     delayed. A major challenge in the asynchronous model is that a {\it slow} sender party (whose messages are delayed) cannot be distinguished from a 
     {\it corrupt} sender who does not send messages at all. Due to this inherent phenomenon, asynchronous VSS (AVSS) 
      protocols are more complicated than their synchronous
     counter-parts. The third category is the {\it hybrid} communication setting \cite{PR18}, which is a mix of the synchronous and asynchronous models. Namely, the
     protocol starts with a few initial synchronous rounds, followed by a completely asynchronous execution. The main motivation for 
      considering a hybrid setting is to ``bridge" the feasibility and
     efficiency gaps between completely synchronous and completely asynchronous protocols.
    \item {\bf Corruption Capacity}: Most of the works on VSS assume a {\it threshold} adversary which
     can corrupt any subset of
    $t$ parties. A {\it non-threshold} adversary \cite{CDM00,Mau06,CKP11,CP20} is a more generalized adversary, where the corruption capacity of
    adversary is specified by a publicly-known {\it adversary structure}, which is a collection of potentially corrupt subsets of parties. During the protocol execution,
   the  adversary can choose any subset from the collection for corruption. 
    \end{myitemize}
\noindent \paragraph{\bf Our Contributions and Paper Organization} We provide a comprehensive survey of all the existing {\it perfectly-secure} VSS schemes tolerating a 
 {\it threshold} adversary.
 We cover three communication settings, namely synchronous, asynchronous and hybrid. These schemes are designed over a period of three decades. The nuances, subtleties and foundational ideas involved  in these works need a holistic and unified treatment, which is the focus of this work.  
 This survey is structured to provide an easy digest of the perfectly-secure VSS schemes. 
   Through this survey, we hope to provide a clear knowledge and foundation
   to researchers who are interested in knowing and extending the state-of-the-art perfectly-secure VSS schemes.
  The survey is divided into three parts, each dealing with a separate communication model. 
  We do {\it  not} survey SS schemes and their share sizes, for which the interested readers are referred to the survey by Beimel \cite{Bei11}.

\begin{center}
{\bf Part I :  Synchronous Communication Setting}
\end{center}

\section{Preliminaries and Definitions}
\label{sec:SPrelim}
Throughout part I, we consider a {\it synchronous} communication setting, where the parties in $\Partyset$ are
 connected by pair-wise private and authentic channels.
  The distrust in the system is modelled by a {\it computationally unbounded} adversary $\Adv$, who 
  can corrupt at most $t$ parties during the execution of a protocol in a malicious/Byzantine fashion 
    and force them to behave in any arbitrary manner. 
  The parties
 under the control of $\Adv$ are called {\it corrupt/malicious}, while the parties not under $\Adv$'s control are called {\it honest}.
  We assume a
  {\it static} adversary, who decides the set of corrupt parties at the beginning of a protocol. However, following \cite{KLR10} the protocols discussed can be proved to be secure
   even against an {\it adaptive} adversary, who can corrupt parties as the protocol proceeds.

   We also assume
   the presence of a {\it broadcast} channel, which allows any designated party to send some message identically to all the parties.
   A protocol in the synchronous setting operates as a sequence of {\it rounds}. In each round, a party can (privately) send messages to other parties  and broadcast a message.
   Moreover in a given round, each party can simultaneously use the broadcast channel.
   The messages sent or broadcast by a party is determined by its input, random coins and the messages received from other parties in previous rounds. 
     The {\it view} of a party during a protocol execution consists of its inputs, random coins and all the messages received 
     by the party throughout the protocol execution. The {\it view} of $\Adv$ is the collection of the views of the corrupt parties.
\paragraph{\bf Structure of a VSS Scheme.}
Following \cite{GIKR01}, VSS schemes can be structured into two phases. A {\it sharing phase} executed by a protocol $\Sh$, followed by a 
 {\it reconstruction phase} executed by a protocol $\Rec$. While the goal of $\Sh$ is to share a secret held by a designated {\it dealer} $\D \in \Partyset$,
  the aim of $\Rec$ is to reconstruct back the shared secret. Specifically, during $\Sh$, the input of $\D$ is some secret 
   $s \in \SecretSpace$, where $\SecretSpace$ is some publicly-known {\it secret-space} which is the set of all possible $\D$'s secrets. Additionally, the parties may have random inputs
   for the protocol. Let $\view_i$ denote the view of $P_i$ at the end of $\Sh$. Based on $\view_i$, each $P_i$ outputs a {\it share} $s_i$, as determined by $\Sh$.
   
   During $\Rec$, each $P_i$ reveals a subset of $\view_i$, as per $\Rec$. 
    The parties then apply a reconstruction
   function on the revealed views, as per $\Rec$ and reconstruct some output. 
    Following \cite{KKK09}, we say that {\it round-complexity} of $\Sh$ (resp. $\Rec$) is $(R, R')$, if $\Sh$ (resp. $\Rec$) involves total $R$ rounds 
    and among these $R$ rounds, the broadcast channel is used for $R'$ rounds. 
   By {\it communication complexity} of a protocol, we mean the total number of bits communicated by the honest parties
   in the protocol. 
\subsection{Definitions}
A {\it $t\mbox{-out-of-}n$ secret-sharing} (SS) scheme is a pair of functions $(\Gen, \Recover)$. 
  While $\Gen$ is probabilistic, $\Recover$ is deterministic. 
    Function $\Gen$ generates shares for the input secret, while $\Recover$ maps the shares back to the secret.
    The shares are generated in such a way that the probability distribution of any set of $t$ shares is independent of the secret, while any set of
    $t + 1$ shares uniquely determines the secret.
\begin{definition}[\bf $t\mbox{-out-of-}n$ secret-sharing \cite{Gol04}]
\label{def:SS}
It is a pair of algorithms $(\Gen, \Recover)$, such that:
\begin{myitemize}
\item[--] {\bf Syntax}: The {\it share-generation function} $\Gen$ takes input
 a secret $s$ and some randomness $q$ and outputs a vector of $n$ shares $(s_1, \ldots, s_n)$.
   The {\it recovery function} $\Recover$ takes input a set of $t + 1$ shares corresponding to $t+1$ indices
  $\{i_1, \ldots, i_{t + 1} \} \subset \{1, \ldots, n \}$
  and outputs a value.  
\item[--]{\bf Correctness}: For any $s \in \SecretSpace$ and any vector $(s_1, \ldots, s_n)$ where 
 $(s_1, \ldots, s_n) = \Gen(s, q)$ for some randomness $q$, the condition $\Recover(s_{i_1}, \ldots, s_{i_{t+1}}) = s$ holds for any subset
  $\{i_1, \ldots, i_{t + 1} \} \subset \{1, \ldots, n \}$.
\item[--] {\bf Privacy}: For any subset of $t$ indices, the probability distribution of the shares corresponding to these indices is independent of
 the underlying secret. That is, for any $I = \{i_1, \ldots, i_t \} \subset \{1, \ldots, n \}$, let
 $g_I(s) \defined (s_{i_1}, \ldots, s_{i_t})$, where $(s_1, \ldots, s_n) = \Gen(s, q)$ for some randomness $q$. Then we require that
 for every index-set $I$ where $|I| = t$, the random variables $g_I(s)$ and $g_I(s')$ are identically distributed, for every $s, s' \in \SecretSpace$, where
 $s \neq s'$.
\end{myitemize}
\end{definition}    
\begin{definition}
\label{def:SSShared}
Let $\Pi = (\Pi_{\Gen}, \Pi_{\Recover})$ be a $t\mbox{-out-of-}n$ secret-sharing scheme. Then we say that {\it a value $s$ is secret-shared among $\Partyset$ as per
 $\Pi$}, if there exists some randomness $q$ such that $(s_1, \ldots, s_n) = \Pi_{\Gen}(s, q)$ and each {\it honest} party $P_i \in \Partyset$ has the share $s_i$.
\end{definition}
 In the literature, two types of VSS schemes have been considered. 
 The type-I VSS schemes are ``weaker" compared 
 to the type-II VSS schemes.
  Namely, in type-II VSS, it is {\it guaranteed} that the dealer's secret is secret-shared
 as per the semantics of some {\it specified} secret-sharing scheme\footnote{In Type-I VSS, the underlying secret {\it need not} be secret-shared
                         as per the semantics of any $t\mbox{-out-of-}n$ SS scheme.} (for instance, say Shamir's SS \cite{Sha79}).
  While type-I VSS is sufficient to study VSS as a stand-alone primitive (for instance, to study the round complexity of VSS  \cite{FGGRS06} or to
  design a BA protocol \cite{CanettiThesis}), 
 type-II VSS schemes are desirable when VSS is used as a primitive in secure MPC protocols \cite{KKK09,AL17,AKP20}.
   \begin{definition}[\bf Type-I VSS \cite{GIKR01}]
   \label{def:VSSWeak}
   Let $(\Sh, \Rec)$ be a pair of protocols for the parties in $\Partyset$, where a designated {\it dealer}
    $\D \in \Partyset$ has some private input $s \in \SecretSpace$ for the protocol $\Sh$.
   Then $(\Sh, \Rec)$ is called a {\it Type-I perfectly-secure VSS scheme}, if the following requirements hold.
     \begin{myitemize}
        \item[--] {\bf Privacy}: If $\D$ is {\it honest}, then the view of $\Adv$ during $\Sh$ is distributed independent of $s$.
         \item[--] {\bf Correctness}: If $\D$ is {\it honest}, then all honest parties output $s$ at the end of $\Rec$.
         \item[--] {\bf Strong Commitment}: Even if $\D$ is {\it corrupt}, in any execution of $\Sh$, the joint view of the honest parties defines a unique value
                    $s^{\star} \in \SecretSpace$ (which could be different from $s$), such that all honest parties
                 output $s^{\star}$ at the end of $\Rec$, irrespective of the behaviour of $\Adv$.
      \end{myitemize}
   \end{definition}
    \begin{definition}[\bf Type-II VSS \cite{Gol04}]
   \label{def:VSSStrong}
   Let $\Pi = (\Pi_{\Gen}, \Pi_{\Recover})$ be a $t\mbox{-out-of-}n$ SS scheme. 
   Then $(\Sh, \Rec)$ is called a {\it Type-II perfectly-secure VSS scheme with respect to $\Pi$}, if the following requirements hold.
     \begin{myitemize}
        \item[--] {\bf Privacy}: If $\D$ is {\it honest}, then the view of $\Adv$ during $\Sh$ is distributed independent of $s$.
         \item[--] {\bf Correctness}: If $\D$ is {\it honest}, then at the end of $\Sh$, the value $s$ is secret-shared among $\Partyset$ as per
                         $\Pi$ (see Definition \ref{def:SSShared}). Moreover, all honest parties output $s$ at the end of $\Rec$.
         \item[--] {\bf Strong Commitment}: Even if $\D$ is {\it corrupt}, in any execution of $\Sh$ the joint view of the honest parties defines some value
                    $s^{\star} \in \SecretSpace$, such that $s^{\star}$ is secret-shared among $\Partyset$ as per
                         $\Pi$ (see Definition \ref{def:SSShared}). Moreover, all honest parties output $s^{\star}$ at the end of $\Rec$.
      \end{myitemize}
   \end{definition}   
\paragraph{\bf Alternative Definition of VSS}
Definition \ref{def:VSSWeak}-\ref{def:VSSStrong} are called ``property-based" definition, where
  we enumerate a list of desired security goals.
  One can instead follow other definitional frameworks such as the the {\it ideal-world/real-world}
    paradigm of Canetti \cite{Can20} or the {\it constructive-cryptography} paradigm of Liu-Zhang and Maurer \cite{LM20}.
   Proving the security of VSS schemes as per these paradigm brings in additional technicalities in the proofs.
   Since our main goal  is to survey the existing VSS protocols, we will stick to the property-based definitions, which are easy to follow. 
  \subsection{Properties of Polynomials Over a Finite Field}
  Let $\F$ be a finite field where $|\F| > n$ with $\alpha_1, \ldots, \alpha_n$ be distinct non-zero elements of $\F$.
  A degree-$d$ {\it univariate polynomial} over $\F$ is of the form
 $f(x) = a_0  + \ldots + a_d x^d$, where $a_i \in \F$. A degree-$(\ell, m)$ {\it bivariate polynomial} over $\F$
   is of the form
    $F(x, y) = \displaystyle \sum_{i, j = 0}^{i = \ell, j = m}r_{ij}x^i y^j$, where $r_{ij} \in \F$.
    The polynomials $f_i(x) \defined F(x, \alpha_i)$ and $g_i(y) \defined F(\alpha_i, y)$
     are called the $i^{th}$ {\it row}
   and {\it column-polynomial} respectively of  $F(x, y)$
    as evaluating $f_i(x)$
   and $g_i(y)$ at $x = \alpha_1, \ldots, \alpha_n$
   and at $y = \alpha_1, \ldots, \alpha_n$ respectively results in an $n \times n$ matrix
   of points on $F(x, y)$ (see Fig \ref{fig:Bivariate}).
   Note  that $f_i(\alpha_j) = g_j(\alpha_i) = F(\alpha_j, \alpha_i)$ holds for all $\alpha_i, \alpha_j$. 
       We say a degree-$m$ polynomial $F_i(x)$ (resp.~a degree-$\ell$ polynomial $G_i(y)$), where $i \in \{1, \ldots, n \}$, {\it lies} on 
   a degree-$(\ell, m)$ bivariate polynomial $F(x, y)$, if $F(x, \alpha_i) = F_i(x)$ (resp.~$F(\alpha_i, y) = G_i(y)$) holds.
     $F(x, y)$ is called {\it symmetric}, if
    $r_{ij} = r_{ji}$ holds, implying $F(\alpha_j, \alpha_i) = F(\alpha_i, \alpha_j)$ 
    and $F(x, \alpha_i) = F(\alpha_i, x)$.
\begin{definition}[{\bf $d$-sharing} \cite{DN07,BH08}]
\label{def:sharing}
  A value $s \in \F$ is said to be $d$-shared, if there exists a 
   degree-$d$  polynomial, say $q(\cdot)$, with $q(0) = s$, such that each (honest) $P_i \in \Partyset$ holds
  its {\it share} $s_i \defined q(\alpha_i)$ (we interchangeably use the term {\it shares of $s$} and {\it shares of the polynomial $q(\cdot)$} to denote
   the values $q(\alpha_i)$).
  The vector of shares of $s$ corresponding to the (honest) parties in $\Partyset$ is denoted as $[s]_d$.
   A set of values $S = (s^{(1)}, \ldots, s^{(L)}) \in \F^{L}$ is said to be $d$-shared,
    if each $s^{(i)} \in S$ is $d$-shared.
\end{definition}
 \subsubsection{\bf Properties of Univariate Polynomials Over $\F$}
   Most of the type-II VSS schemes 
   are with respect to the Shamir's $t\mbox{-out-of-}n$ SS scheme $(\Shamir_{\Gen}, \Shamir_{\Recover})$ \cite{Sha79}.
    Algorithm $\Shamir_{\Gen}$ takes input a secret $s \in \F$. To compute the shares, it picks
   a {\it Shamir-sharing} polynomial $q(\cdot)$ randomly from the set $\Partyset^{s, t}$ of all degree-$t$ univariate polynomials
   over $\F$ whose constant term is $s$.
   The output of $\Shamir_{\Gen}$ is $(s_1, \ldots, s_n)$, where $s_i = q(\alpha_i)$. 
   Since $q(\cdot)$
    is chosen randomly, the probability distribution of the $t$ shares learnt by $\Adv$ will be independent of the underlying secret.
   Formally:
    \begin{lemma}[{\bf \cite{AL17}}]
   \label{lemma:Shamir}
   For any set of distinct non-zero elements  $\alpha_1, \ldots, \alpha_n \in \F$, any pair of values $s, s' \in \F$, any subset $I \subset \{1, \ldots, n \}$
   where $|I| = \ell \leq t$, and every $\vec{y} \in \F^{\ell}$, it holds that:
   \[\underset {f(x) \in_r \Partyset^{s, t}}{\mbox{Pr}} \Big [ \vec{y} = (\{f(\alpha_i) \}_{i \in I}   )     \Big ] =
     \underset {g(x) \in_r \Partyset^{s', t}}{\mbox{Pr}} \Big [ \vec{y} = (\{g(\alpha_i) \}_{I \in I}   )     \Big ],\] 
   where $f(x)$ and $g(x)$ are chosen randomly (denoted by the notation $\in_r$)
    from  $\Partyset^{s, t}$ and $\Partyset^{s', t}$, respectively.
   \end{lemma}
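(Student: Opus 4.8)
The plan is to show that \emph{both} sides of the claimed equality are equal to the constant $|\F|^{-\ell}$; in particular, I will argue that the probability $\Pr_{f(x) \in_r \Partyset^{s,t}}\big[\vec{y} = (\{f(\alpha_i)\}_{i \in I})\big]$ does not depend on $s$ at all (nor on $\vec{y}$), from which the lemma follows immediately by applying the same computation to $s'$. Since $f$ is drawn uniformly from $\Partyset^{s,t}$, a set of size exactly $|\F|^{t}$ (a degree-$t$ polynomial with prescribed constant term $s$ is determined by its remaining $t$ coefficients, each ranging freely over $\F$), it suffices to count the polynomials $f \in \Partyset^{s,t}$ that additionally satisfy $f(\alpha_i) = y_i$ for every $i \in I$, and to show this count equals $|\F|^{t-\ell}$ regardless of $s$ and $\vec{y}$.

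Write $I = \{i_1, \ldots, i_\ell\}$. A polynomial $f \in \Partyset^{s,t}$ with $f(\alpha_{i_j}) = y_j$ for all $j$ is precisely a degree-$\le t$ polynomial passing through the $\ell + 1$ points $(0, s), (\alpha_{i_1}, y_1), \ldots, (\alpha_{i_\ell}, y_\ell)$. Because the $\alpha_i$ are pairwise distinct and nonzero, the $x$-coordinates $0, \alpha_{i_1}, \ldots, \alpha_{i_\ell}$ are pairwise distinct, and there are $\ell + 1 \le t + 1$ of them. Hence by Lagrange interpolation there is a \emph{unique} polynomial $h^\star(x)$ of degree $\le \ell$ through these points, so the set of admissible $f$ is non-empty; and every admissible $f$ differs from $h^\star$ by a degree-$\le t$ polynomial that vanishes at all $\ell+1$ of these $x$-coordinates, i.e. by a polynomial of the form $\big(\prod_{j=0}^{\ell}(x - \beta_j)\big)\cdot h(x)$, where $\beta_0 = 0$, $\beta_j = \alpha_{i_j}$, and $h$ ranges over all polynomials of degree $\le t - \ell - 1$ (with the convention that this means $h \equiv 0$ when $\ell = t$). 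There are exactly $|\F|^{t-\ell}$ such $h$, hence $|\F|^{t-\ell}$ admissible $f$, independent of $s$ and $\vec{y}$. Therefore $\Pr_{f(x) \in_r \Partyset^{s,t}}\big[\vec{y} = (\{f(\alpha_i)\}_{i \in I})\big] = |\F|^{t-\ell}/|\F|^{t} = |\F|^{-\ell}$, and the identical computation over $\Partyset^{s',t}$ yields the same value, establishing the equality.

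The argument is essentially routine; the only points that need care — and the place where the hypotheses are genuinely used — are (i) that $0$ is \emph{not} among the $\alpha_i$, so that the constant-term constraint $f(0)=s$ is a genuinely independent interpolation condition, and (ii) that $\ell \le t$, so that $\ell+1 \le t+1$ values can in fact be freely prescribed on a degree-$\le t$ polynomial (if $\ell > t$ the set of admissible $f$ could be empty or a singleton, and its size would depend on $\vec{y}$). An equivalent way to package this, which avoids counting altogether, is to exhibit an explicit bijection between $\{f \in \Partyset^{s,t} : f(\alpha_i)=y_i,\, i \in I\}$ and $\{g \in \Partyset^{s',t} : g(\alpha_i)=y_i,\, i \in I\}$: fix the unique degree-$\le \ell$ polynomial $\delta(x)$ with $\delta(0) = s' - s$ and $\delta(\alpha_i) = 0$ for all $i \in I$ (again well defined since the $\ell+1$ abscissae are distinct and $\ell \le t$), and map $f \mapsto f + \delta$; this is a degree-preserving-within-$t$ bijection, making the independence of $s$ manifest. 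Either route completes the proof.
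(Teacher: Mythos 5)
Your proof is correct. The paper itself states this lemma without proof, simply citing \cite{AL17}, and your argument --- counting that exactly $|\F|^{t-\ell}$ polynomials in $\Partyset^{s,t}$ agree with $\vec{y}$ on $I$, so that both probabilities equal $|\F|^{-\ell}$ --- is essentially the standard proof given in that reference (your explicit bijection $f \mapsto f + \delta$ is an equally valid packaging of the same idea).
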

    Let $(s_1, \ldots, s_n)$ be a vector of Shamir-shares for $s$, generated by $\Shamir_{\Gen}$. Moreover, let $I \subset \{1, \ldots, n \}$, where
    $|I| = t + 1$. Then $\Shamir_{\Recover}$ takes input the shares $\{s_i \}_{i \in I}$ and outputs $s$ by interpolating the unique degree-$t$ Shamir-sharing polynomial passing
    through the points $\{(\alpha_i, s_i) \}_{i \in I}$. 
    \paragraph{\bf Relationship Between $d$-sharing and Reed-Solomon (RS) Codes}
    Let $s$ be $d$-shared through a polynomial $q(\cdot)$ and let $(s_1, \ldots, s_n)$ be the vector of shares. 
    Moreover, let $W$ be a subset of these shares, such that it is ensured that at most $r$ shares in $W$ are incorrect (the exact identity of the incorrect shares are
    not known). The goal is to error-correct the incorrect shares in $W$ and correctly reconstruct back the polynomial $q(\cdot)$. 
    Coding-theory \cite{MS81} says that this is possible if and only if $|W| \geq d + 2r + 1$ and the corresponding algorithm
    is denoted by $\RSDec(d, r, W)$. There are several well-known efficient instantiations of $\RSDec$, such as  
    the  Berlekamp-Welch algorithm  \cite{CodingTheory}.
\subsubsection{\bf Properties of Bivariate Polynomials Over $\F$}
 There always exists a unique degree-$d$ univariate polynomial, passing through
   $d+1$ distinct points. A generalization of this result for bivariate polynomials is that
    if there are ``sufficiently many" univariate polynomials which are ``pair-wise consistent",
   then together they lie on a unique bivariate polynomial. Formally:
   \begin{lemma}[\bf Pair-wise Consistency Lemma \cite{CanettiThesis,PCR15,AL17}]
  \label{lemma:bivariate}
Let $\{f_{i_1}(x), \ldots, f_{i_{q}}(x)\}$ and $\{g_{j_1}(y), \ldots, \allowbreak g_{j_r}(y) \}$
  be degree-$\ell$ and degree-$m$ polynomials respectively where
 $q \geq m + 1, r \geq \ell + 1$ and where 
 $i_1, \ldots, i_{q}, j_1, \ldots, j_r \in \{1, \ldots, n \}$.
 Moreover, let for every $i \in  \{i_1, \ldots, i_{q} \}$ and every $j \in \{j_1, \ldots, j_r \}$, 
 the condition
  $f_i(\alpha_j) = g_j(\alpha_i)$ holds.  Then there exists a unique
  degree-$(\ell, m)$ bivariate polynomial $\starF(x, y)$, such that
  the polynomials $f_{i_1}(x), \ldots, f_{i_q}(x)$ and 
  $g_{j_1}(y), \ldots, g_{j_r}(y)$ lie on $\starF(x, y)$.
\end{lemma}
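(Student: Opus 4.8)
The plan is to prove the Pair-wise Consistency Lemma by first constructing the candidate bivariate polynomial $\starF(x,y)$ from a sufficiently large sub-collection of the given univariate polynomials via bivariate interpolation, and then showing that \emph{all} the remaining polynomials in both families must also lie on it, forced by the degree bounds together with the pair-wise consistency condition. The starting observation is that any $\ell+1$ of the row-polynomials, say $f_{i_1}(x),\ldots,f_{i_{\ell+1}}(x)$ (note $q \ge m+1$ suffices to pick these since we only need enough; actually we need $\ge \ell+1$ of the column-polynomials $g_{j_k}(y)$ and $\ge m+1$ of the row-polynomials $f_{i_k}(x)$ — I would double-check the indexing, but the roles of $\ell$ and $m$ here match the degree conventions in the excerpt), together with the evaluation points $\alpha_{j_1},\ldots,\alpha_{j_r}$, determine a unique degree-$(\ell,m)$ bivariate polynomial.

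First I would take $m+1$ of the row-polynomials, $f_{i_1}(x),\ldots,f_{i_{m+1}}(x)$, each of degree $\ell$. For each fixed $x=\alpha_k$ with $k \in \{1,\ldots,n\}$ an evaluation point, the values $f_{i_1}(\alpha_k),\ldots,f_{i_{m+1}}(\alpha_k)$ are $m+1$ points, which determine a unique degree-$m$ polynomial in the ``$i$-direction''; more cleanly, I would directly invoke bivariate interpolation: there is a unique degree-$(\ell,m)$ polynomial $\starF(x,y)$ with $\starF(x,\alpha_{i_k}) = f_{i_k}(x)$ for $k=1,\ldots,m+1$, because fixing the $y$-degree to $m$ and prescribing $m+1$ row-slices pins down all coefficients (this is just solving a linear system whose matrix is a tensor of Vandermonde matrices, hence invertible since the $\alpha_{i_k}$ are distinct). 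This gives the candidate $\starF(x,y)$.

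Next I would show every $g_{j}(y)$ for $j \in \{j_1,\ldots,j_r\}$ satisfies $\starF(\alpha_j, y) = g_j(y)$. Fix such a $j$. Both $\starF(\alpha_j, y)$ and $g_j(y)$ are polynomials in $y$ of degree at most $m$. By construction of $\starF$ and the hypothesis, for each $k=1,\ldots,m+1$ we have $\starF(\alpha_j,\alpha_{i_k}) = f_{i_k}(\alpha_j) = g_j(\alpha_{i_k})$. So the two degree-$\le m$ polynomials $\starF(\alpha_j, y)$ and $g_j(y)$ agree at the $m+1$ distinct points $\alpha_{i_1},\ldots,\alpha_{i_{m+1}}$, hence are identical. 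Thus all $r \ge \ell+1$ of the column-polynomials lie on $\starF$. Then I would run the symmetric argument the other way: for any $i \in \{i_1,\ldots,i_q\}$, the polynomials $\starF(x,\alpha_i)$ and $f_i(x)$ are both of degree $\le \ell$, and they agree at the $\ell+1$ distinct points $\alpha_{j_1},\ldots,\alpha_{j_{\ell+1}}$ (using $\starF(\alpha_{j_s},\alpha_i) = g_{j_s}(\alpha_i) = f_i(\alpha_{j_s})$, which we now know), so $\starF(x,\alpha_i) = f_i(x)$. Finally, uniqueness of $\starF$ is immediate: any degree-$(\ell,m)$ bivariate polynomial on which all the $f_{i_k}$ lie is already determined by the first $m+1$ of them, so it must equal $\starF$.

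The only genuinely careful point — and what I'd flag as the main obstacle — is the bookkeeping of which degree ($\ell$ versus $m$) goes with which family and therefore exactly how many polynomials from each side are needed to pin things down ($m+1$ row-polynomials to fix $\starF$, then $\ell+1$ column-polynomials to recover an arbitrary row-polynomial); getting this backwards breaks the argument even though the underlying idea is routine. Everything else reduces to the elementary fact that a univariate polynomial of degree $\le d$ is determined by its values at $d+1$ distinct points, applied twice, plus the standard invertibility of the bivariate Vandermonde-type interpolation system. I would also note in passing that the lemma as used typically has $\ell = m$ (symmetric degree), in which case the two counting conditions coincide and the statement is cleaner.
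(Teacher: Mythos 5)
Your proof is correct and is essentially the standard argument that the paper's cited sources use for this lemma (which the survey itself states without proof): interpolate a unique degree-$(\ell,m)$ bivariate polynomial from $m+1$ of the row-polynomials, then use pair-wise consistency plus the fact that a degree-$d$ univariate polynomial is determined by $d+1$ points to show first that all column-polynomials and then the remaining row-polynomials lie on it, with uniqueness following from the interpolation step. Your initial hedging about whether $\ell+1$ or $m+1$ row-polynomials are needed is resolved correctly in the body ($m+1$ rows to pin down the $y$-degree-$m$ polynomial, then $\ell+1$ columns to recover each remaining row), so the argument stands as written.
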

In type-II VSS schemes based on Shamir's SS scheme, $\D$ on having input $s$ 
  first picks a random
  degree-$t$ Shamir-sharing polynomial $q(\cdot) \in \Partyset^{s, t}$ and then embeds $q(\cdot)$
 into a random degree-$(t, t)$ bivariate polynomial $F(x, y)$ at $x = 0$. 
  Each $P_i$ then receives  $f_i(x) = F(x, \alpha_i)$ and $g_i(y) = F(\alpha_i, y)$ from $\D$. Similar to 
 Shamir SS, $\Adv$ by learning at most $t$ row and column-polynomials, does not learn 
 $s$. Intuitively, this is because $(t + 1)^2$ distinct values  are required to uniquely determine
  $F(x, y)$, but $\Adv$ learns at most $t^2 + 2t$ distinct values.
  In fact, it can be shown that for every two degree-$t$ polynomials $q_1(\cdot), q_2(\cdot)$ 
 such that $q_1(\alpha_i) = q_2(\alpha_i) = f_i(0)$ holds for every $P_i \in \Bad$ (where $\Bad$ is the set of corrupt parties), the distribution of the polynomials
 $\{f_i(x), g_i(y) \}_{P_i \in \Bad}$ when $F(x, y)$ is chosen based on $q_1(\cdot)$, 
 is identical to the distribution when $F(x, y)$ is chosen based on $q_2(\cdot)$. Formally:
    \begin{lemma}[\cite{AL17}]
  \label{lemma:bivariateprivacy}
  Let $\Bad \subset \Partyset$ where $|\Bad| \leq t$, and let $q_1(\cdot) \neq q_2(\cdot)$ be degree-$t$ polynomials where $q_1(\alpha_i) = q_2(\alpha_i)$ holds
     for all $P_i \in \Bad$. Then the probability distributions 
     $\Big \{ \{F(x, \alpha_i), F(\alpha_i, y) \}_{P_i \in \Bad} \Big \}$  and  $\Big \{ \{F'(x, \alpha_i), F'(\alpha_i, y) \}_{P_i \in \Bad} \Big \}$ are 
     identical,  where $F(x, y) \neq F'(x, y)$ are different degree-$(t, t)$ bivariate polynomials, chosen at random, under the constraints that
  $F(0, y) = q_1(\cdot)$ and $F'(0, y) = q_2(\cdot)$ holds.  
  \end{lemma}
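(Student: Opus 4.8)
The plan is to reduce the statement to an elementary counting claim about degree-$(t,t)$ bivariate polynomials and then settle that claim by exhibiting an explicit bijection obtained by translation.

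First I would unpack the phrase ``$F(x,y)$ chosen at random under the constraint $F(0,y) = q_1(\cdot)$'': this means $F$ is uniformly distributed over the set $\mathcal{F}_1$ of all degree-$(t,t)$ bivariate polynomials whose coefficients $r_{0,0}, \ldots, r_{0,t}$ are fixed so that $F(0,y)=q_1(y)$, a set of size exactly $|\F|^{t(t+1)}$; analogously $F'$ is uniform over the set $\mathcal{F}_2$ defined from $q_2$, which has the same size. Consequently, for any fixed candidate view $V = \{\hat f_i(x), \hat g_i(y)\}_{P_i \in \Bad}$ (a tuple of degree-$t$ polynomials), the probability that the adversary's view equals $V$ is $N_1(V)/|\F|^{t(t+1)}$ in the first experiment and $N_2(V)/|\F|^{t(t+1)}$ in the second, where $N_b(V)$ denotes the number of $F \in \mathcal{F}_b$ satisfying $F(x,\alpha_i) = \hat f_i(x)$ and $F(\alpha_i,y) = \hat g_i(y)$ for every $P_i \in \Bad$. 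Hence it suffices to prove $N_1(V) = N_2(V)$ for every $V$. (Observe also that $N_b(V) > 0$ forces $\hat f_i(0) = F(0,\alpha_i) = q_b(\alpha_i)$ for all $P_i \in \Bad$; since $q_1(\alpha_i) = q_2(\alpha_i)$ by hypothesis, this necessary condition is the same for $b=1$ and $b=2$, so ``impossible'' views have $N_1(V) = N_2(V) = 0$.)

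Next, for any $V$, I would construct a bijection $\Phi$ between $\{F \in \mathcal{F}_1 : F \text{ is consistent with } V\}$ and $\{F' \in \mathcal{F}_2 : F' \text{ is consistent with } V\}$. Let $\lambda(x) := \prod_{P_i \in \Bad} \frac{x - \alpha_i}{-\alpha_i}$, which has degree $|\Bad| \le t$ and satisfies $\lambda(0) = 1$ and $\lambda(\alpha_i) = 0$ for every $P_i \in \Bad$ (this is the one place where the bound $|\Bad|\le t$ is used, besides the matching of feasibility conditions). Define $\Phi(F)(x,y) := F(x,y) + \big(q_2(y) - q_1(y)\big)\cdot \lambda(x)$. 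Because $q_2 - q_1$ has degree $\le t$ in $y$ and $\lambda$ has degree $\le t$ in $x$, $\Phi(F)$ is again a degree-$(t,t)$ bivariate polynomial; evaluating at $x=0$ gives $F(0,y) + (q_2(y) - q_1(y)) = q_2(y)$, so $\Phi(F) \in \mathcal{F}_2$. For $P_i \in \Bad$ we get $\Phi(F)(\alpha_i, y) = F(\alpha_i,y) + (q_2(y)-q_1(y))\lambda(\alpha_i) = \hat g_i(y)$, and $\Phi(F)(x,\alpha_i) = F(x,\alpha_i) + (q_2(\alpha_i) - q_1(\alpha_i))\lambda(x) = \hat f_i(x)$, where the last equality uses $q_1(\alpha_i) = q_2(\alpha_i)$; thus $\Phi(F)$ is consistent with $V$. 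The map $F' \mapsto F' + (q_1(y) - q_2(y))\lambda(x)$ is an inverse of $\Phi$ of the identical form, so $\Phi$ is a bijection and therefore $N_1(V) = N_2(V)$. Combined with the first step, the two view-distributions coincide, which is the claim.

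The step that needs the most care is the bookkeeping in the reduction: verifying that conditioning on $F(0,y) = q_b(y)$ yields a uniform distribution over an affine subspace of a size independent of $b$, and that the reduction to $N_1(V)=N_2(V)$ is valid for \emph{all} views $V$, including infeasible ones. The bijection itself is then a one-line translation argument; the hypotheses $|\Bad|\le t$ and $q_1(\alpha_i)=q_2(\alpha_i)$ enter exactly where indicated above. (An alternative would be a rank computation for the linear map $F \mapsto \{F(x,\alpha_i), F(\alpha_i,y)\}_{P_i\in\Bad}$ restricted to each $\mathcal{F}_b$, but the explicit translation is the cleanest route.)
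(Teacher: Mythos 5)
Your proposal is correct. Note that the survey itself does not prove Lemma~\ref{lemma:bivariateprivacy}; it imports it from \cite{AL17}, where the argument is the standard counting one: for every fixed adversarial view one shows that the number of degree-$(t,t)$ bivariate polynomials consistent with that view and with the constraint $F(0,y)=q_b(\cdot)$ is the same for $b=1,2$ (in \cite{AL17} this is done by explicitly computing the remaining degrees of freedom). Your reduction to the equality $N_1(V)=N_2(V)$, with $F$ uniform over an affine set of size $|\F|^{t(t+1)}$ for either constraint, is exactly that strategy; the difference is that instead of computing the common count you exhibit the translation $\Phi(F)=F+(q_2(y)-q_1(y))\lambda(x)$, which maps the $q_1$-consistent set bijectively onto the $q_2$-consistent set (and in particular handles infeasible views, where both counts are $0$, without a separate case). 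This is a slightly cleaner route to the same conclusion, and it makes transparent where each hypothesis enters: $|\Bad|\leq t$ keeps $\deg\lambda\leq t$ so that $\Phi(F)$ stays degree-$(t,t)$, and $q_1(\alpha_i)=q_2(\alpha_i)$ is what preserves the corrupt parties' row-polynomials $F(x,\alpha_i)$. One small point worth stating explicitly is that the normalization $\lambda(0)=1$ uses the paper's standing convention that the evaluation points $\alpha_1,\ldots,\alpha_n$ are distinct and non-zero; with that remark added, your argument is complete.
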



\section{Lower Bounds}
\label{sec:SThresholdBounds}
In {\it any} perfectly-secure VSS scheme, 
 the joint view of the {\it honest} parties should uniquely determine the dealer's secret. Otherwise the {\it correctness} property 
 will be violated if the corrupt parties produce incorrect view during the
  reconstruction phase. Since there can be only $n - t$ {\it honest} parties, to satisfy the {\it privacy} property, the condition
  $n - t > t$ should necessarily hold, as otherwise the view of the adversary will not be independent of the dealer's secret. 
    We actually need a stricter necessary condition of
  $n > 3t$ to hold for {\it any} perfectly-secure VSS scheme, as stated in the following theorem.
  \begin{theorem}[\cite{DDWY93}]
  \label{SThresholdVSS:Necessaity}
  Let $\Pi = (\Sh, \Rec)$ be a perfectly-secure VSS scheme. Then $n > 3t$ holds.
  \end{theorem}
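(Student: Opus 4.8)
The plan is to argue by contradiction: suppose a perfectly‑secure VSS scheme $\Pi = (\Sh, \Rec)$ exists with $n \le 3t$. Since $n > 2t$ is already necessary (see the discussion immediately preceding the theorem), we may assume $n \ge 2t+1$, hence $n \ge 3$, and therefore partition $\Partyset$ into three non‑empty sets $\Partyset_1, \Partyset_2, \Partyset_3$, each of size at most $t$; relabelling if needed, assume the dealer $\D \in \Partyset_1$. The leverage is that, because each $\Partyset_i$ has size at most $t$, the adversary is entitled to corrupt any single one of them in full, so the analysis reduces to a three‑party VSS tolerating one corruption. The intuition for why $n \le 3t$ cannot work is already visible at the level of the final shares: by Privacy no $t$ parties learn the secret, while Correctness forces $\Rec$ to recover the secret from the $\ge n-t$ honest parties even when the remaining $t$ submit arbitrary values — and robustly recovering a value hidden from every $t$‑subset, in the presence of $t$ faults, is exactly the Reed–Solomon regime that (cf.\ the bound $|W| \ge d + 2r + 1$ for $\RSDec$ with $d = r = t$) forces $n \ge 3t+1$. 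The work of the proof is to turn this into a statement about the interactive protocol, with no assumption that the parties' states are polynomial evaluations.

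Concretely, I would first set up honest‑dealer reference executions: for $v \in \{0,1\}$ let $\mathcal{E}_v$ be the execution in which $\D$ honestly shares $v$ while the adversary passively corrupts $\Partyset_3$ (following the protocol, but recording $\Partyset_3$'s view). By Correctness every honest party outputs $v$ at the end of $\Rec$ in $\mathcal{E}_v$; by Privacy, $\Partyset_3$'s $\Sh$‑view is identically distributed in $\mathcal{E}_0$ and $\mathcal{E}_1$, so one can fix coins for which it is the \emph{same} string in both, and symmetrically for $\Partyset_2$. Then I would introduce a corrupt‑dealer execution in which the adversary corrupts $\Partyset_1$ (hence $\D$) and has $\D$ run $\Sh$ so that $\Partyset_2$'s $\Sh$‑view is exactly the one it has in the fixed‑coin version of $\mathcal{E}_0$, while $\Partyset_3$'s is exactly the one it has in $\mathcal{E}_1$. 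Strong Commitment pins the value reconstructed by the honest set $\Partyset_2 \cup \Partyset_3$ to a fixed function of their $\Sh$‑views, hence to a single value; yet during $\Rec$ the corrupt parties of $\Partyset_1$ can be shown to steer the honest output both to $0$ (mimicking, toward $\Partyset_2 \cup \Partyset_3$, the $\Rec$ messages of $\mathcal{E}_0$) and to $1$ (mimicking $\mathcal{E}_1$), contradicting Strong Commitment, or, phrased via the honest‑dealer side, contradicting Correctness. One packages this as a short hybrid chain $\mathcal{E}_0 \to \cdots \to \mathcal{E}_1$ of executions, consecutive ones indistinguishable to whichever group of honest parties is held fixed, so that the Correctness obligations at the two ends collide.

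The main obstacle — the only place genuine care is needed — is the splicing step. In the corrupt‑dealer execution, $\Partyset_2$ and $\Partyset_3$ are \emph{both} honest and communicate with each other directly and over the single broadcast channel, so the corrupt $\D$ cannot naively ``show one face to $\Partyset_2$ and another to $\Partyset_3$'': the messages $\Partyset_2$ receives from $\Partyset_3$, and the one broadcast transcript, threaten to betray which world is being simulated. Overcoming this requires a round‑by‑round hybrid that moves a single honest group at a time between the $v{=}0$ and $v{=}1$ worlds, invoking the Privacy clause at each step to re‑randomise that group's view conditioned on the already‑fixed part of the transcript (in particular on the broadcast messages, which Privacy — since they belong to the adversary's view — forces to be distributed independently of the secret), while verifying that the spliced transcript remains a legal execution of $\Sh$ and that the value fixed by Strong Commitment is tracked consistently along the chain. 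Once the two endpoint executions are shown indistinguishable to a set of honest parties that, by Correctness, must output the dealer's secret, their being forced to output both $0$ and $1$ is the desired contradiction, so $n > 3t$.
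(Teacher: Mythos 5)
Your proposal takes a genuinely different route from the paper. The paper proves Theorem~\ref{SThresholdVSS:Necessaity} by reduction: a perfectly-secure VSS with $n \le 3t$ would yield a $1$-way PSMT protocol over $n \le 3t$ channels (the sender runs $\Sh$ in its head, ships $\view_i$ over channel $i$, and the receiver applies the reconstruction function of $\Rec$), and the impossibility of such a PSMT protocol is outsourced to \cite{DDWY93}; an alternative reduction to reliable broadcast is also mentioned. You instead give a direct player-partitioning/splicing argument in the style of the classical $n \le 3t$ broadcast lower bound, which is self-contained and elementary but obliges you to handle the shared broadcast channel and the $\Partyset_2\leftrightarrow\Partyset_3$ traffic yourself. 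The skeleton is sound and can be completed; what each approach buys is clear: the paper gets a two-line proof at the cost of citing an external lower bound, while yours re-derives everything from the three VSS properties alone.

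Two places need tightening before your plan is a proof. First, the splicing obstacle you call the main difficulty has a one-shot resolution that makes the round-by-round hybrid unnecessary: apply Privacy once, to the single group $\Partyset_3$, to fix a pair of all-honest executions $\mathcal{E}_0,\mathcal{E}_1$ (secrets $0$ and $1$) in which $\Partyset_3$'s entire $\Sh$-view is the same string. Since the broadcast transcript, the messages $\Partyset_2\to\Partyset_3$, and (through $\Partyset_3$'s coins and received messages) the messages $\Partyset_3\to\Partyset_2$ are deterministic functions of that view, the whole interface seen jointly by the honest groups already coincides in $\mathcal{E}_0$ and $\mathcal{E}_1$, and a corrupt $\Partyset_1$ can simply replay $\D$'s $\mathcal{E}_0$-messages to $\Partyset_2$ and its $\mathcal{E}_1$-messages to $\Partyset_3$. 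By contrast, the hybrid you sketch, which re-samples one group's view conditioned on an already-fixed partial transcript, is not directly licensed by the Privacy property (Privacy equates marginal distributions of a $t$-set's view, not arbitrary conditionals), and the clause ``symmetrically for $\Partyset_2$'' must not be read as fixing both groups' views simultaneously across the two worlds: the joint view of $\Partyset_2\cup\Partyset_3$ is not independent of the secret. Second, the steering step should be routed through explicit auxiliary executions to which Correctness applies: with the adversary replaying the $\mathcal{E}_0$-dealer in $\Rec$, the spliced transcript coincides with that of $\mathcal{E}_0$ (given the common-view choice above), forcing output $0$; with the adversary replaying the $\mathcal{E}_1$-dealer, it coincides with the execution in which $\D$ honestly shares $1$ and a corrupt $\Partyset_2$ reveals its $\mathcal{E}_0$-consistent view during $\Rec$, forcing output $1$; Strong Commitment in the spliced (corrupt-dealer) execution then gives the contradiction. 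As written, ``mimicking the $\Rec$ messages of $\mathcal{E}_1$'' alone does not force output $1$, because the honest $\Partyset_2$ reveals $\mathcal{E}_0$-data, so the transcript is not that of $\mathcal{E}_1$ itself. (Both steps implicitly use the paper's convention that $\Rec$ is a single round of revelations followed by a local reconstruction function; for an interactive $\Rec$ a routine mimicking induction is needed.) With these repairs your argument is a correct, more elementary alternative to the paper's citation-based proof.
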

   Theorem \ref{SThresholdVSS:Necessaity} is first proved formally by Dolev, Dwork, Waarts and Yung in \cite{DDWY93} by relating VSS with the problem of
  {\it $1$-way perfectly-secure message transmission} (1-way PSMT) \cite{DDWY93}.  
  In the 1-way PSMT problem, there is a sender {\bf S} and a receiver
   {\bf R}, such that there are $n$ {\it disjoint uni-directional}
  communication channels $Ch_1, \ldots, Ch_n$
   from {\bf S} to {\bf R} (i.e. only {\bf S} can send messages to {\bf R} along these channels).
  At most $t$ out of these channels can be controlled by a {\it computationally unbounded} malicious/Byzantine adversary in any arbitrary fashion.
    The goal is to design a protocol, which allows {\bf S} to send some input message $m$ {\it reliably} (i.e. {\bf R} should be able to receive $m$ without any error)
    and {\it privately} (i.e. view of the adversary should be independent of $m$) to {\bf R}.
    In \cite{DDWY93}, it is shown that a 1-way PSMT protocol exists only if $n > 3t$. Moreover,
     if there exists a perfectly-secure VSS scheme $(\Sh, \Rec)$ with $n \leq 3t$, then one can design a $1$-way PSMT protocol with $n \leq 3t$, which is a contradiction.
     On a very high level, the reduction from $1$-way PSMT to VSS can be shown as follows: 
     {\bf S} on having a message $m$, acts as a dealer and runs an instance of $\Sh$ with input $m$ by playing the role of the parties $P_1, \ldots, P_n$ as per the protocol
     $\Sh$. Let $\view_i$ be the view generated for $P_i$, which {\bf S} communicates to {\bf R} over $Ch_i$.
     Let {\bf R} receives $\view'_i$ over $Ch_i$, where $\view'_i = \view_i$ if $Ch_i$ is {\it not} under adversary's control. 
     To recover $m$, {\bf R} applies the reconstruction function as per $\Rec$ on $(\view'_1, \ldots, \view'_n)$. 
     It is easy to see that the {\it correctness} of the VSS scheme implies that {\bf R} correctly recovers $m$, while the {\it privacy}
     of the VSS scheme guarantees that the view of any adversary controlling at most $t$ channels remains independent of $m$.   
     
 An alternative argument for the requirement of  $n > 3t$ for perfect VSS follows from its reduction to a perfectly-secure {\it reliable-broadcast} (RB) protocol  over the pair-wise channels, for which   $n > 3t$ is required \cite{PSL80}. This is elaborated further  later in the context of usage of a broadcast channel for designing VSS protocols (the paragraph entitled "On the Usage of Broadcast Channel" after Lemma~\ref{lemma:SMImpossibility} and Footnote \ref{fn:VSStoBA}).
    \paragraph{\bf The Round Complexity of VSS}
    In Genarro, Ishai, Kushilevitz and Rabin \cite{GIKR01}, 
    the round-complexity of a VSS scheme is defined to be the number of rounds in the sharing phase, as all (perfectly-secure) VSS schemes adhere to  a single-round reconstruction.
      The interplay between the round-complexity of perfectly-secure VSS and resilience bounds is stated below. 
  \begin{theorem}[\cite{GIKR01}]
  \label{SThresholdVSS:RoundComplexity}
  Let $R \geq 1$ be a positive integer and let $R' \leq R$. Then:
    \begin{myitemize}
      \item If $R = 1$, then there exists no perfectly-secure VSS scheme with $(R, R')$ rounds in the sharing phase 
      if either $t > 1$ (irrespective of the value of $n$) or if ($t = 1$ and $n \leq 4$).      
         \item If $R = 2$, then perfectly-secure VSS with $(R, R')$ rounds in  sharing phase  is possible only if $n > 4t$.
         \item If $R \geq 3$, then perfectly-secure VSS with $(R, R')$ rounds in sharing phase is possible only if $n > 3t$.
    \end{myitemize}
  \end{theorem}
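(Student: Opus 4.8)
The plan is to handle the three cases separately. The case $R \geq 3$ is immediate: Theorem~\ref{SThresholdVSS:Necessaity} already shows that \emph{any} perfectly-secure VSS scheme requires $n > 3t$, regardless of round complexity, so no additional argument is needed. The content of the theorem lies entirely in the $R = 1$ and $R = 2$ cases, where the shortage of rounds is shown to force a strictly stronger resilience bound.

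For both of these cases I would use the standard template for VSS lower bounds: describe an adversary (a corrupt dealer, possibly aided by a few corrupt parties during reconstruction) and argue through a hybrid/indistinguishability chain—anchored in the \emph{privacy} guarantee—that this adversary violates either \emph{strong commitment} (the joint view of the honest parties fails to determine a unique $s^\star$, or two admissible reconstruction-phase behaviours lead to different honest outputs) or the secrecy of an honest dealer. The leverage privacy provides is that, for an honest $\D$, the transcript produced for a secret $s$ is identically distributed—on any $t$ of the views together with the broadcast—to the transcript for any other secret $s'$; hence a corrupt $\D$ can splice together fragments of the transcripts belonging to two different secrets with no honest party able to notice.

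For $R = 1$ I would first normalize: in a one-round sharing phase only the dealer's messages can carry information about $s$ (a non-dealer has no secret input and has received nothing when it speaks), so that, ignoring the parts of its view that are independent of $s$, each $P_i$'s post-sharing view is $(m_i, B)$, with $m_i$ the private message from $\D$ and $B$ the dealer's broadcast; and $\Rec$ is a single round in which each party publishes a function of its view and everyone applies a fixed decoder. Robustness against $t$ lying revealers then forces the map $s \mapsto (m_1,\dots,m_n)$ to behave like an error-correcting encoding (decodable in the presence of up to $t$ faulty shares, given $B$), while privacy forces every $t$-subset of the $m_i$'s to be independent of $s$. Pitting these two constraints against strong commitment, a corrupt $\D$ sends a transcript whose honest-looking sub-transcripts, as seen by two different $(n-t)$-element honest subsets, decode to two different secrets, contradicting the uniqueness of $s^\star$. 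The subtle point—and the reason the statement separates ``$t > 1$, any $n$'' from ``$t = 1$ and $n \leq 4$'', and why the bound here is not merely $n > 3t$—is the counting of how many honest views the two scenarios must share: for $t = 1, n \geq 5$ there is enough room to avoid the contradiction (consistent with the matching positive result), so this case analysis, essentially the one of Gennaro et al.~\cite{GIKR01}, is the main obstacle.

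For $R = 2$, assuming $n \leq 4t$ for contradiction, I would partition $\Partyset$ into four groups $S_1,S_2,S_3,S_4$ each of size at most $t$. In a two-round sharing phase $\D$ speaks in round one and the parties cross-check in round two; a corrupt $\D$ distributes inconsistent data across the four groups so that every view available to an $(n-t)$-element honest subset looks like a genuine sharing of some secret (again using privacy to argue indistinguishability from honest executions of two different secrets), and—there being no third round—the honest parties have no chance to reconcile the discrepancy, so their joint view does not pin down a unique $s^\star$. Here too the difficulty is purely the bookkeeping: one must check that each corrupt coalition invoked has size at most $t$, that the spliced transcripts are indistinguishable to every honest party in every hybrid, and that it is precisely a third round of cross-checking that would expose the attack. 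Finally, in all cases one should note that the broadcast channel cannot be used to circumvent these bounds—a broadcast is just another delivery mechanism inside the indistinguishability arguments—which is what the later ``On the Usage of Broadcast Channel'' discussion and Lemma~\ref{lemma:SMImpossibility} make precise.
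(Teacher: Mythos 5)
Your decomposition matches the paper only on the easy case $R \geq 3$, which both you and the survey dispatch by invoking Theorem~\ref{SThresholdVSS:Necessaity}; for the two substantive cases you take a genuinely different route. The paper proves the $R=2$ bound by a reduction through an intermediate primitive, secure multicast: Lemma~\ref{lemma:VSStoSM} converts any VSS with a $k$-round sharing phase ($k\geq 2$) into a $k$-round SM protocol --- via a round-compression step (all messages after round one can be turned into broadcasts by pre-exchanging one-time pads in round one) together with a final-round forwarding of views to the receiving set --- and Lemma~\ref{lemma:SMImpossibility} then rules out $2$-round SM for $n \leq 4t$ by player-partitioning down to the $n=4$, $t=1$ instance, whose impossibility is cited from \cite{GIKR01}; the $R=1$ bounds are in turn obtained from these by standard hybrid arguments, again by citation. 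You instead attack the VSS scheme directly: a splicing/partitioning corrupt-dealer adversary for $R=2$, and a from-scratch privacy-versus-error-correction argument with the $t>1$ versus $(t=1, n\leq 4)$ case split for $R=1$. Both routes are legitimate in principle; the paper's reduction buys a clean isolation of the combinatorial kernel inside a tiny four-party multicast problem, while your direct route avoids the SM detour but must then carry the full indistinguishability argument against an arbitrary two-round protocol at the VSS level. Do note, however, that what you set aside as ``bookkeeping'' in the $R=2$ case is exactly that kernel --- the explicit inconsistent-dealer strategy and the hybrid chain showing that no third round exists to reconcile it --- and neither your sketch nor, to be fair, the survey's own overview actually carries it out (the survey defers it to \cite{GIKR01}); so if your write-up is meant to be self-contained, that construction, and likewise the detailed $R=1$ case analysis, is where the real work still lies.
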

     We give a very high level overview of the proof of Theorem \ref{SThresholdVSS:RoundComplexity}. We focus only on  $2$ and $R$-round sharing phase VSS schemes where
     $R \geq 3$, as
  one can use standard hybrid arguments to derive the bounds related to 
   VSS schemes with $1$-round sharing phase 
      (see for instance \cite{PCRR09}).  
   The lower bound for $2$-round VSS schemes (namely $n > 4t$) is derived by relating VSS to the 
  {\it secure multi-cast} (SM) problem. In the SM problem, there exists a designated {\it sender} $\Sender \in \Partyset$ with some private message $m$
  and a designated {\it receiving set} $\R \subseteq \Partyset$, where $\Sender \in \R$ and where $|\R| > 2$.
   The goal is to design a protocol which allows $\Sender$ to send its message identically to all 
   the parties {\it only} in $\R$,
    even in the presence of an adversary who can control any $t$ parties, possibly including $\Sender$.
    Moreover, if all the parties in $\R$ are {\it honest}, then the view of the adversary should be independent of $m$.
    Genarro et al.~\cite{GIKR01} establishes the following relationship between VSS and SM. 
    \begin{lemma}[\cite{GIKR01}]
    \label{lemma:VSStoSM}
    Let $(\Sh, \Rec)$ be a VSS scheme with a $k$-round sharing phase where $k \geq 2$. Then there exists a $k$-round SM protocol.
    \end{lemma}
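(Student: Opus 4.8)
The plan is to implement a secure multi-cast protocol for a sender $\Sender$ and receiving set $\R$ by having $\Sender$ play the role of the dealer $\D$ in the given VSS scheme and having only the parties in $\R$ (together with $\Sender$) actually run the protocol, while the remaining parties $\Partyset \setminus \R$ are \emph{simulated internally} by the members of $\R$. Concretely, fix an injective mapping that identifies $\R$ with a subset of $\{P_1,\ldots,P_n\}$ containing $\D$'s index; the parties in $\R$ run the first $k-1$ rounds of $\Sh$ among themselves on $\Sender$'s input $m$, and instead of a $k$-th round they locally emulate the reconstruction protocol $\Rec$ (which, as the excerpt notes, is a single round) by exchanging among themselves the messages $\Rec$ prescribes, thereby each recovering the committed value $s^{\star}$. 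Since $\D=\Sender$ is honest whenever $\Sender$ is honest, \textbf{correctness} of VSS forces every honest party in $\R$ to output $s^{\star}=m$; and when $\Sender$ is corrupt, \textbf{strong commitment} guarantees that the honest parties in $\R$ at least agree on a common value, which is exactly the agreement requirement of SM. \textbf{Privacy} (when all of $\R$ is honest) follows because the adversary, controlling at most $t$ parties all outside $\R$, sees only messages that $\Rec$ would reveal publicly or that are routed through corrupt simulated parties, and the VSS privacy property says the view of any $t$-subset is independent of $m$.

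The first thing I would pin down carefully is how the parties in $\Partyset\setminus\R$ get simulated, because an SM protocol is only allowed to involve the parties in $\Partyset$, and in a $k$-round VSS the non-$\R$ parties may need to send messages in each of the first $k-1$ rounds. The key observation is that in round $r\le k-1$ each simulated party's outgoing messages are a deterministic function of its view after round $r-1$, which in turn is determined by its random tape plus messages received so far; so we can let the members of $\R$ collectively hold the random tapes and views of the simulated parties and compute their round-$r$ messages, broadcasting to $\R$ whatever a simulated party would broadcast and privately forwarding whatever it would send to a member of $\R$. The only real subtlety is private messages \emph{between two simulated parties}: these never need to be transmitted, they are just bookkept locally by whoever simulates them. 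I would assign each simulated party to be emulated by (say) $\Sender$, or more robustly have its state be public among $\R$ --- either works for the stated lemma, since the lemma does not ask the SM protocol to tolerate corruptions \emph{among the simulated indices} beyond what VSS already handles.

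The step I expect to be the main obstacle is the \textbf{privacy} argument, and specifically making precise that collapsing the last round of $\Sh$ into $\Rec$ does not leak $m$. In the genuine VSS execution, $\Rec$ reveals a subset of each party's view; here the honest parties in $\R$ reveal those same subsets to each other, and the adversary (with $t$ corrupt parties outside $\R$) may observe any such revealed data that passes over a channel it controls or is addressed to a corrupt party. I would argue that the total information the adversary gleans is a function of the views of the $\le t$ corrupt parties in the underlying VSS run \emph{plus} the public part of $\Rec$; but when $\D=\Sender$ is honest, the VSS privacy property is exactly the statement that this joint distribution is independent of $m$ --- here one must be slightly careful that ``view of $\Adv$ during $\Sh$'' in Definition~\ref{def:VSSWeak} already accounts for everything the adversary sees up to but not including reconstruction, and that $\Rec$ reveals nothing new about an honest dealer's secret beyond $s=m$ itself, which is the intended output anyway. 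I would also note the mild round-counting point that $\Rec$ being one round is what makes the resulting SM protocol a $k$-round (not $k+1$-round) protocol, matching the claim; the corresponding lower-bound consequence ($n>4t$ for $k=2$) then follows by combining this lemma with the known $n>4t$ bound for SM.
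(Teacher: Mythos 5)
Your construction deviates from the paper's in a way that breaks the key property. The paper keeps \emph{all} $n$ real parties as participants of $\Sh$ (with $\Sender$ as $\D$); this is exactly what lets the VSS guarantees apply, since then at most $t$ parties of the emulated VSS execution are corrupt. In your protocol the parties in $\Partyset \setminus \R$ are simulated by members of $\R$ (e.g.\ by $\Sender$), so whoever performs the simulation \emph{controls} those virtual parties. When $\Sender$ is corrupt, the emulated VSS execution then has the dealer plus all $n-|\R|$ simulated parties (plus any corrupt members of $\R$) under adversarial control, which can far exceed $t$; the strong-commitment property of the VSS therefore gives nothing, and the SM agreement requirement for a corrupt sender --- which the lemma's proof must deliver, and which the paper derives precisely from strong commitment --- is no longer implied. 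Your remark that the lemma ``does not ask the SM protocol to tolerate corruptions among the simulated indices beyond what VSS already handles'' is where the gap hides: the issue is not corrupting simulated indices, but that a corrupt $\Sender$ (or corrupt simulators in $\R$) gets to dictate their behaviour. The fallback of making the simulated parties' state ``public among $\R$'' does not rescue this: it requires agreeing on their random tapes and on the messages addressed to them, which either costs extra communication/rounds or reintroduces the same inconsistency problem when the senders of those messages are corrupt (and using the paper's broadcast channel would expose the dealer's messages to parties outside $\R$, hurting privacy).

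There is also a round-counting flaw. You run only rounds $1,\ldots,k-1$ of $\Sh$ and ``instead of a $k$-th round'' have $\R$ exchange the messages of $\Rec$; but $\Rec$'s messages are functions of the parties' views at the \emph{end} of $\Sh$, so either you have truncated the sharing phase (and none of the VSS properties are guaranteed for the truncated protocol) or you must first complete round $k$ of $\Sh$ and then run $\Rec$, giving $k+1$ rounds. The paper avoids this with two ingredients you are missing: first a generic transformation (pads exchanged in round $1$) making every message of rounds $2,\ldots,k$ a broadcast, and then, in round $k$, having every party additionally send its view of the first $k-1$ rounds privately to the members of $\R$. Each member of $\R$ can then locally reassemble all $n$ views (round-$k$ information being public) and apply the reconstruction function of $\Rec$ as a purely local computation, so no extra round is needed; privacy for an honest $\Sender$ holds because the corrupt parties (all outside $\R$ in the relevant case) receive nothing beyond their ordinary $\Sh$-view. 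You would need to incorporate both ingredients, and abandon the simulation of $\Partyset\setminus\R$, to obtain a correct proof of the stated lemma.
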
   
    The proof of Lemma \ref{lemma:VSStoSM} proceeds in two steps. 
    \begin{myitemize}
    \item[--] It is first shown that for   any $R$-round protocol $\Pi$, there 
    exists an $R$-round protocol $\Pi'$ with the same security guarantees as $\Pi$, 
    such that all the messages in rounds $2, \ldots, R$ of $\Pi'$ are broadcast messages. 
    The idea is to let each
    $P_i$ exchange a ``sufficiently-large" random pad with every $P_j$ apriori during the first round. Then in any subsequent round of $\Pi$, if
    $P_i$ is supposed to {\it privately} send $v_{ij}$ to $P_j$, then in $\Pi'$, party $P_i$ instead broadcasts $v_{ij}$ being masked with appropriate pad, exchanged
    with $P_j$. Since $P_j$ is supposed to hold the same pad, it can unmask the broadcasted value and recover $v_{ij}$ and process it as per $\Pi$.
    \item[--] Let $(\Sh, \Rec)$ be a VSS scheme where $\Sh$ requires $k$ rounds such that $k \geq 2$. Using the previous implication, we can assume that all the messages during the
     {\it final}
    round of $\Sh$ are broadcast messages. Using $(\Sh, \Rec)$, one can get a $k$-round SM protocol as follows: the parties invoke an instance of $\Sh$, with 
    $\Sender$ playing the role of $\D$ with input $m$. In the final round, apart from the messages broadcasted by the parties as part of $\Sh$, every party
    $P_i$ {\it privately} sends its entire view of the first $k - 1$ rounds during $\Sh$ to every party in $\R$. Based on this, every party in
    $\R$ will get the view of all the parties in $\Partyset$ for all the $k$ rounds of $\Sh$. 
    Intuitively, this also gives every (honest) party in $\R$ the share of every (honest) 
    party from $\Sh$.
    Every party in $\R$ then applies the reconstruction function on the $n$ extrapolated views as per $\Rec$ and outputs the result. 
    It is easy to see that the {\it correctness} of the VSS implies that if $\Sender$ is {\it honest}, then all the honest parties in $\R$ obtains $m$.
    Moreover, the {\it privacy} of the VSS scheme implies that if $\R$ contains only honest parties, then the view of the adversary remains independent of $m$.
    Finally, the {\it strong commitment} of the VSS guarantees that even if $\Sender$ is {\it corrupt}, all {\it honest} parties in $\R$ obtain the same output.              
    \end{myitemize}
    Next, Genarro et al.~\cite{GIKR01} shows the impossibility of any $2$-round SM protocol with $n \leq 4t$.
    \begin{lemma}
    \label{lemma:SMImpossibility}
    There exists no $2$-round SM protocol where $n \leq 4t$.
    \end{lemma}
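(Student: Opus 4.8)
The plan is to reduce the statement to the minimal instance $n=4,t=1$ with a size‑three receiving set, and there derive a contradiction by pitting a corrupt sender against the agreement (strong commitment) guarantee, using the privacy requirement to glue executions together; this last step is precisely where having only two rounds (rather than three) changes the resilience bound from $3t$ to $4t$.

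\smallskip\noindent\textbf{Reduction to a canonical small instance.} It suffices to rule out one valid SM instance, so I would take $|\R|=3$, say $\R=\{P_1,P_2,P_3\}$ with $\Sender=P_1$. Assuming towards a contradiction a $2$-round protocol for some $n\le 4t$, partition $\Partyset$ into four blocks $\mathcal B_1,\dots,\mathcal B_4$, each of size at most $t$, with $P_1\in\mathcal B_1$, $P_2\in\mathcal B_2$, $P_3\in\mathcal B_3$ (possible since $n-3\le 4t-3$). Collapsing the protocol so that ``super-party $i$'' internally simulates all of $\mathcal B_i$ yields a $2$-round SM protocol for $n=4,t=1$ with $\R=\{1,2,3\}$ and $\Sender=1$: corrupting one super-party corresponds to corrupting a whole block, i.e.\ at most $t$ original parties, and when the sender's block is corrupt this is exactly the corrupt-sender case; correctness, privacy, and strong commitment of the original protocol transfer verbatim. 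Finally, applying the broadcast transformation from the first bullet of the proof of Lemma~\ref{lemma:VSStoSM} (masked rebroadcasting after a first-round pad exchange), I may assume that in round $2$ every party only broadcasts. So each party's round-$1$ behaviour — a private message to each other party plus a broadcast — is a function of its randomness, with $P_1$ additionally using $m$, and each of $P_1,P_2,P_3$ outputs a function of its round-$1$ view together with the four round-$2$ broadcasts.

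\smallskip\noindent\textbf{Structural consequences of privacy.} When all of $\R$ is honest the corrupt party is $P_4$, whose view consists of its randomness, the three round-$1$ messages sent to it, and all eight broadcasts; the privacy requirement forces this view to be distributed independently of $m$. Hence in an honest execution the message $P_1\!\to\!P_4$ and every broadcast carry no information about $m$, so an honest execution $A$ with input $m$ and an honest execution $B$ with input $m'$ can be coupled so as to agree on all of these quantities (fix $P_4$'s randomness and use that the conditional distribution of the remainder of $P_4$'s view is the same for $m$ and $m'$). By correctness, $P_2$ and $P_3$ output $m$ in $A$ and $m'$ in $B$. Thus the only channels through which $P_2$ and $P_3$ can possibly learn the sender's value are the private round-$1$ messages $P_1\!\to\!P_2$ and $P_1\!\to\!P_3$.

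\smallskip\noindent\textbf{The conflicting executions and the obstacle.} Now consider a corrupt sender $P_1$ (so $P_2,P_3,P_4$ are honest and strong commitment demands $P_2$ and $P_3$ output the same value) that behaves toward $P_2$ as in $A$, toward $P_3$ as in $B$, and broadcasts / sends $P_4$ the common values of $A$ and $B$; the goal is to conclude that $P_2$ still outputs $m$ while $P_3$ still outputs $m'$. The main obstacle is the round-$1$ cross-talk $P_2\!\leftrightarrow\!P_3$: making $P_2$'s view coincide with its $A$-view forces $P_3$'s randomness to its $A$-value, which prevents $P_3$ from simultaneously reproducing its $B$-view, and there is no third round in which this inconsistency could be reconciled, so a direct two-execution splice fails. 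The resolution is to interpolate a cyclic chain of executions in which one of $P_1,P_2,P_3,P_4$'s randomness (or the sender's split between $P_2$ and $P_3$) is changed at a time, arranging that each move is invisible to whichever of $P_2,P_3$ is not being touched, so that by correctness/strong commitment that receiver's output is preserved across the step; walking the chain from the honest-$m$ endpoint to the honest-$m'$ endpoint then forces $m=m'$, the desired contradiction. Verifying that such a chain exists with every link justified either by correctness (at the honest-sender endpoints) or by the privacy-based coupling (at the corrupt-sender links), and that $P_2$ and $P_3$ are never perturbed at the same step, is the technical heart of the argument — and exactly the point where two-roundness forces $n>4t$ rather than $n>3t$.
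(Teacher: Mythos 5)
Your player-partitioning reduction to the canonical instance $n=4$, $t=1$ (with $|\R|=3$, each of $P_1,P_2,P_3$ placed in a distinct block) is exactly the route the paper takes, and that part is fine; the paper then simply cites \cite{GIKR01} for the non-existence of a $2$-round SM protocol with $n=4$, $t=1$. The problem is that this base case \emph{is} the entire content of the lemma, and your proposal does not actually prove it. After the privacy-based observations, you correctly note that the naive splice of an honest-$m$ execution (toward $P_2$) with an honest-$m'$ execution (toward $P_3$) by a corrupt $P_1$ breaks down because of the round-$1$ exchange between $P_2$ and $P_3$ (and, relatedly, because $P_2$'s and $P_3$'s round-$2$ broadcasts, which sit in $P_4$'s view, themselves depend on which private round-$1$ message they got from $P_1$, so ``fixing $P_4$'s view'' is not a single local constraint). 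Your proposed fix is to ``interpolate a cyclic chain of executions'' changing one party's randomness or the sender's split at a time, but you never exhibit this chain, never specify its links, and never verify that each link leaves one of $P_2,P_3$'s views untouched while being justified by correctness or by the privacy coupling; you yourself flag this verification as ``the technical heart of the argument.'' Declaring the heart of the impossibility to exist without constructing it is a genuine gap: as stated, nothing in the proposal rules out that every such chain necessarily perturbs $P_2$ and $P_3$ simultaneously at some step, which is precisely the obstruction you identified for the two-execution splice.

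So the verdict is: reduction step correct and identical in spirit to the paper; core impossibility for $(n,t)=(4,1)$ missing. To close the gap you would either have to reproduce the scenario/hybrid argument of \cite{GIKR01} in full (carefully defining each intermediate execution, which party is corrupt in it, and which of $P_2,P_3$'s views is preserved across each adjacent pair), or explicitly invoke that result as the paper does rather than gesturing at an unconstructed chain.
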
   
    By a standard {\it player-partitioning} argument \cite{Lyn96}, Lemma \ref{lemma:SMImpossibility} reduces to showing the impossibility of any $2$-round SM protocol with
    $n = 4$ and $t = 1$. At a high-level, the player-partitioning argument goes as follows: if there exists a $2$-round SM protocol $\Pi$ with $n = 4t$, then one can also design a $2$-round SM protocol $\Pi'$ for
    $4$ parties, where each of the $4$ parties in $\Pi'$ plays the role of a disjoint set of $t$ parties as per $\Pi$. However, one can show that there does not exist any $2$-round SM protocol with $n = 4$ and $t = 1$, thus ruling out the existence of any $2$-round SM protocol with $n \leq 4t$. We refer the interested readers to \cite{GIKR01} for the proof of {\it non-existence} of any $2$-round SM protocol with
    $n = 4$ and $t = 1$.
    Now combining Lemma \ref{lemma:VSStoSM} with
     Lemma \ref{lemma:SMImpossibility}, we get that there exists no VSS scheme with $n \leq 4t$ and  a $2$-round sharing phase. Since $n > 3t$ is necessary for {\it any} VSS scheme, 
     this automatically implies that a VSS scheme with $3$ or more rounds of sharing phase will necessarily require $n > 3t$.
 \paragraph{\bf On the Usage of Broadcast Channel}
 All {\it synchronous} VSS schemes assume the existence of  a broadcast channel. However, this is just a
   {\it simplifying 
  abstraction}, as the parties can ``emulate" the effect of a broadcast channel by executing a 
  perfectly-secure {\it reliable-broadcast} (RB) protocol 
   over the pair-wise channels, provided $n > 3t$ holds \cite {PSL80}.
    RB protocols with {\it guaranteed termination} in the presence of malicious/Byzantine adversaries
   require $\Omega(t)$ rounds of communication \cite{FL82}, while the RB protocols with {\it probabilistic termination} guarantees require 
   $\Order(1)$ expected number of 
  rounds \cite{FM97,KK06} where the constants are high. 
    Given the fact that the usage of broadcast channel is an ``expensive resource", a
  natural question is whether one can design a VSS scheme with a {\it constant} number of rounds in the sharing phase and which {\it does not} require the usage
  of broadcast channel in any of these rounds. Unfortunately, the answer is no. This is because such a VSS scheme will imply the existence of a 
   strict constant round
   RB protocol with {\it guaranteed termination} in the presence of a {\it malicious} adversary
   (the message to be broadcast by the sender can be shared using the VSS scheme with sender playing the role of the dealer,
    followed by reconstructing the shared message),
   which is {\it impossible} as per the result of\footnote{\label{fn:VSStoBA}This reduction from RB to VSS is another way to argue the necessity of the $n > 3t$ condition for VSS, since the necessary condition
   for RB is $n > 3t$ \cite{PSL80}.} \cite{FL82}. 
   Hence the best that one can hope for is to design VSS schemes which invoke the broadcast channel only in a fewer rounds.

\section{Upper Bounds}
\label{sec:SThreshold}
We now discuss the optimality of the bounds in Theorem \ref{SThresholdVSS:RoundComplexity} by presenting
 VSS schemes with various round-complexities. The sharing phase of these schemes are summarized in Table \ref{tab:VSSSummary} and their  
 reconstruction phase require one round. The prefix in the
  names of the schemes denotes the number of rounds 
  in the sharing phase.  In the table, $\Gr$ denotes a finite group and RSS stands for replicated secret-sharing \cite{ISN87} (see Section \ref{sec:GIKRII}).  
  The $\AKP$ scheme has some special properties, compared to $\FGGRS, \KKK$ schemes, which are useful for designing round-optimal 
  perfectly-secure MPC protocols (see Section \ref{sec:AKP}).

   \begin{table}[H]
    \centering
    \resizebox{\textwidth}{!}{  
    \begin{tabular}{|c|c|c|c|c|c|c|}
        \hline
 Scheme & $n$ & Round Complexity & Type & Sharing Semantic & Algebraic Structure & Communication Complexity
  \\ \hline
  $\BGW$ \cite{BGW88} & $n > 3t$ & $(7, 5)$ & Type-II & Shamir &  $\F$ & $\Order(n^2 \log{|\F|} + \BC(n^2 \log{|\F|}))$ \\ \hline
  $\ModBGW$ \cite{GIKR01} & $n > 3t$ & $(5, 3)$ & Type-II & Shamir & $\F$ & $\Order(n^2 \log{|\F|} + \BC(n^2 \log{|\F|}))$ \\ \hline
   $\GIKRI$ \cite{GIKR01} & $n > 3t$ & $(4, 3)$ & Type-II & Shamir & $\F$ &  $\Order(n^2 \log{|\F|} + \BC(n^2 \log{|\F|}))$ \\ \hline
  $\GIKRII$  \cite{GIKR01} & $n > 3t$ & $(3, 2)$ & Type-II & RSS & $\Gr$ & $\Order(n \cdot {n \choose t} \log{|\Gr|} + \BC(n \cdot {n \choose t} \log{|\Gr|}))$ \\ \hline
$\FGGRS$  \cite{FGGRS06} & $n > 3t$ & $(3, 2)$ & Type-I & Not Applicable & $\F$ & $\Order(n^3 \log{|\F|} + \BC(n^3 \log{|\F|}))$ \\ \hline
  $\KKK$  \cite{KKK09} & $n > 3t$ & $(3, 1)$ & Type-II & Shamir & $\F$ & $\Order(n^3 \log{|\F|} + \BC(n^3 \log{|\F|}))$ \\ \hline
$\AKP$  \cite{AKP20} & $n > 3t$ & $(3, 2)$ & Type-II & Shamir & $\F$ & $\Order(n^3 \log{|\F|} + \BC(n^3 \log{|\F|}))$ \\ \hline
  $\GIKRIII$  \cite{GIKR01} & $n > 4t$ & $(2, 1)$ & Type-II & Shamir & $\F$ & $\Order(n^2 \log{|\F|} + \BC(n^2 \log{|\F|}))$\\ \hline
   $\OneVSS$  \cite{GIKR01} & $n= 5, t = 1$ & $(1, 0)$ & Type-I & Not Applicable  & $\F$ & $\Order(n \log{|\F|})$ \\ \hline
    \end{tabular}
    }
      \caption{\label{tab:VSSSummary}Summary of the sharing phase of the perfectly-secure VSS schemes, with $\BC$ denoting communication over the broadcast channel.}
\end{table}
\vspace*{-0.6cm}
 \subsection{VSS Schemes with $n > 3t$}
 We start with perfectly-secure VSS schemes with $n > 3t$. 
    While presenting these schemes, we use the following simplifying 
    conventions. If in a protocol a party is expecting some message from a sender party and if it either receives no message
     or
    semantically/syntactically incorrect message, then the receiving party substitutes some default value 
    and proceeds with the steps of the protocol. Similarly, if the dealer is {\it publicly} identified to be cheating then the parties discard the dealer
    and terminate the protocol execution with a default sharing of $0$.
    \subsubsection{\bf The $\BGW$ Scheme}
    The scheme (Fig \ref{fig:BGW}) consists of the protocols $\BGW\mbox{-}\Sh$ and $\BGW\mbox{-}\Rec$.
    Protocol $\BGW\mbox{-}\Sh$ is actually a simplified version of the original protocol, taken from \cite{Hir01}.
    To share $s$, the dealer $\D$ picks a random degree-$t$ {\it Shamir-sharing polynomial} $q(\cdot)$ and the goal is to ensure that
    $\D$ {\it verifiably} distributes the Shamir-shares of $s$ as per $q(\cdot)$. Later, during $\BGW\mbox{-}\Rec$,
    the parties exchange these Shamir-shares and reconstruct $q(\cdot)$ by error-correcting up to $t$ incorrect shares using the algorithm $\RSDec$.
    The verifiability in $\BGW\mbox{-}\Sh$ ensures that even if $\D$ is {\it corrupt}, the shares distributed by $\D$ to the (honest) parties  are as per some
    degree-$t$ Shamir-sharing polynomial, say $q^{\star}(\cdot)$, thus ensuring that $s^{\star} \defined q^{\star}(0)$ is $t$-shared. Moreover, the same $s^{\star}$ is
    reconstructed during $\BGW\mbox{-}\Rec$. This ensures that the scheme is of type-II.
    
    To prove that $\D$ is sharing its secret using a degree-$t$ polynomial $q(\cdot)$, the dealer $\D$ embeds $q(\cdot)$
    in a random degree-$(t, t)$ bivariate polynomial $F(x, y)$. As shown in Fig \ref{fig:Bivariate}, there are two approaches to do this embedding.
    The polynomial $q(\cdot)$ could be either embedded at $x = 0$ (the approach shown in part (a)) or it could be embedded at $y = 0$ 
    (the approach shown in part (b)). For our description, we follow the first approach. The dealer then 
    distributes distinct row and column-polynomials to respective parties. If $\D$ is {\it honest}, then
    this distribution of information maintains the privacy of dealer's secret (follows from Lemma \ref{lemma:bivariateprivacy}). 
    Also, if $\D$ is {\it honest}, then constant term of the individual row-polynomials 
    actually constitute the Shamir-shares of $s$, as they constitute distinct points on $q(\cdot)$.
     However, a potentially {\it corrupt}
    $\D$ may distribute polynomials, which {\it may not} be derived from a single degree-$(t, t)$ bivariate polynomial. Hence, the parties interact to
    verify if $\D$ has distributed ``consistent" row and column-polynomials, {\it without} revealing any additional information about $s$.
    \begin{figure}[!h]
  \centering
  \hspace*{-0.5cm}
  \begin{tikzpicture}[scale=0.85]
    \draw (0, 0) rectangle (7, 5);
    \draw (0, 2) -- (7, 2);
    \draw (0, 3) -- (7, 3);
    \draw (2.75, 0) -- (2.75, 5);
    \draw (4.25, 0) -- (4.25, 5);
    \draw [fill=yellow] (2.75, 2) rectangle (4.25, 3);
    \draw [rounded corners, fill=ForestGreen!30] (8, 0) rectangle (9, 5);
    \draw [decorate,decoration={brace,amplitude=10pt,mirror,raise=4pt},yshift=0pt]
    (9, 0) -- (9, 5) node [black,midway,xshift=0.8cm] {\footnotesize $[s]_t$};
    \draw (3.5, -4) node {\footnotesize (a)};
    \draw (-1, 2.5) node {\footnotesize $f_j(x)$};
    \draw (-0.3, 2.5) node {\footnotesize $\Rightarrow$};
    \draw (3.5, 6) node {\footnotesize $g_i(y)$};
    \draw (3.5, 5.5) node {\footnotesize $\Downarrow$};
    \draw (8.5, -0.5) node {\footnotesize $\Downarrow$};
    \draw (8.5, -1) node {\footnotesize $q(\cdot)$};
    \draw (1, 4.5) node {\scriptsize $F(\alpha_1, \alpha_1)$};
    \draw (2.25, 4.5) node {\scriptsize $\ldots$};
    \draw (3.5, 4.5) node {\scriptsize $F(\alpha_i, \alpha_1)$};
    \draw (4.75, 4.5) node {\scriptsize $\ldots$};
    \draw (6, 4.5) node {\scriptsize $F(\alpha_n, \alpha_1)$};
    \draw (1, 3.5) node {\scriptsize $\rvdots$};
    \draw (2.25, 3.5) node {\scriptsize $\rvdots$};
    \draw (3.5, 3.5) node {\scriptsize $\rvdots$};
    \draw (4.75, 3.5) node {\scriptsize $\rvdots$};
    \draw (6, 3.5) node {\scriptsize $\rvdots$};
    \draw (1, 2.5) node {\scriptsize $F(\alpha_1, \alpha_j)$};
    \draw (2.25, 2.5) node {\scriptsize $\ldots$};
    \draw (3.5, 2.5) node {\scriptsize $F(\alpha_i, \alpha_j)$};
    \draw (4.75, 2.5) node {\scriptsize $\ldots$};
    \draw (6, 2.5) node {\scriptsize $F(\alpha_n, \alpha_j)$};
    \draw (1, 1.5) node {\scriptsize $\rvdots$};
    \draw (2.25, 1.5) node {\scriptsize $\rvdots$};
    \draw (3.5, 1.5) node {\scriptsize $\rvdots$};
    \draw (4.75, 1.5) node {\scriptsize $\rvdots$};
    \draw (6, 1.5) node {\scriptsize $\rvdots$};
    \draw (1, 0.5) node {\scriptsize $F(\alpha_1, \alpha_n)$};
    \draw (2.25, 0.5) node {\scriptsize $\ldots$};
    \draw (3.5, 0.5) node {\scriptsize $F(\alpha_i, \alpha_n)$};
    \draw (4.75, 0.5) node {\scriptsize $\ldots$};
    \draw (6, 0.5) node {\scriptsize $F(\alpha_n, \alpha_n)$};
    \draw (7.5, 4.5) node {\scriptsize $\Rightarrow$};
    \draw (7.5, 3.5) node {\scriptsize $\rvdots$};
    \draw (7.5, 2.5) node {\scriptsize $\Rightarrow$};
    \draw (7.5, 1.5) node {\scriptsize $\rvdots$};
    \draw (7.5, 0.5) node {\scriptsize $\Rightarrow$};
    \draw (8.5, 4.5) node {\scriptsize $f_1(0)$};
    \draw (8.5, 3.5) node {\scriptsize $\rvdots$};
    \draw (8.5, 2.5) node {\scriptsize $f_j(0)$};
    \draw (8.5, 1.5) node {\scriptsize $\rvdots$};
    \draw (8.5, 0.5) node {\scriptsize $f_n(0)$};
    \draw (11, 0) rectangle (18, 5);
    \draw (11, 2) -- (18, 2);
    \draw (11, 3) -- (18, 3);
    \draw (13.75, 0) -- (13.75, 5);
    \draw (15.25, 0) -- (15.25, 5);
    \draw [fill=yellow] (13.75, 2) rectangle (15.25, 3);
    \draw [rounded corners, fill=ForestGreen!30] (11, -2) rectangle (18, -1);
    \draw [decorate,decoration={brace,amplitude=10pt,mirror,raise=4pt},yshift=0pt]
    (11, -2) -- (18, -2) node [black,midway,yshift=-0.8cm] {\footnotesize $[s]_t$};
    \draw (14.5, -4) node {\footnotesize (b)};
    \draw (19, 2.5) node {\footnotesize $f_j(x)$};
    \draw (18.3, 2.5) node {\footnotesize $\Leftarrow$};
    \draw (14.5, 6) node {\footnotesize $g_i(y)$};
    \draw (14.5, 5.5) node {\footnotesize $\Downarrow$};
    \draw (18.3, -1.5) node {\footnotesize $\Leftarrow$};
    \draw (19, -1.5) node {\footnotesize $q(\cdot)$};
    \draw (12, 4.5) node {\scriptsize $F(\alpha_1, \alpha_1)$};
    \draw (13.25, 4.5) node {\scriptsize $\ldots$};
    \draw (14.5, 4.5) node {\scriptsize $F(\alpha_i, \alpha_1)$};
    \draw (15.75, 4.5) node {\scriptsize $\ldots$};
    \draw (17, 4.5) node {\scriptsize $F(\alpha_n, \alpha_1)$};
    \draw (12, 3.5) node {\scriptsize $\rvdots$};
    \draw (13.25, 3.5) node {\scriptsize $\rvdots$};
    \draw (14.5, 3.5) node {\scriptsize $\rvdots$};
    \draw (15.75, 3.5) node {\scriptsize $\rvdots$};
    \draw (17, 3.5) node {\scriptsize $\rvdots$};
    \draw (12, 2.5) node {\scriptsize $F(\alpha_1, \alpha_j)$};
    \draw (13.25, 2.5) node {\scriptsize $\ldots$};
    \draw (14.5, 2.5) node {\scriptsize $F(\alpha_i, \alpha_j)$};
    \draw (15.75, 2.5) node {\scriptsize $\ldots$};
    \draw (17, 2.5) node {\scriptsize $F(\alpha_n, \alpha_j)$};
    \draw (12, 1.5) node {\scriptsize $\rvdots$};
    \draw (13.25, 1.5) node {\scriptsize $\rvdots$};
    \draw (14.5, 1.5) node {\scriptsize $\rvdots$};
    \draw (15.75, 1.5) node {\scriptsize $\rvdots$};
    \draw (17, 1.5) node {\scriptsize $\rvdots$};
    \draw (12, 0.5) node {\scriptsize $F(\alpha_1, \alpha_n)$};
    \draw (13.25, 0.5) node {\scriptsize $\ldots$};
    \draw (14.5, 0.5) node {\scriptsize $F(\alpha_i, \alpha_n)$};
    \draw (15.75, 0.5) node {\scriptsize $\ldots$};
    \draw (17, 0.5) node {\scriptsize $F(\alpha_n, \alpha_n)$};
    \draw (12, -0.5) node {\scriptsize $\Downarrow$};
    \draw (13.25, -0.5) node {\scriptsize $\Downarrow$};
    \draw (14.5, -0.5) node {\scriptsize $\Downarrow$};
    \draw (15.75, -0.5) node {\scriptsize $\Downarrow$};
    \draw (17, -0.5) node {\scriptsize $\Downarrow$};
    \draw (12, -1.5) node {\scriptsize $g_1(0)$};
    \draw (13.25, -1.5) node {\scriptsize $\ldots$};
    \draw (14.5, -1.5) node {\scriptsize $g_i(0)$};
    \draw (15.75, -1.5) node {\scriptsize $\ldots$};
    \draw (17, -1.5) node {\scriptsize $g_n(0)$};
  \end{tikzpicture}
  \caption{Pictorial depiction of the values on the degree-$(t, t)$
    polynomial $F(x, y)$ distributed by $\D$ and how they constitute $[s]_t$.
    The value highlighted in the yellow color denotes a common value held by every pair of parties
    $(P_i, P_j)$. In the first approach, the Shamir-sharing polynomial $q(\cdot)$ is set as $F(0, y)$ and
    the Shamir-shares are the constant terms of the individual row-polynomials. In the
    second approach, $q(\cdot)$ is set as $F(x, 0)$ and the
    Shamir-shares are the constant terms of the individual column-polynomials.}
\label{fig:Bivariate}
\end{figure}
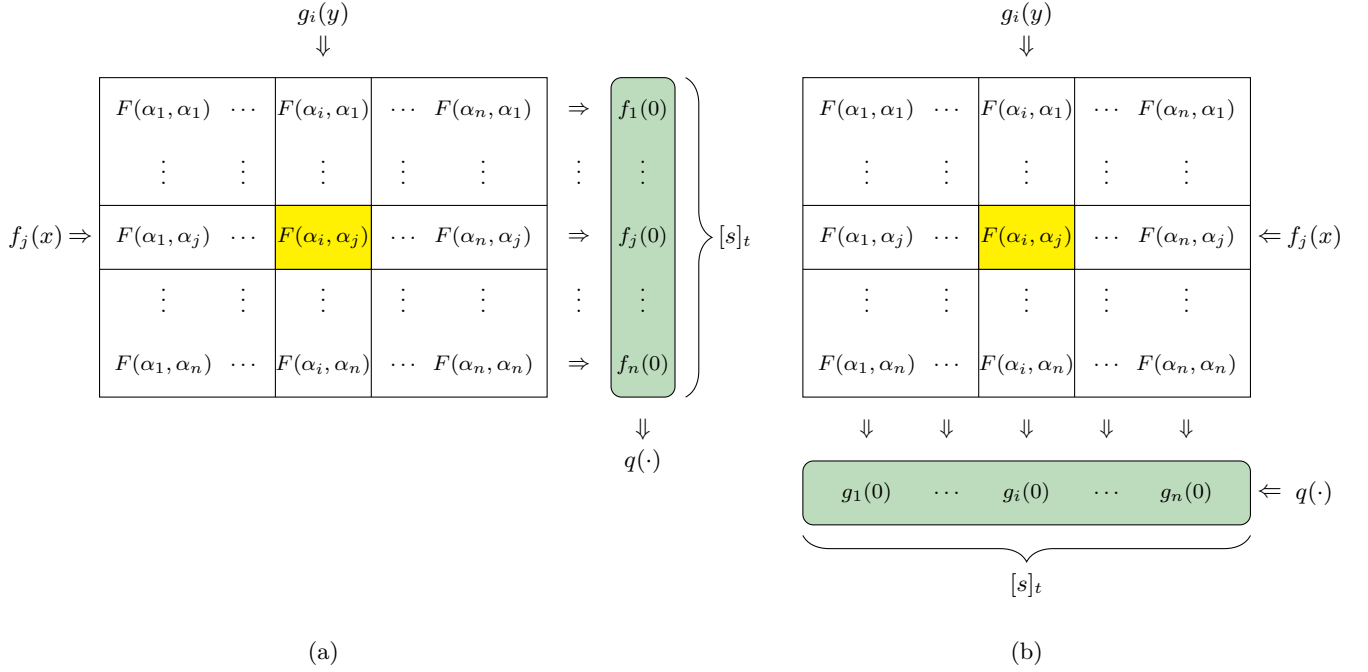
    
    Every pair of parties $P_i, P_j$ upon receiving the polynomials $f_i(x), g_i(y)$ and $f_j(x), g_j(y)$ respectively, interact and check if 
     $f_i(\alpha_j) = g_j(\alpha_i)$ and $f_j(\alpha_i) = g_i(\alpha_j)$ holds.  
     If the checks pass for all the pairs of (honest) parties, then 
    from Lemma \ref{lemma:bivariate}, it follows that $\D$ has distributed consistent row (and column-polynomials) to the parties. However, if the checks do not pass,
     then either $\D$ has distributed inconsistent polynomials or the parties have not exchanged the correct common values. In this case,
    the parties interact publicly with $\D$ to resolve these inconsistencies. The details follow.
    
    Every $P_i$ upon receiving the supposedly common values on its column polynomial prepares a {\it complaint-list}
    $L_i$, which includes all the parties $P_j$ whose received value is inconsistent with $P_i$'s column-polynomial (this is interpreted as if there is a dispute
    between $P_i$ and $P_j$). If there is a dispute between $P_i$ and $P_j$, then at least one of the three parties $\D, P_i, P_j$ is {\it corrupt}.
    Each party then broadcasts its complaint-list. 
     In response, for every dispute reported by a party, the dealer 
        $\D$ makes public its version of the disputed value, namely the corresponding value on its bivariate polynomial. 
        This is followed by the first-stage accusations against the dealer. Namely, a party publicly ``accuses" $\D$, if  the party is in dispute with more than $t$ parties or if it finds
        $\D$ making public a value, which is not consistent with its column-polynomial. In response, $\D$ makes public the row and column-polynomials of such accusing parties. However, care has to taken to ensure that these broadcasted polynomials are consistent with the polynomials of the parties who have not yet accused $\D$.
        This is done through the second-stage of public accusations against the dealer, where a party (who has not yet accused $\D$) publicly accuses $\D$, if it finds any inconsistency
        between the row and column-polynomials held by the party and the polynomials which were made public by $\D$.

        An {\it honest} $\D$ always respond correctly against any accusation or dispute. Moreover, there will be at most
        $t$ accusations against $\D$. Consequently, if the 
        parties find $\D$ not responding to any accusation/dispute or if more than $t$ parties accuse $\D$, then 
        $\D$ is {\it corrupt} and hence the parties discard $\D$.
        If $\D$ is {\it honest}, then all the values which are made public correspond to corrupt parties (already known to $\Adv$)
        and hence {\it does not} violate {\it privacy}.         
        On the other hand, if $\D$ is {\it corrupt} but not discarded, then it ensures that the polynomials of
        all honest parties are consistent.    
 \begin{schemesplitbox}{$\BGW$}{The perfectly-secure VSS scheme of Ben-Or, Goldwasser and Wigderson \cite{BGW88}.}{fig:BGW}
	\justify
\centerline{\algoHead{Sharing Phase: Protocol $\BGW\mbox{-}\Sh$}}
\begin{myitemize}
\item {\bf Round I (sending polynomials) --- the dealer does the following}:
  \begin{myitemize}
    \item[--] On having the input $s \in \F$, pick a random degree-$t$ {\it Shamir-sharing polynomial} $q(\cdot)$, such that $q(0) = s$ holds.
    Then pick a random degree-$(t, t)$ bivariate polynomial $F(x, y)$, such that $F(0, y) = q(\cdot)$ holds.
     \item[--] For $i = 1, \ldots, n$, send the polynomials $f_i(x) \defined F(x, \alpha_i)$ and $g_i(y) \defined F(\alpha_i, y)$ to party $P_i$. 
  \end{myitemize}
\item {\bf Round II (pair-wise consistency checks) --- each party $P_i$ does the following}: 
   \begin{myitemize}
      \item[--] On receiving degree-$t$ polynomials $f_i(x), g_i(y)$ from $\D$, send 
      $f_{ij} \defined f_i(\alpha_j)$ to $P_j$, for  $j = 1, \ldots, n$.
   \end{myitemize}
\item {\bf Round III (broadcast complaints) --- each party $P_i$ does the following}: 
    \begin{myitemize}
    \item[--] Initialize a {\it complaint-list} $L_i$ to $\emptyset$. For $j = 1, \ldots, n$, include $P_j$ to $L_i$, if $f_{ji} \neq g_i(\alpha_j)$.
     Broadcast $L_i$.
    \end{myitemize} 
\item {\bf Round IV (resolving complaints) --- the dealer does the following}:
     \begin{myitemize}
     \item[--] For $i = 1, \ldots, n$, if $P_i$ has broadcast $L_i \neq \emptyset$, then for every $P_j \in L_i$, broadcast the 
     value $F(\alpha_i, \alpha_j)$.     
     \end{myitemize}  
\item {\bf Round V (first-stage accusations) --- each party $P_i$ does the following}: 
      \begin{myitemize}
       \item[--] Broadcast the message $(P_i, \Accuse, \D)$, if any of the following conditions hold.
          \begin{mydescription}
          \item $|L_i| > t$ or if $P_i \in L_i$;
          \item If $\exists k \in \{1, \ldots, n \}$, such that $P_i \in L_k$ and $F(\alpha_k, \alpha_i) \neq f_i(\alpha_k)$;           
          \item If for any $P_j \in L_i$, the condition $F(\alpha_i, \alpha_j) \neq g_i(\alpha_j)$ holds.
          \end{mydescription}
        \end{myitemize}       
\item {\bf Round VI (resolving first-stage accusations) --- the dealer does the following}:
      \begin{myitemize}
      \item[--] For every $P_i$ who has broadcast $(P_i, \Accuse, \D)$, broadcast the degree-$t$ polynomials
             $f_i(x)$ and $g_i(y)$.
      \end{myitemize}       
\item {\bf Round VII (second-stage accusations) --- each party $P_i$ does the following}:  
     \begin{myitemize}
     \item[--]  Broadcast the message $(P_i, \Accuse, \D)$ if there exists any $j \in \{1, \ldots, n \}$ such that
     $\D$ has broadcast degree-$t$ polynomials $f_j(x), g_j(y)$ and either $f_j(\alpha_i) \neq g_i(\alpha_j)$ or $g_j(\alpha_i) \neq f_i(\alpha_j)$ holds.      
     \end{myitemize}
\item {\bf Output decision --- each party $P_i$ does the following}:
    \begin{myitemize}
    \item[--] If more than $t$ parties $P_j$  broadcast $(P_j, \Accuse, \D)$ throughout the protocol, then discard $\D$.
    \item[--] Else output the {\it share}\footnote{If $\D$ has broadcast new polynomials $f_i(x), g_i(y)$ for $P_i$ during Round VI, then consider
     these new polynomials.} $s_i = f_i(0)$. \\[.1cm]
    \end{myitemize}
\end{myitemize}
\justify
\centerline{\algoHead{Reconstruction Phase: Protocol $\BGW\mbox{-}\Rec$}}
Each party $P_i$ sends the share $s_i$ to every party $P_j \in \Partyset$.
  Let $W_i$ be the set of shares received by $P_i$ from the parties. Party $P_i$ executes
     $\RSDec(t, t, W_i)$ to reconstruct $s$.
\end{schemesplitbox}  
Since the idea of bivariate polynomials has been used in all the followup works on perfectly-secure VSS, we give a very
 high level overview of the proof of  the properties of $\BGW$ scheme.
 \begin{theorem}
 \label{thm:BGW}
 Protocols $(\BGW\mbox{-}\Sh, \BGW\mbox{-}\Rec)$ constitute a Type-II perfectly-secure VSS scheme with respect to
  Shamir's $t\mbox{-out-of-}n$ secret-sharing scheme. The protocol incurs a communication of
  $\Order(n^2 \log{|\F|})$ bits over the point-to-point channels and broadcast of $\Order(n^2 \log{|\F|})$ bits.
 \end{theorem}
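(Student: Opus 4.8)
The plan is to establish the three Type-II VSS properties of Definition~\ref{def:VSSStrong} with $\Pi$ instantiated by Shamir's scheme $(\Shamir_{\Gen},\Shamir_{\Recover})$, and then to read off the communication from the protocol description; throughout I write $\Bad$ for the set of corrupt parties, so $|\Bad|\le t$ and there are at least $n-t\ge 2t+1$ honest parties. \textbf{Privacy.} Suppose $\D$ is honest. The key preliminary fact I would prove is that an honest $\D$ answers every dispute and accusation truthfully, so \emph{no honest party ever accuses $\D$}: one checks, condition by condition, that each accusation trigger in Rounds~V and~VII can fire at an honest $P_i$ only because of a mismatch between quantities that, for an honest $\D$, are all equal to the corresponding entries of the single polynomial $F(x,y)$. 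Consequently at most $t$ accusations (all from corrupt parties) are broadcast, $\D$ is never discarded, and every value that becomes public --- the disputed entries broadcast in Round~IV and the polynomials re-broadcast in Round~VI --- carries the index of a corrupt party and is therefore something $\Adv$ can already compute from $\{f_i(x),g_i(y)\}_{P_i\in\Bad}$ together with its own messages. Hence $\view$ of $\Adv$ during $\Sh$ is a randomized function of $\{f_i(x),g_i(y)\}_{P_i\in\Bad}$ alone, and since $F(x,y)$ is a uniformly random degree-$(t,t)$ bivariate polynomial subject only to $F(0,0)=s$, Lemma~\ref{lemma:Shamir} (applied to the constant terms $f_i(0)=q(\alpha_i)$ the corrupt parties read off) and Lemma~\ref{lemma:bivariateprivacy} give that this distribution is the same for input $s$ as for any $s'\ne s$.

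\textbf{Correctness.} With $\D$ still honest, all row and column polynomials distributed are restrictions of a single $F(x,y)$, so every pair of honest parties passes its pairwise check, every honest complaint-list contains only corrupt parties, $\D$ is not discarded, and $\D$ re-broadcasts nothing for honest parties. Thus each honest $P_i$ outputs $s_i=f_i(0)=F(0,\alpha_i)=q(\alpha_i)$, i.e. the honest parties hold the Shamir-shares of $s$ along $q(\cdot)$, so $s$ is secret-shared as per $\Pi$ (Definition~\ref{def:SSShared}). In $\Rec$, each $P_i$ forms the set $W_i$ of $n$ received shares, of which the $\ge 2t+1$ sent by honest parties are the genuine shares of $q(\cdot)$; since $n\ge 3t+1=t+2t+1$, running $\RSDec(t,t,W_i)$ corrects the $\le t$ faulty shares and recovers $q(\cdot)$, so every honest party outputs $q(0)=s$.

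\textbf{Strong commitment.} Now let $\D$ be arbitrary and condition on $\D$ not being discarded, so at most $t$ accusations were ever broadcast. Let $\Happy$ be the set of honest parties that accused in neither Round~V nor Round~VII; then $|\Happy|\ge (n-t)-t\ge t+1$. The crux is to show $\{f_i(x),g_i(y)\}_{P_i\in\Happy}$ is pairwise consistent: for $P_i,P_j\in\Happy$, if neither complained about the other then $f_i(\alpha_j)=g_j(\alpha_i)$ directly (honest parties send correct Round~II messages), while if, say, $P_j\in L_i$, then $\D$ committed publicly in Round~IV to some value $v$ for that pair, and $P_i$ not accusing in Round~V forces $v=g_i(\alpha_j)$ (its third condition) whereas $P_j$ not accusing forces $v=f_j(\alpha_i)$ (its second condition with $k=i$), so again $f_j(\alpha_i)=g_i(\alpha_j)$; the other sub-cases are analogous. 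By Lemma~\ref{lemma:bivariate} these $\ge t+1$ pairwise-consistent degree-$t$ polynomials determine a unique degree-$(t,t)$ bivariate polynomial $\starF(x,y)$ with $\starF(x,\alpha_i)=f_i(x)$ and $\starF(\alpha_i,y)=g_i(y)$ for all $P_i\in\Happy$; I would set $\starq(y)\defined \starF(0,y)$ and $\stars\defined \starq(0)$, a value fixed by the joint view of the honest parties. Finally I would verify that \emph{every} honest $P_i$ outputs $\starq(\alpha_i)$: for $P_i\in\Happy$ this is $f_i(0)=\starF(0,\alpha_i)$; for an honest $P_i$ that accused in Round~V, $P_i$ uses the degree-$t$ polynomials $\D$ re-broadcast for it in Round~VI (otherwise $\D$ would be discarded), and since no $P_j\in\Happy$ accused in Round~VII we get $f_i(\alpha_j)=g_j(\alpha_i)=\starF(\alpha_j,\alpha_i)$ for the $\ge t+1$ points $\{\alpha_j:P_j\in\Happy\}$, hence $f_i(x)=\starF(x,\alpha_i)$ and $f_i(0)=\starq(\alpha_i)$ (an honest party that accuses only in Round~VII is handled the same way via Round~IV and the second Round~V condition). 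Thus $\stars$ is secret-shared as per $\Pi$, and the Reed--Solomon argument from the Correctness part shows all honest parties output $\stars$ from $\Rec$.

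\textbf{Communication, and the main obstacle.} For the complexity, Round~I costs $\Order(n^2\log|\F|)$ bits over the point-to-point channels (two degree-$t$ polynomials per party), Round~II another $\Order(n^2\log|\F|)$, and $\Rec$ another $\Order(n^2\log|\F|)$; the broadcasts of Rounds~III--VII total $\Order(n^2\log|\F|)$ bits (at most $n^2$ disputed field elements in Round~IV, at most $n$ pairs of degree-$t$ polynomials in Round~VI, and $\Order(n\log n)$-bit complaint lists and $\Order(n)$-bit accusation flags otherwise, with $\log n<\log|\F|$). The one step I expect to require real care is strong commitment: one must handle the two-stage accusation mechanism so as to guarantee, simultaneously, that at least $t+1$ honest parties remain ``happy'', that their polynomials are pairwise consistent so that Lemma~\ref{lemma:bivariate} applies, and that the honest parties whose polynomials were publicly replaced in Round~VI still hold shares lying on the same $\starF(x,y)$; privacy, correctness, and the reconstruction step are then comparatively routine consequences of Lemmas~\ref{lemma:bivariateprivacy}, \ref{lemma:Shamir}, \ref{lemma:bivariate} and the decoding bound $n\ge 3t+1$.
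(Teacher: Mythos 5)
Your proposal is correct and follows essentially the same route as the paper's proof: privacy via Lemma~\ref{lemma:bivariateprivacy} after arguing no honest party accuses an honest $\D$, correctness via $\RSDec$ with $n\ge 3t+1$, and strong commitment by taking the honest non-accusing parties (at least $t+1$ of them), invoking Lemma~\ref{lemma:bivariate} to pin down $\starF(x,y)$, and then showing the polynomials $\D$ re-broadcasts in Round~VI agree with $\starF$ at the $\ge t+1$ points held by those parties. If anything you are slightly more thorough than the paper (explicit dispute-case analysis for pairwise consistency, and the honest parties who accuse only in Round~VII); the only omission is the trivial discarded-dealer branch, which is covered by the paper's convention of defaulting to a sharing of $0$.
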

 \begin{proof}
  If $\D$ is {\it honest}, then $f_i(\alpha_j) = g_j(\alpha_i)$ and $f_j(\alpha_i) = g_i(\alpha_j)$ holds
  for every pair of parties $(P_i, P_j)$. Consequently, no honest $P_j$ will be present in
  the list $L_i$ of any honest $P_i$. Moreover, $\D$ honestly resolves the first-stage accusations as well as second-stage accusations and consequently, no honest party
  accuses and discards $\D$. Hence $s$ will be $t$-shared through the degree-$t$
   polynomial $q(\cdot)$. Moreover, during $\BGW\mbox{-}\Rec$, the honest parties correctly reconstruct $q(\cdot)$ and hence $s$.
  This follows from the properties of $\RSDec$ and the fact that $q(\cdot)$ is a degree-$t$ polynomial
   and at most $t$ corrupt parties can send incorrect shares. Hence, the {\it correctness}
  property is guaranteed.
  
  Let $\Bad$ be the set of corrupt parties. If $\D$ is {\it honest}, then throughout $\BGW\mbox{-}\Sh$, the view of $\Adv$ consists of 
  $\{f_i(x), g_i(y) \}_{P_i \in \Bad}$. Moreover, no honest party accuses $\D$ and hence all the information which $\D$ makes public can be derived
  from $\{f_i(x), g_i(y) \}_{P_i \in \Bad}$.  Now since these polynomials are derived from $F(x, y)$, which is a randomly chosen polynomial
  embedding $q(\cdot)$, it follows from Lemma \ref{lemma:bivariateprivacy} that the view of
  $\Adv$ is distributed independently of $s$, thus guaranteeing {\it privacy}.
  
  For {\it strong commitment} we have to consider a {\it corrupt} $\D$. If the honest parties discard
  $\D$, then clearly the value $s^{\star} \defined 0$ will be $t$-shared 
   and the same value $0$ gets reconstructed during $\BGW\mbox{-}\Rec$. On the other hand, consider the case
  when the honest parties do not discard $\D$ during $\BGW\mbox{-}\Sh$. In this case we claim that the row and column-polynomials of all the {\it honest}
  parties are derived from a single degree-$(t, t)$ bivariate polynomial, say $\starF(x, y)$, which we call
  as $\D$'s committed bivariate polynomial. Consequently, $s^{\star} \defined \starF(0, 0)$ will be
  $t$-shared through the Shamir-sharing polynomial $q^{\star}(\cdot) \defined \starF(0, y)$ and 
  $s^{\star}$ gets reconstructed during $\BGW\mbox{-}\Rec$.
  
  To prove the claim, we first note that there are at least $n - t \geq 2t + 1$ parties who do not broadcast an $\Accuse$ message against
   $\D$ (as otherwise $\D$ is discarded).
  Let $\Honest$ be the set of {\it honest} parties among these $n - t$ parties. It is easy to see that $|\Honest| \geq n - 2t \geq t + 1$. The parties in $\Honest$
  receive degree-$t$ row and column-polynomials from $\D$. Moreover, for every $P_i ,P_j \in \Honest$, their row and column-polynomials are
  pair-wise consistent (as otherwise either $P_i$ or $P_j$ would broadcast an $\Accuse$ message against $\D$). It then follows from Lemma \ref{lemma:bivariate}
  that the row and column-polynomials of all the parties in $\Honest$ lie on a single degree-$(t, t)$ bivariate polynomial, say $\starF(x, y)$. 
  Next consider any {\it honest} party $P_j \not \in \Honest$, who broadcasts an  $\Accuse$ message against $\D$ and corresponding to which
  $\D$ makes public the row and column-polynomials of $P_j$. To complete the proof of the claim, we need to show that these polynomials also lie on 
   $\starF(x, y)$. We show it for the row-polynomial of $P_j$ and a similar argument can be used for the column-polynomial as well.
   So let $f_j(x)$ be the degree-$t$ row-polynomial broadcast by $\D$ for $P_j$. It follows that $f_j(\alpha_i) = g_i(\alpha_j)$ holds for {\it every}
   $P_i \in \Honest$, where $g_i(y)$ is the degree-$t$ column polynomial held by $P_i$ (otherwise $P_i$ would have broadcast
   an $\Accuse$ message against $\D$). Now $g_i(y) = \starF(\alpha_i, y)$ holds. Moreover, since $|\Honest| \geq t + 1$,
    the distinct points $\{(\alpha_j, g_i(\alpha_j)) \}_{P_i \in \Honest}$
   uniquely determine the degree-$t$ polynomial $\starF(x, \alpha_j)$. This implies that $f_j(x) = \starF(x, \alpha_j)$, as two different degree-$t$
   polynomials can have at most $t$ common points.
   
   In the protocol, $\D$ sends two degree-$t$ polynomials to each party and every pair of parties exchange $2$ common values,
      which requires a communication of $\Order(n^2)$ field elements. There could be at most $t$ parties corresponding to which 
      $\D$ makes public their polynomials and this requires a broadcast of $\Order(n^2)$ field elements.
 \end{proof}
 \subsubsection{\bf A $5$-round Version of $\BGW\mbox{-}\Sh$}
 Protocol $\BGW\mbox{-}\Sh$ follows the ``share-complaint-resolve" paradigm, where
  $\D$ first distributes the information on its bivariate polynomial, followed by parties complaining about any ``inconsistency", which is followed by $\D$
   resolving
  these inconsistencies.
   At the end, either all (honest) parties held consistent polynomials derived from a single degree-$(t, t)$ bivariate polynomial
    or $\D$ is discarded.
  The ``complaint" and ``resolve" phases of $\BGW\mbox{-}\Sh$ occupied {\it five} rounds.
  In Gennaro, Ishai, Kushilevitz and Rabin \cite{GIKR01}, the authors proposed a {\it round-reducing} technique, which collapses these phases to {\it three}
  rounds, thus reducing the overall number of rounds to five. The modified protocol $\ModBGW\mbox{-}\Sh$ is presented in Fig \ref{fig:ModBGW}.
  
  The high level idea of  $\ModBGW\mbox{-}\Sh$ is as follows. In $\BGW\mbox{-}\Sh$, during the third round, $P_i$ broadcasts {\it only} the identity of the parties
  with which it has a dispute (through $L_i$), followed by $\D$ making the corresponding disputed values public during Round IV, which is further
  followed by $P_i$ accusing $\D$ during Round V, if $P_i$ finds $\D$'s version  to mis-match with $P_i$'s version. 
  Let us call $P_i$ to be {\it unhappy}, if it accuses $\D$ during Round V. The round-reducing technique of \cite{GIKR01} enables to identify the set of
  unhappy parties $\Unhappy$ by the end of Round IV as follows. During Round III, apart from broadcasting the list of disputed parties,
  party $P_i$ also makes public its version of the corresponding disputed values.
   In response, both $\D$ and the corresponding complainee party
  makes public their respective version of the disputed value. Now based on whether $\D$'s version matches the complainant's version or complainee's version,
  the parties can identify the set $\Unhappy$. 
  
  In $\BGW\mbox{-}\Sh$, once $\Unhappy$ is identified, $\D$ makes public the polynomials of the parties in $\Unhappy$ during Round VI. And to verify if $\D$
  made public the correct polynomials, during Round VII,
  the parties {\it not in} $\Unhappy$ raise accusations against $\D$, if they find any inconsistency between the polynomials held by them
  and the polynomials made public by $\D$. 
    The round-reducing technique of \cite{GIKR01} collapses these two rounds into a single round.
  Namely, once $\Unhappy$ is decided, 
  $\D$ makes public the row-polynomials of these parties. In {\it parallel}, the parties not in $\Unhappy$ make public
  the corresponding supposedly common values on these row-polynomials. Now to check if $\D$ broadcasted correct row-polynomials, 
  one just has to verify whether each broadcasted row-polynomial is pair-wise consistent with at least $2t + 1$ corresponding values,
  broadcasted by the parties {\it not} in $\Unhappy$. 
 \begin{protocolsplitbox}{$\ModBGW\mbox{-}\Sh$}{A simplified $5$-round version of $\BGW\mbox{-}\Sh$ due to Genarro, Ishai, Kushilevitz and Rabin \cite{GIKR01}.}{fig:ModBGW}
	\justify
\begin{myitemize}
\item {\bf Round I} --- $\D$ picks $F(x, y)$ as in $\BGW\mbox{-}\Sh$ and distributes $f_i(x), g_i(y)$ lying on $F(x, y)$ to each $P_i$.
\item {\bf Round II} --- As in $\BGW\mbox{-}\Sh$, each $P_i$ upon receiving $f_i(x), g_i(y)$ from $\D$, sends $f_{ij} = f_i(\alpha_j)$ to $P_j$.
\item {\bf Round III --- each party $P_i$ does the following}:
   \begin{myitemize}
   \item[--] For every $P_j \in \Partyset$ where $f_{ji} \neq g_i(\alpha_j)$, broadcast
   $(\complaint, i, j, g_i(\alpha_j))$.
   \end{myitemize}
\item {\bf Round IV (making disputed values public) --- the dealer and each party $P_j$ does the following}: 
    \begin{myitemize}
     \item[--] If $P_i$ broadcasts $(\complaint, i, j, g_i(\alpha_j))$, then $\D$ and $P_j$  broadcasts
      $F(\alpha_i, \alpha_j)$ and $f_j(\alpha_i)$ respectively.
    \end{myitemize}
\item {\bf Local computation (deciding unhappy parties) --- each party $P_k$ does the following}:
      \begin{myitemize}
      \item[--] Initialize a set of {\it unhappy parties} $\Unhappy$ to $\emptyset$.
      \item[--] For every pair of parties $(P_i, P_j)$, such that
       $P_i$ has broadcast $(\complaint, i, j, g_i(\alpha_j))$, $\D$ has broadcast $F(\alpha_i, \alpha_j)$ and $P_j$ has broadcast $f_j(\alpha_i)$, do the following.
         \begin{myitemize}
          \item[--] If $g_i(\alpha_j) \neq F(\alpha_i, \alpha_j)$, then include $P_i$ to $\Unhappy$.
           \item[--]  If $f_j(\alpha_i) \neq F(\alpha_i, \alpha_j)$, then include $P_j$ to $\Unhappy$.
        \end{myitemize}
      \item[--] If $|\Unhappy| > t$, then discard $\D$.
      \end{myitemize}    
\item {\bf Round V (resolving unhappy parties) --- the dealer and each $P_j \not \in \Unhappy$ does the following}:
       \begin{myitemize}
       \item[--] For every $P_i \in \Unhappy$, the dealer $\D$ broadcasts degree-$t$ polynomial $f_i(x)$. 
       \item[--] For every $P_i \in \Unhappy$, party $P_j$ broadcasts $g_j(\alpha_i)$.       
       \end{myitemize}
\item {\bf Output decision --- each party $P_k$ does the following}:
    \begin{myitemize}
    \item[--] If  there exists any $P_i \in \Unhappy$ for which $\D$ broadcasts $f_i(x)$ and at most $2t$ parties 
    $P_j  \not \in \Unhappy$ broadcast $g_j(\alpha_i)$ values where $f_i(\alpha_j) = g_j(\alpha_i)$ holds, then 
       discard $\D$.
    \item[--] Else output the {\it share} $f_k(0)$.
    \end{myitemize}   
\end{myitemize}
\end{protocolsplitbox}

  \subsubsection{\bf The $4$-round $\GIKRI$ Scheme}
 Genarro et al \cite{GIKR01} proposed another round-reducing technique to reduce the number of rounds of $\ModBGW\mbox{-}\Sh$ by one.
  The modified protocol $\GIKRI\mbox{-}\Sh$ is presented in Fig \ref{fig:GIKRI}.
  The idea is to  ensure that the set $\Unhappy$ is decided by the end of Round III, even though this might look like
  an impossible task. This is because $\Unhappy$ can be decided during Round IV, only after 
   the results of pair-wise consistency checks are available during Round III. The key-observation of \cite{GIKR01} is 
   that the parties can ``initiate" the pair-wise consistency checks from Round I itself. More specifically, every $P_i, P_j$ exchange random pads {\it privately} during the
  Round I, independently of $\D$'s distribution of the polynomials.
  During the second round, $P_i, P_j$  can then broadcast a masked version of the supposedly common values
  on their polynomials, using the exchanged pads as the masks. If $\D, P_i$ and $P_j$ are {\it honest}, then the masked version of the
  common values will be the same and nothing about the common values will be learnt, as the corresponding
  masks will be private. By comparing the masked versions of the common values, the parties publicly learn about the results of pair-wise
  consistency checks by the end of the Round II. 
  \begin{protocolsplitbox}{$\GIKRI\mbox{-}\Sh$}{A $4$-round sharing phase protocol due to Genarro, Ishai, Kushilevitz and Rabin \cite{GIKR01}.}{fig:GIKRI}
	\justify
\begin{myitemize}
\item {\bf Round I (sending polynomials and exchanging random pads)}
   \begin{myitemize}
	  \item[--] $\D$ picks $F(x, y)$ as in $\BGW\mbox{-}\Sh$ and distributes $f_i(x), g_i(y)$ lying on $F(x, y)$ to each $P_i$.
	  \item[--] Each $P_i \in \Partyset$ picks a random pad $r_{ij} \in \F$ corresponding to every $P_j \in \Partyset$ and sends $r_{ij}$ to $P_j$.
    \end{myitemize}
\item {\bf Round II (broadcasting common values in a masked fashion) --- each $P_i$ does the following}:
       \begin{myitemize}
       \item[--] Broadcast $a_{ij} \defined f_i(\alpha_j) + r_{ij}$ and
        $b_{ij} \defined g_i(\alpha_j) + r'_{ji}$, where $r'_{ji}$ is the pad, received from $P_j$.
       \end{myitemize} 
\item {\bf Round III (making disputed values public) --- for all $P_i, P_j$ where $a_{ij} \neq b_{ji}$, party $P_i, P_j$ and $\D$ does the following}:
     \begin{myitemize}
       \item[--] $\D$ broadcasts $F(\alpha_j, \alpha_i)$; 
       $P_i$ broadcasts $f_i(\alpha_j)$ and 
       $P_j$ broadcasts $g_j(\alpha_i)$.
     \end{myitemize}       
\item {\bf Local computation (deciding unhappy parties) --- each party $P_k$ does the following}:
      \begin{myitemize}
      \item[--] Initialize a set of {\it unhappy parties} $\Unhappy$ to $\emptyset$.
       For every $(P_i, P_j)$ where $a_{ij} \neq b_{ji}$ and where 
              $P_i$ has broadcast $f_i(\alpha_j)$, $\D$ has broadcast $F(\alpha_j, \alpha_i)$ and $P_j$ has broadcast $g_j(\alpha_i)$, do the following.
         \begin{myitemize}
          \item[--] If $f_i(\alpha_j) \neq F(\alpha_j, \alpha_i)$, then include $P_i$ to $\Unhappy$.
           \item[--]  If $g_j(\alpha_i) \neq F(\alpha_j, \alpha_i)$, then include $P_j$ to $\Unhappy$.
        \end{myitemize}
      \item[--] If $|\Unhappy| > t$, then discard $\D$.
      \end{myitemize}    
\item {\bf Round IV (resolving unhappy parties) --- the dealer and each $P_j \not \in \Unhappy$ does the following}:
       \begin{myitemize}
       \item[--] For every $P_i \in \Unhappy$, $\D$ broadcasts degree-$t$ polynomial $f_i(x)$ and 
        $P_j$ broadcasts $g_j(\alpha_i)$.       
       \end{myitemize}
\item {\bf Output decision --- each party $P_k$ does the following}:
    \begin{myitemize}
    \item[--] If  there exists any $P_i \in \Unhappy$ for which $\D$ broadcasts $f_i(x)$ and at most $2t$ parties 
    $P_j  \not \in \Unhappy$ broadcast $g_j(\alpha_i)$ values where $f_i(\alpha_j) = g_j(\alpha_i)$ holds, then 
       discard $\D$.
    \item[--] Else output the {\it share} $f_k(0)$.
    \end{myitemize}     
\end{myitemize}
 \end{protocolsplitbox}
 
 \subsubsection{\bf The $3$-round $\GIKRII$ Scheme}
 \label{sec:GIKRII}
  We now present the 
  $\GIKRII$ scheme from \cite{GIKR01}, which has a {\it round-optimal} sharing phase, namely a 3-round sharing phase.
  However, the protocol is {\it inefficient}, as it requires an exponential (in $n$ and $t$) amount of computation and communication. 
 The computations in $\GIKRII$ scheme are done over a finite group $(\Gr, +)$. 
  We first explain $t\mbox{-out-of-}n$ {\it replicated secret-sharing} (RSS) \cite{ISN87}. Let $K \defined {n \choose t}$ and $A_1, \ldots, A_K$ denote the set of
 all possible subsets of $\Partyset$ of size $t$. For $k = 1, \ldots, K$, let $\Group_k = \Partyset \setminus A_k$. It is easy to see that in each $\Group_k$, the majority
 of the parties are {\it honest}. 
   To share $s \in \Gr$, the share-generation algorithm of RSS outputs $(v^{(1)}, \ldots, v^{(K)}) \in \Gr^K$, where
  $v^{(1)}, \ldots, v^{(K)}$ are random elements, such that $v^{(1)} + \ldots + v^{(K)} = s$ holds.
   The {\it share} $s_i$ for $P_i$ is defined as $s_i \defined \{v^{(j)} \}_{P_i \in \Group_j}$. 
   Any $t$-sized subset of $(s_1, \ldots, s_n)$ will have at least one ``missing" element from $v^{(1)}, \ldots, v^{(K)}$,
   say $v^{(l)}$, whose probability distribution will be independent of $s$, thus ensuring privacy. On the other hand, any $(t + 1)$-sized subset of $(s_1, \ldots, s_n)$ will have
   all values $v^{(1)}, \ldots, v^{(K)}$ which can be added to reconstruct back $s$, thus ensuring correctness. 
    We say that {\it $s \in \Gr$ is RSS-shared}, if it is secret-shared as per 
   $t\mbox{-out-of-}n$ RSS. That is, if there exist $v^{(1)}, \ldots, v^{(K)}$ where $s = v^{(1)} + \ldots + v^{(K)}$, with all the parties in $\Group_k$ holding 
   $v^{(k)}$. 
   
   Scheme $\GIKRII$ is presented in Fig \ref{fig:GIKRII}. The sharing protocol
   $\GIKRII\mbox{-}\Sh$ verifiably generates a replicated secret-sharing of dealer's secret, maintaining its privacy if $\D$ is {\it honest}. The verifiability ensures that even if
   $\D$ is {\it corrupt}, there exists some value which has been shared as per RSS by $\D$. The reconstruction protocol $\GIKRII\mbox{-}\Rec$ allows the parties
   to reconstruct the RSS-shared value of $\D$. During $\GIKRII\mbox{-}\Sh$, $\D$ generates a vector of values $(v^{(1)}, \ldots, v^{(K)})$ as per
   the share-generation algorithm of RSS and sends $v^{(k)}$ to all the parties in $\Group_k$. If $\D$ is {\it honest}, then this ensures the privacy of $s$. This is because
   if $\Adv$ corrupts the parties in $A_k$, then it will not know $v^{(k)}$.
   To ensure that a potentially {\it corrupt} $\D$ has distributed the same $v_k$ to all the (honest) parties in $\Group_k$, each pair of parties in $\Group_k$ privately exchange
   their respective copies of $v^{(k)}$ and publicly raise a complaint if they find any inconsistency. To resolve any complaint raised for $\Group_k$, $\D$ makes public
   the value $v^{(k)}$ for $\Group_k$, 
   thus ensuring that all the parties in $\Group_k$ have
    the same $v^{(k)}$. Notice that this does not violate privacy, since
     if any inconsistency
    is reported for $\Group_k$, then either $\D$ is {\it corrupt} or $\Group_k$ consists of at least one corrupt party and so adversary already knows $v^{(k)}$. 
    
    The above process will require {\it four} rounds, which can be collapsed
    to three,  based on the idea of pre-exchanging pair-wise random pads.
      During $\GIKRII\mbox{-}\Rec$,  the goal of each $P_i$  is to correctly obtain $v^{(1)}, \ldots, v^{(K)}$. 
      If $P_i \in \Group_k$, then it already has $v^{(k)}$. However, if $P_i \not \in \Group_k$, then
       every party in $\Group_k$ sends $v^{(k)}$ to $P_i$, who 
    applies the majority rule  to filter out the correct $v^{(k)}$.
   \begin{schemesplitbox}{$\GIKRII$}{The $3$-round $\GIKRII$ scheme due to Genarro, Ishai, Kushilevitz and Rabin \cite{GIKR01}.}{fig:GIKRII}
\justify
\centerline{\algoHead{Sharing Phase: Protocol $\GIKRII\mbox{-}\Sh$}}
\begin{myitemize}
  \item {\bf Round I (distributing shares and exchanging random pads)}: Let $K = {n \choose t}$ and $A_1, \ldots, A_K$ denote the set of
    all possible subsets of $\Partyset$ of size $t$ and let $\Group_k = \Partyset \setminus A_k$.
   \begin{myitemize}
    \item[--] $\D$ on having the input $s \in \Gr$, randomly selects $v^{(1)}, \ldots, v^{(K)} \in \Gr$ such that $s = v^{(1)} + \ldots + v^{(K)}$ holds. It then sends
    $v^{(k)}$ to every party $P_i \in \Group_k$, for $k = 1, \ldots, K$.
    \item[--] For $k = 1, \ldots, K$, each $P_i \in \Group_k$ sends a randomly chosen pad $r^{(k)}_{ij} \in \Gr$ to every $P_j \in \Group_k$ where $i < j$.   
   \end{myitemize}
  \item {\bf Round II (pair-wise consistency check within each group)}
     \begin{myitemize}
     \item[--] For $k = 1, \ldots, K$, each pair of parties $P_i, P_j \in \Group_k$ with $i < j$ do the following:
         \begin{myitemize}
         \item[--] $P_i$ broadcasts $a^{(k)}_{ij} = v^{(k)}_i + r^{(k)}_{ij}$, where $v^{(k)}_i$ denotes the version of $v^{(k)}$ received by $P_i$ from $\D$.
         \item[--] $P_j$ broadcasts $a^{(k)}_{ji} = v^{(k)}_j + r'^{(k)}_{ij}$, where $v^{(k)}_j$ denotes the version of $v^{(k)}$ received by $P_j$ from $\D$ and $r'^{(k)}_{ij}$ 
          denotes the pad received by $P_j$ from $P_i$.
         \end{myitemize}
     \end{myitemize}
  \item {\bf Round III (resolving conflicts)}
       \begin{myitemize}
       \item[--] For $k = 1, \ldots, K$, if there exists $P_i, P_j \in \Group_k$ such that $a^{(k)}_{ij} \neq a^{(k)}_{ji}$, then $\D$ broadcasts
        the value $v^{(k)}$. 
             \end{myitemize}     
\item {\bf Output determination --- each party $P_i$ does the following}:
    \begin{myitemize}
    \item[--] If $\exists k \in \{1, \ldots, K \}$ where $P_i \in \Group_k$ such that $\D$ broadcast $v^{(k)}$, then set $v^{(k)}_i$ to $v^{(k)}$.
       \item[--] Output the {\it share} $s_i \defined \{v^{(k)}_i \}_{P_i \in \Group_k}$. \\[.1cm]
    \end{myitemize}        
  \end{myitemize}

\justify
\centerline{\algoHead{Reconstruction Phase: Protocol $\GIKRII\mbox{-}\Rec$}}
Each party $P_i \in \Partyset$ does the following:
\begin{myitemize}
\item[--] $\forall k \in \{1, \ldots, K \}$, such that $P_i \in \Group_k$, send $v^{(k)}_i$ to every party in $\Partyset \setminus \Group_k$.
\item[--] $\forall k \in \{1, \ldots, K \}$ where $P_i \not \in \Group_k$, set $v^{(k)}$ to be the value $v^{(k)}_j$ received from at least $t + 1$ parties $P_j \in \Group_k$.
\item[--] Output $s =\displaystyle \sum_{\Group_k: P_i \in \Group_k} v^{(k)}_i + \displaystyle \sum_{\Group_k: P_i \not \in \Group_k} v^{(k)}$.
\end{myitemize}
  \end{schemesplitbox}
  \subsubsection{\bf The $3$-round $\FGGRS$ Scheme}
  The $\GIKRI$ protocol comes closest in terms of the number of rounds to obtain a round optimal and
  {\it efficient} VSS scheme. Round IV of $\GIKRI\mbox{-}\Sh$ consists of $\D$ making public the polynomials
  of the {\it unhappy} parties. 
  Fitzi et al.~\cite{FGGRS06} observed that the elimination
  of Round IV results in a primitive that satisfies a {\it weaker} commitment property, where the
  reconstructed value may be some predefined default value, when the dealer is {\it corrupt}. 
   This
  primitive is called {\it weak} verifiable secret sharing (WSS) \cite{RB89} and is used as a building block to
  construct a VSS scheme. 
   We next present the required background and the WSS scheme
    of \cite{FGGRS06}.
   \begin{definition}[$(n,t)$-WSS \cite{RB89}]
   \label{def:WSS}
   Let $(\Sh, \Rec)$ be a pair of protocols where $\D \in \Partyset$ has a private input $s \in \SecretSpace$ for
   $\Sh$. Then $(\Sh, \Rec)$ is a {\it perfectly-secure $(n,t)$-WSS scheme}, if the
   following hold.
   \begin{myitemize}
     \item[--] {\bf Privacy and Correctness}: Same as in VSS.
     \item[--] {\bf Weak Commitment}: Even if $\D$ is {\it corrupt}, in any execution of $\Sh$ the
       joint view of the honest parties defines a unique value $s^{\star} \in \SecretSpace$ (which could be
       different from $s$), such that each honest party outputs either $s^{\star}$ or some default value
       $\default$ at the end of $\Rec$, irrespective of $\Adv$.
      \end{myitemize}
  \end{definition}

  The WSS scheme $\FGGRSWSS$ of Fitzi et al.~\cite{FGGRS06} is given in Fig. \ref{fig:FGGRSWSS}.
  Protocol $\FGGRSWSS\mbox{-}{\Sh}$ is the same as $\GIKRI\mbox{-}\Sh$, except that the parties do not execute Round IV. Consequently,
  the parties in $\Unhappy$ will not posses their shares.
   Hence, during $\FGGRSWSS\mbox{-}{\Rec}$, only the {\it happy} parties (who are {\it not} in $\Unhappy$) 
   participate  by broadcasting their respective polynomials. To verify that 
   correct polynomials are broadcasted, the pair-wise consistency of these polynomials is checked.
   The polynomials which are {\it not} found to be pair-wise consistent with ``sufficiently many" polynomials are not considered.
    If the parties are left with at least $n - t$  polynomials,
    then they are used to reconstruct back 
   $\D$'s committed Shamir-sharing polynomial, else the parties output $\bot$.
   The idea is that if the parties are left with $n - t$ polynomials,
   then they lie on the same degree-$(t, t)$ bivariate polynomial as committed by $\D$ to {\it honest} happy parties during the sharing phase,
   as among these polynomials, at least $t + 1$ belong to the honest parties who are happy.
   
      For an {\it honest} $\D$, {\it no} honest party will be in $\Unhappy$ and hence
      all honest parties will have their respective shares of $\D$'s Shamir-sharing polynomial.
      Moreover, if any {\it corrupt} party produces an incorrect polynomial during the reconstruction phase, then it will be ignored
      due to the pair-wise consistency checks. Thus
      $\FGGRSWSS$ achieves the properties of a type-II VSS, for an {\it honest} $\D$. However if $\D$ is {\it corrupt}, 
      then up to $t$ {\it honest} parties may belong to $\Unhappy$.
      Moreover, during the reconstruction phase, even if a single corrupt party produces incorrect polynomials, then the parties reconstruct $\bot$.  
      And this prevents $\FGGRSWSS$ from being a VSS scheme.         
    \begin{schemesplitbox}{$\FGGRSWSS$}{The 3-round $\FGGRSWSS$ scheme due to Fitzi, Garay, Gollakota, Rangan and Srinathan \cite{FGGRS06}.}{fig:FGGRSWSS}
    \justify
    \centerline{\algoHead{Sharing Phase: Protocol $\FGGRSWSS\mbox{-}{\Sh}$}}
    \begin{myitemize}
      \item The parties execute the first $3$ rounds of $\GIKRI\mbox{-}\Sh$. Let $\Unhappy$ be the set of 
      {\it unhappy} parties and let $\WCORE \defined \Partyset \setminus \Unhappy$ be the set of {\it happy} parties.
      If $|\Unhappy| > t$, then discard $\D$. \\[.1cm]
    \end{myitemize}
    \justify
    \centerline{\algoHead{Reconstruction Phase: Protocol $\FGGRSWSS\mbox{-}{\Rec}$}}
    \begin{myitemize}
    \item {\bf Revealing private information}: each $P_i \in \WCORE$ broadcasts $f_i(x)$ and $g_i(y)$.
    \item {\bf Consistency check and output decision --- each party $P_i$ does the following}:
      \begin{myitemize}
      \item[--] Construct a {\it consistency graph} $G$ over the set of parties $\WCORE$
        with an edge between $P_j$ and $P_k$
         if and only if $f_j(\alpha_k) = g_k(\alpha_j)$ and
        $g_j(\alpha_k) = f_k(\alpha_j)$.
      \item[--] Remove $P_j$ from $G$, if it  has degree less than $n-t$ in $G$. Repeat till no more nodes can be removed from $G$. Redefine
      $\WCORE$ to be the set of parties, whose corresponding nodes remain in $G$.
     \item[--] If $|\WCORE| < n - t$  then output a default value $\default$.
      Else interpolate a degree-$t$ polynomial $q(\cdot)$ through
      the points $\{(\alpha_j, f_j(0))\}_{P_j \in \WCORE}$ and output $q(\cdot), s = q(0)$.
      \end{myitemize}
    \end{myitemize}
  \end{schemesplitbox}
  \paragraph{\bf Sharing and Reconstructing Polynomial Using $\FGGRSWSS$}
  One can interpret $\D$'s computation in $\FGGRSWSS\mbox{-}{\Sh}$ as if $\D$ wants to share the degree-$t$ Shamir-sharing
  polynomial $\starF(0, y)$.
  If $\D$ is not discarded, then each (honest) $P_j \in \WCORE$
  receives the share $\starF(0, \alpha_j)$ from $\D$ through its
   degree-$t$ row-polynomial $\starF(x, \alpha_j)$.
     Here
   $\starF(x, y)$ is the degree-$(t, t)$ bivariate polynomial committed by $\D$ to the honest parties in $\WCORE$, which is the same as $F(x, y)$ for an {\it honest} 
   $\D$. If $\D$ is {\it honest}, then adversary learns at most $t$ shares lying on $F(0, y)$
   and hence its view will be independent of $F(0, 0)$.
   Similarly, the computations during $\FGGRSWSS\mbox{-}{\Rec}$ can be interpreted as if
   the parties publicly try to reconstruct a degree-$t$ Shamir-sharing polynomial \starF(0, y), which has been shared by $\D$
   during $\FGGRSWSS\mbox{-}{\Sh}$. 
   If $\D$ is {\it honest}, then the parties robustly reconstruct the shared polynomial. Else, the parties either reconstruct the shared polynomial
   or output $\bot$. 
     Hence we propose the following notations for $\FGGRSWSS$, which
      later simplifies the presentation of $\FGGRS$.
\begin{notation}[\bf Notations for using $\FGGRSWSS$]
\label{notation:FGGRS}
We use the following notations.
\begin{myitemize}
\item We say that party $P_j \in \Partyset$ {\it shares a degree-$t$ polynomial $r(\cdot)$} held by $P_j$, to denote that $P_j$
 plays the role of $\D$ and invokes an instance of $\FGGRSWSS\mbox{-}{\Sh}$ by selecting $r(\cdot)$ as its Shamir-sharing polynomial
   and all the parties participate in this instance. 
\item We say that {\it $P_i$ receives a wss-share $r_{ji}$ from $P_j$}, to denote that 
 in Round I of the $\FGGRSWSS\mbox{-}{\Sh}$ instance invoked by $P_j$, $P_i$ receives a degree-$t$ row-polynomial
 from $P_j$, whose constant term is $r_{ji}$. If $P_j$ is not discarded during the $\FGGRSWSS\mbox{-}{\Sh}$ instance, then
 the wss-shares $r_{ji}$ of all the {\it honest} parties in $\WCORE$ lie on a unique degree-$t$ Shamir-sharing polynomial held by $P_j$.
\item Let $r(\cdot)$ be a degree-$t$ polynomial shared by $P_j$ through an instance of $\FGGRSWSS\mbox{-}{\Sh}$.
 We say that the {\it parties try to reconstruct $P_j$'s shared polynomial}, to denote that the parties execute
  the corresponding instance of $\FGGRSWSS\mbox{-}{\Rec}$, which either outputs $r(\cdot)$ or $\bot$.
\end{myitemize}
\end{notation}
\noindent \paragraph{\bf From WSS to VSS}
   The sharing-phase protocol of $\FGGRS$ (see Fig. \ref{fig:FGGRSVSS}) is the same as the first three rounds of 
   $\GIKRI\mbox{-}\Sh$ with the following twist. The random pads $r_{ji}$ used by $P_j$ to verify the pair-wise consistency 
   of its row-polynomial with the other parties' column-polynomials are ``tied together" by letting these pads lie
   on a random degree-$t$ {\it blinding-polynomial} $r_j(\cdot)$, which is shared by $P_j$ (see Notation \ref{notation:FGGRS}). 
   For pair-wise consistency, $P_j$ makes public the polynomial $A_j(\cdot)$, which is a masked version of its row-polynomial
   and its blinding-polynomial, 
    while every $P_i$
   makes public the supposedly common value on its column-polynomial, blinded with the wss-share of $P_j$'s blinding-polynomial.
   This new way of performing pair-wise consistency checks achieves the same ``goals" as earlier.
    Moreover, privacy is maintained for an {\it honest} $\D$, as
    for every {\it honest} $P_j$, the adversary obtains at most $t$ wss-shares of $P_j$'s blinding-polynomial, which are randomly distributed.
   The set of happy parties for the VSS is identified based on the results of pair-wise consistency and conflict-resolutions, as done in $\GIKRI\mbox{-}\Sh$. Moreover, 
   the parties also ensure that for every happy party $P_j$ for the VSS, there is an overlap of at least $n - t$ between
   the set of happy parties for the VSS and the set of happy parties 
      for $P_j$'s WSS-sharing instance. This  is crucial for ensuring the strong commitment property during the reconstruction phase.

  Let $\starF(x, y)$ be the degree-$(t, t)$ bivariate polynomial, committed by $\D$ during the sharing phase.
   During the reconstruction phase,
   instead of asking the {\it happy} parties to make their row-polynomials public, 
   the parties reconstruct their blinding-polynomials (by executing instances of
   $\FGGRSWSS\mbox{-}{\Rec}$), which are then unmasked from the corresponding $A_j(\cdot)$ polynomials to get back the row-polynomials
   of the happy parties. For the {\it honest} happy parties $P_j$, robust reconstruction of their blinding-polynomials is always guaranteed, thus ensuring
   that their row-polynomials $\starF(x, \alpha_j)$
   are robustly reconstructed. If the reconstruction of $P_j$'s blinding-polynomial fails,
   then $P_j$ is {\it corrupt} and hence can be safely discarded from consideration. 
   However, if $P_j$ is {\it corrupt} and the parties reconstruct a degree-$t$ polynomial during the corresponding $\FGGRSWSS\mbox{-}{\Rec}$ instance,
   then the {\it weak commitment} property of the WSS ensures that reconstructed polynomial is the correct blinding-polynomial.
    Hence unmasking it from 
   $A_j(\cdot)$ will return back the row-polynomial $\starF(x, \alpha_j)$. This is because there are at least
   $n - t$ parties, which belong to the happy set of {\it both} the VSS instance, as well as $P_j$'s WSS instance. Among these
   $n - t$ common happy-parties, at least $t+1$ are {\it honest}. And the wss-shares received by these honest parties $P_k$ from $P_j$ uniquely define
   $P_j$'s blinding-polynomial $r_j(\cdot)$, while evaluations of the column-polynomials of the same honest-parties $P_k$ at $y = \alpha_j$ uniquely determine
   the degree-$t$ row-polynomial $\starF(x, \alpha_j)$. Moreover, during the sharing phase these honest parties $P_k$ 
   collectively ensured that $A_j(\cdot) = \starF(x, \alpha_j) + r_j(\cdot)$ holds,
   as otherwise they do not belong to the happy set of $P_j$'s WSS instance.
  
     \begin{schemesplitbox}{$\FGGRS$}{The 3-round $\FGGRS$ scheme due to Fitzi, Garay, Gollakota, Rangan and Srinathan \cite{FGGRS06}.}{fig:FGGRSVSS}
    \justify
    \centerline{\algoHead{Sharing Phase: Protocol $\FGGRS_{\Sh}$}}
    \begin{myitemize}
    \item {\bf Round I (sending polynomials and exchanging random pads)}:
      \begin{myitemize}
      \item[--] $\D$ picks $F(x, y)$ and distributes $f_i(x), g_i(y)$ lying on $F(x, y)$ to each $P_i$.
      \item[--] Each party $P_i \in \Partyset$ (including $\D$) picks a random degree-$t$ {\it blinding-polynomial} $r_i(\cdot)$ and shares
      it through an instance $\WSSSh_i$  of $\FGGRSWSS\mbox{-}{\Sh}$.
      \end{myitemize}
    \item {\bf Round II (broadcasting common values in a masked fashion) --- each $P_i$ does the following}:
      \begin{myitemize}
      \item[--] Broadcast the degree-$t$ polynomial $A_i(\cdot) \defined f_i(x) + r_i(\cdot)$.
      \item[--] Broadcast $b_{ij} \defined g_i(\alpha_j) + r'_{ji}$, where
        $r'_{ji}$ denotes the wss-share received from $P_j$ during $\WSSSh_j$.
      \item[--] For every $k \in \{1,\ldots,n\}$, concurrently execute Round II of the instance $\WSSSh_k$.
      \end{myitemize}
    \item {\bf Round III}:
      \begin{myitemize}
           \item[--] {\bf (making disputed values public)} --- for all $P_i, P_j$
        where $A_i(\alpha_j) \neq b_{ji}$, dealer $\D$ broadcasts $F(\alpha_j, \alpha_i)$, party 
        $P_i$ broadcasts $f_i(\alpha_j)$ and party  $P_j$ broadcasts $g_j(\alpha_i)$.
      \item[--] For every $k \in \{1,\ldots,n\}$, the parties concurrently execute Round III of the instance $\WSSSh_k$.
      \end{myitemize}
    \item {\bf Local computation at the end of Round III --- each party $P_k$ does the following}:
      \begin{myitemize}
      \item[--] Initialize a set $\Unhappy$ of {\it unhappy} parties.  For every $P_i, P_j$ where $A_i(\alpha_j) \neq b_{ji}$, do the following.
         \begin{myitemize}
         \item[--] Include $P_i \in \Unhappy$ (resp.~$P_j \in \Unhappy$), if $F(\alpha_j, \alpha_i) \neq f_i(\alpha_j)$ (resp.~$F(\alpha_j, \alpha_i) \neq g_j(\alpha_i)$) holds.
         \end{myitemize}
      Let 
      $\VCORE \defined \Partyset \setminus \Unhappy$ be the set of {\it happy} parties.
      \item[--] For $j \in \{1, \ldots, n \}$, let $\WCORE_j$ denote the set of {\it happy} parties during
      the instance $\WSSSh_j$. Remove $P_i$ from $\WCORE_j$, if during Round II, $A_j(\alpha_i) \neq b_{ij}$ holds.
          %
      \item[--] For every $P_j \in \VCORE$, if
        $|\VCORE \intersection \WCORE_j| < n - t$, then remove $P_j$ from $\VCORE$.  Repeat this step till no more parties can be removed from
        $\VCORE$. 
        If $|\VCORE| < n - t$, then discard $\D$. \\[.1cm]
      \end{myitemize}
    \end{myitemize}
    \justify 
    \centerline{\algoHead{Reconstruction Phase: Protocol $\FGGRS_{\Rec}$}}
    \begin{myitemize}
      \item {\bf Reconstructing the blinding-polynomials}:
       \begin{myitemize}
       \item[--] $\forall P_j \in \VCORE$, the parties try to reconstruct $P_j$'s blinding-polynomial by participating in an instance
       $\WSSRec_j$ of $\FGGRSWSS\mbox{-}{\Rec}$. $P_j$ is removed from $\VCORE$ if $\bot$ is the output during $\WSSRec_j$.
       \end{myitemize}
      \item {\bf Output decision --- each party $P_i$ does the following}:
      \begin{myitemize}
        \item[--] For each $P_j \in \VCORE$, compute $f_j(x) = A_j(x) - r_j(\cdot)$, where $r_j(\cdot)$ is 
        reconstructed during $\WSSRec_j$.
         Interpolate a degree-$t$ polynomial $q(\cdot)$ through $\{(\alpha_j, f_j(0))\}_{P_j \in \VCORE}$. Output $s = q(0)$.
      \end{myitemize}
    \end{myitemize}
  \end{schemesplitbox}

  \subsubsection{\bf The $3$-round $\KKK$ Scheme}
  The $\FGGRS$ scheme is a Type-I VSS, because if $\D$ is {\it corrupt}, then {\it only} the
  honest parties in $\VCORE$ get their shares.  
  Moreover, it
  makes use of the broadcast channel during two of the rounds of the sharing phase, which is {\it not} optimal (the optimal is one round). The $\KKK$
  scheme due to Katz et al.~\cite{KKK09} rectifies both these problems.

  The optimal broadcast-channel usage must first be rectified for $\FGGRSWSS\mbox{-}\Sh$. The modified
   construction  $\KKKWSS\mbox{-}\Sh$ (see Fig \ref{fig:KKKWSS}) is based on the observation that during
  Round II of $\FGGRSWSS\mbox{-}{\Sh}$ (which is the same as Round II of $\GIKRI\mbox{-}\Sh$), there is no need to
  {\it publicly} perform the pair-wise consistency checks over {\it masked} values. Instead, 
   parties can first {\it privately} perform the pair-wise consistency checks over {\it unmasked} values 
   and later {\it publicly} announce the results during the third round. However, we also need to add a provision for the dealer to resolve any potential conflicts during the
   third round itself. For this, during the second round, every $P_i, P_j$ privately exchange their supposedly common values and also the random pads.
   Additionally, the pads  are also ``registered" with the dealer. Later during the third round, upon a
  disagreement between $P_i$ and $P_j$, they broadcast their respective common values and their respective random
  pads, else they broadcast their appropriately masked common values. In parallel, dealer either broadcasts the common value in a masked fashion if the pads it received from
  $P_i, P_j$ are same, else it just broadcasts the common value. The secrecy of the common values is
  maintained if $P_i$, $P_j$ and the dealer are honest. On the other hand, if dealer is corrupt
   and if there is a disagreement between honest $P_i, P_j$ , then the dealer can ``take side" 
   with at most one of them during the third round. 
  %
     %

    \begin{protocolsplitbox}{$\KKKWSS\mbox{-}\Sh$}{The $3$-round $\KKKWSS\mbox{-}\Sh$ protocol due to  Katz, Koo and Kumaresan \cite{KKK09}.}{fig:KKKWSS}
    \justify
    \begin{myitemize}
    \item {\bf Round I (sending polynomials and exchanging random pads)}:
      \begin{myitemize}
      \item[--] $\D$ distributes $f_i(x), g_i(y)$ lying on $F(x, y)$ to each $P_i$.
      In parallel,
      each $P_i$ sends a random pad $r_{ij}$ to $P_j$ and additionally 
      the pad-list $\{r_{ij} \}_{P_j \in \Partyset}$ to $\D$.
      \end{myitemize}
    \item {\bf Round II (exchanging common values and confirming pad) --- each $P_i$ does the following}:
      \begin{myitemize}
      \item[--] Send $a_{ij} = f_i(\alpha_j)$ and $b_{ij} = g_i(\alpha_j)$ to $P_j$.
      \item[--] Send $\{r'_{ji} \}_{P_j \in \Partyset}$ to $\D$, where $r'_{ji}$ denotes the pad received from $P_j$ during Round I.
      \end{myitemize}
    \item {\bf Round III (complaint and resolution) --- each party $P_i$ does the following}:
      \begin{myitemize}
      \item[--] For $j \in \{1, \ldots, n \}$, let $a'_{ji}$ and $b'_{ji}$ be the values received from $P_j$ during Round II.
            \begin{myitemize}
              \item[--] If $b'_{ji} \neq f_i(\alpha_j)$, broadcast $(j, \disagreef, f_i(\alpha_j), r_{ij})$,
        		else broadcast $(j, \agreef, f_i(\alpha_j) + r_{ij})$.
	      \item[--] If $a'_{ji} \neq g_i(\alpha_j)$, broadcast $(j, \disagreeg, g_i(\alpha_j), r'_{ji})$,
        		else broadcast $(j, \agreeg, \allowbreak g_i(\alpha_j) + r'_{ji})$.
             \end{myitemize}
         
      \item[--] If $P_i = \D$, then for every ordered pair of parties $(P_j, P_k)$, {\it additionally} do the following.
            \begin{myitemize}
            \item[--]  Let $r_{jk}^{(1)}$ and $r_{jk}^{(2)}$ be the pads received from $P_j$ and $P_k$ respectively
             during Round I and Round II.
            If $r_{jk}^{(1)} \neq r_{jk}^{(2)}$, then broadcast $((j, k),\NEQ, F(\alpha_k, \alpha_j))$, else broadcast $((j, k),\EQ, F(\alpha_k, \alpha_j) + r_{jk}^{(1)})$.
            \end{myitemize}
      \end{myitemize}
    \item {\bf Local computation (identifying unhappy parties) --- each party $P_k$ does the following}:
      \begin{myitemize}
      \item[--] Initialize a set of {\it unhappy} parties $\Unhappy$ to $\emptyset$.  For every $P_i, P_j$ such that
       $P_i$ broadcasts $(j, \disagreef, \allowbreak f_i(\alpha_j), r_{ij})$ and $P_j$ broadcasts $(i, \disagreeg,  g_j(\alpha_j),  r'_{ij})$ where
      $r_{ij} = r'_{ij}$, do the following.
      \begin{myitemize}
      \item[--] Include $P_i$ to $\Unhappy$, if one of the following holds.
              \begin{myitemize}
               \item[--] During Round III, $\D$ broadcasts $((i, j),\NEQ, d_{ij})$, such that $d_{ij} \neq f_i(\alpha_j)$.
                \item[--] During Round III, $\D$ broadcasts $((i, j),\EQ, d_{ij})$, such that $d_{ij} \neq f_i(\alpha_j) + r_{ij}$.                
              \end{myitemize}
         \item[--] Include $P_j$ to $\Unhappy$, if one of the following holds.
              \begin{myitemize}
               \item[--] During Round III, $\D$ broadcasts $((i, j),\NEQ, d_{ij})$, such that $d_{ij} \neq g_j(\alpha_i)$.
                \item[--] During Round III, $\D$ broadcasts $((i, j),\EQ, d_{ij})$, such that $d_{ij} \neq g_j(\alpha_i) + r'_{ij}$.                
              \end{myitemize}              
          \end{myitemize}
          \item[--] If $|\Unhappy| > t$, then discard $\D$. Else let $\WCORE = \Partyset \setminus \Unhappy$ be the set of {\it happy} parties.
          
      \end{myitemize}
    \end{myitemize}
    
   \end{protocolsplitbox}
\noindent \paragraph{\bf From $\KKKWSS$ to $\KKK$}
The notion of sharing and reconstructing degree-$t$ polynomials (as per Notation \ref{notation:FGGRS})
 is applicable even for $\KKKWSS$. 
  Replacing $\FGGRSWSS$ with $\KKKWSS$ in the $\FGGRS$ readily provides a VSS scheme
 with the optimal usage of the broadcast channel. However, the resultant VSS {\it need not} be of Type-II if $\D$ is {\it corrupt}, as the {\it unhappy honest} parties may not get their shares. 
   Hence $\KKK$ deploys an additional trick to get rid of this problem.

  Let $P_i$ be an {\it unhappy} party during $\FGGRS\mbox{-}{\Sh}$ and let 
  $\starF(x, y)$ be $\D$'s committed bivariate polynomial. 
  Note that each happy $P_j$ broadcasts a masking $A_j(\cdot)$ of $\starF(x, \alpha_j)$.
  If $P_i$ is {\it happy} in $\WSSSh_j$, then $P_i$ can compute the point $\starF(\alpha_i, \alpha_j)$ on its supposedly
  column-polynomial $\starF(\alpha_i, y)$ by unmasking the wss-share $r'_{ji}$ computed during $\WSSSh_j$ from $A_j(\alpha_i)$.
  Hence if there is a set $\Support_i$ of at least $t+1$ happy parties $P_j$, who keep $P_i$ happy during $\WSSSh_j$ instances,
  then $P_i$ can compute $t+1$ distinct points on  $\starF(\alpha_i, y)$ and hence
  get $\starF(\alpha_i, y)$. However, it is not clear how to extend the above approach to enable $P_i$ obtain its row-polynomial
  $\starF(x, \alpha_i)$, which is required for $P_i$ to obtain its share of $\D$'s Shamir-sharing polynomial $\starF(0, y)$. The way-out is to let
  $\D$ use a {\it symmetric bivariate polynomial}, 
    which ensures that $\starF(x, \alpha_i) = \starF(\alpha_i, y)$ holds. 
   
   The above idea requires broadcast during the second and third round. To ensure the optimal usage of the
   broadcast channel, the technique used in $\KKKWSS\mbox{-}\Sh$ is deployed, along with postponing the broadcast of masked row-polynomials 
   to the third round. However, this brings additional challenges to filter out the parties from $\WCORE_j$ sets for the individual
   WSS instances.
    For example, a {\it corrupt} $\D$ can distribute pair-wise {\it inconsistent} polynomials  in such a way that the masked polynomial
   $A_j(\cdot)$ broadcast by an {\it honest happy} party $P_j$ is inconsistent with the corresponding masked value broadcast by an {\it honest unhappy} party
   $P_i$, even though $P_i$ belongs to $\WCORE_j$ during $\WSSSh_j$.
   Simply removing $P_i$ from $\WCORE_j$ in this case (as done in $\FGGRS\mbox{-}{\Sh}$) might end up resulting in $P_i$ being removed from the $\WCORE_j$ sets
   of every honest happy party. And this may lead to $\Support_i$ set of size less than $t+1$. To prevent this, 
   apart from pair-wise consistency checks of masked row-polynomials, the parties also carefully consider the results of {\it private} pair-wise consistency checks 
   performed during the second round, whose results are {\it public} during the third round (see Fig. \ref{fig:KKKVSS}).
   We stress that even though each party obtains a single polynomial from $\D$, it is treated both as row as well
   as column-polynomial to perform the pair-wise consistency checks.
   Accordingly, if $P_i$ finds a ``negative" result for the {\it private} pair-wise consistency check with $P_j$, then it broadcasts both $\disagreef$ and $\disagreeg$ messages
   against $P_j$.
    Else it broadcasts just an $\agreeg$ message for $P_j$;
    the $\agreef$ message for $P_j$ is assumed to be {\it implicitly}
   present in the latter case. 
   The reconstruction phase of $\KKK$ scheme is the same as $\BGW$.
   
  \begin{protocolsplitbox}{$\KKK$}{The $3$-round sharing-phase protocol due to Katz, Koo and Kumaresan \cite{KKK09}.}{fig:KKKVSS}
    \justify
    \centerline{\algoHead{Sharing Phase: Protocol $\KKK\mbox{-}{\Sh}$}}
    \begin{myitemize}
    \item {\bf Round I (sending polynomials and exchanging random pads)}:
      \begin{myitemize}
      \item[--] $\D$ embeds its {\it Shamir-sharing polynomial}
        $q(\cdot)$ in a {\it random} degree-$(t, t)$ {\it symmetric} bivariate polynomial
        $F(x, y)$ at $x = 0$
        and sends only the row-polynomial
        $f_i(x) = F(x, \alpha_i)$ to party $P_i$.
      \item[--] Each party $P_i \in \Partyset$ (including $\D$) picks a random degree-$t$ {\it blinding
          polynomial} $r_i(\cdot)$ and shares it through an instance $\WSSSh_i$  of $\KKKWSS\mbox{-}{\Sh}$.
          In addition, $P_i$ sends the polynomial $r_i(\cdot)$ to $\D$.
      \end{myitemize}
    \item {\bf Round II (exchanging common values and confirming pad) --- each $P_i$ does the following}:
      \begin{myitemize}
      \item[--] For $j = 1, \ldots, n$,  send $a_{ij} =  f_i(\alpha_j)$ to $P_j$.
        Send $\{r'_{ji} \}_{j = 1, \ldots, n}$ to $\D$, where $r'_{ji}$ is the wss-share received from $P_j$ during Round I of $\WSSSh_j$.
      \item[--] For $j = 1, \ldots, n$ execute Round II of the instance $\WSSSh_j$.
      \end{myitemize}
    \item {\bf Round III (complaint and resolution) --- each party $P_i$ does the following}:
      \begin{myitemize}
        \item[--] Broadcast the degree-$t$ polynomial $A_i(\cdot) = f_i(x) + r_i(\cdot)$.
         \item[--] For $j \in \{1, \ldots, n \}$, let $a'_{ji}$ be the value received from $P_j$ during Round II.
              \begin{myitemize}
              \item[--] If $a'_{ji} \neq f_i(\alpha_j)$, then broadcast $(j, \disagreef, f_i(\alpha_j), r_{ij})$
               and $(j, \disagreeg, f_i(\alpha_j), r'_{ji})$.
              \item[--] Else broadcast $(j, \agreeg, f_i(\alpha_j) + r'_{ji})$.
             \end{myitemize}
         
      \item[--] If $P_i = \D$, then for every ordered pair of parties $(P_j, P_k)$, {\it additionally} do the following.
            \begin{myitemize}
            \item[--]  Let $r_{jk}^{(1)}$ and $r_{jk}^{(2)}$ be the pads received from $P_j$ and $P_k$ respectively
             during Round I and Round II.
            If $r_{jk}^{(1)} \neq r_{jk}^{(2)}$, then broadcast $((j, k),\NEQ, F(\alpha_k, \alpha_j))$, else broadcast $((j, k),\EQ, F(\alpha_k, \alpha_j) + r_{jk}^{(1)})$.
            \end{myitemize}
      \item[--] For every $j \in \{1,\ldots,n\}$, concurrently execute Round III of $\WSSSh_j$.
      \end{myitemize}
    \item {\bf Local computation at the end of Round III --- each party $P_k$ does the following}:
      \begin{myitemize}
      \item[--] Initialize a set of {\it unhappy} parties $\Unhappy$ to $\emptyset$.  For every $P_i, P_j$ such that 
       $P_i$ broadcasts $(j, \disagreef, \allowbreak f_i(\alpha_j), r_{ij})$ and $P_j$ broadcasts $(i, \disagreeg,  f_j(\alpha_i),  r'_{ij})$ where
      $r_{ij} = r'_{ij}$, do the following.
         \begin{myitemize}
              \item[--] Include $P_i$ to $\Unhappy$, if one of the following holds.
              \begin{myitemize}
               \item[--] During Round III, $\D$ broadcasts $((i, j), \NEQ, d_{ij})$, such that $d_{ij} \neq f_i(\alpha_j)$.
                \item[--] During Round III, $\D$ broadcasts $((i, j), \EQ, d_{ij})$, such that $d_{ij} \neq f_i(\alpha_j) + r_{ij}$.                
              \end{myitemize}
       \item[--] Include $P_j$ to $\Unhappy$, if one of the following holds.
              \begin{myitemize}
               \item[--] During Round III, $\D$ broadcasts $((i, j), \NEQ, d_{ij})$, such that $d_{ij} \neq f_j(\alpha_i)$.
                \item[--] During Round III, $\D$ broadcasts $((i, j), \EQ, d_{ij})$, such that $d_{ij} \neq f_j(\alpha_i) + r'_{ij}$.                
              \end{myitemize}              
          \end{myitemize}
          \item[--] Let $\VCORE = \Partyset \setminus \Unhappy$ be the set of {\it happy} parties
          and for every $P_j \in \VCORE$, let $\WCORE_j$ be the set of happy parties during $\WSSSh_j$.
           Remove $P_j$ from $\VCORE$, if any of
          the following holds:
            \begin{myitemize}
            \item[--] $|\WCORE_j| < n - t$.
            \item[--] $\exists i \in \{1, \ldots, n \}: P_j$ broadcasts $(i, \disagreef, f_j(\alpha_i), r_{ji})$ where
            $A_j(\alpha_i) \neq f_j(\alpha_i) + r_{ji}$.            
            \end{myitemize}
           \item[--] For every $P_j \in \VCORE$, remove $P_i$ from $\WCORE_j$, if any of the following holds.
               \begin{myitemize}
               \item[--] $P_i$ broadcasts  $(j, \agreeg, y)$ such that $A_j(\alpha_i) \neq y$.
                \item[--] $P_j$ broadcasts $(i, \disagreef, f_j(\alpha_i), r_{ji})$ and $P_i$ broadcasts either $(j, \agreeg, \star)$
                or  $(j, \allowbreak \disagreeg, \star, r'_{ji})$, where $r'_{ji} \neq r_{ji}$.
               \end{myitemize}
         \item[--] Remove $P_j$ from $\VCORE$ if $|\VCORE \cap \WCORE_j| < n - t$. Repeat, till no more parties can be removed from $\VCORE$. 
         \item[--] If $|\VCORE| < n - t$, then discard $\D$. 
      \end{myitemize}
    \item {\bf Computing shares --- each party $P_i \in \Partyset$ does the following}:
      \begin{myitemize}
      \item[--] If $P_i \in \VCORE$, then output the {\it share} $ f_i(0)$.
      Else recompute $f_i(x)$ as follows and output the {\it share} $f_i(0)$.
         \begin{myitemize}
         \item[--] Add $P_j$ to $\Support_i$, if $P_j \in \VCORE$ and $P_i \in \WCORE_j$.
         Interpolate  $\{(\alpha_j, A_j(\alpha_i) - r'_{ji}) \}_{P_j \in \Support_i}$ to compute $f_i(x)$. 
         \end{myitemize}
           \end{myitemize}
    \end{myitemize}
    \end{protocolsplitbox}
  \subsubsection{\bf The $3$-round $\AKP$ Scheme}
  \label{sec:AKP}
  In Applebaum, Kachlon and Patra \cite{AKP20}, it is shown that $4$ rounds are necessary and sufficient for securely computing any $n$-party
   degree-$2$ functionality with perfect security and optimal resilience of $t < n/3$. To design their $4$-round MPC protocol, 
   they rely on a $3$-round Type-II perfectly-secure VSS which should ensure that if $\D$ is not discarded at the end of sharing phase, then one of the following holds for every
   {\it honest} party $P_i$ at the end of Round II.
   \begin{myitemize}
   \item $P_i$ holds its {\it tentative Shamir-share} of the underlying secret;
   \item $P_i$ holds at least $t + 1$ {\it tentative shares} of its Shamir-share, which we call as {\it sub-shares}. 
   \end{myitemize}
   Moreover, it is also required that at the end of Round III, either the tentative share or the tentative sub-shares should turn out to be correct 
   (the exact case need not be known to $P_i$ at the end of Round II).
    The $\FGGRS$ {\it does not} satisfy the above requirements, as it is not a Type-II VSS. The $\KKK$ scheme
   also fails because if $P_i \not \in \VCORE$, then it obtains its sub-shares through $\Support_i$
   only at the end of Round III. Hence in
   Applebaum et al.~\cite{AKP20}, a new 3-round VSS scheme
    (see Fig \ref{fig:AKPVSS}) is presented, satisfying the above requirements. The scheme is obtained by tweaking the $\FGGRS$ scheme and 
   by borrowing the idea of
   symmetric bivariate polynomial from the $\KKK$ scheme.  
   The reconstruction phase of the scheme is same as $\BGW\mbox{-}\Rec$.
  \begin{protocolsplitbox}{$\AKP$}{The $3$-round sharing-phase protocol due to Applebaum, Kachlon and Patra \cite{AKP20}.}{fig:AKPVSS}
    \justify
    \centerline{\algoHead{Sharing Phase: Protocol $\AKP\mbox{-}{\Sh}$}}
    \begin{myitemize}
    \item {\bf Round I}:  Same as Round I of $\FGGRS\mbox{-}{\Sh}$, except that
     $\D$ uses a random degree-$(t, t)$ {\it symmetric} bivariate polynomial $F(x, y)$
     and distributes only the row-polynomial $f_i(x) = F(x, \alpha_i)$ to every $P_i$.
    \item {\bf Round II}: Same as Round II of $\FGGRS\mbox{-}{\Sh}$, except that $b_{ij} \defined f_i(\alpha_j) + r'_{ji}$.
    Moreover, every party $P_i$ sets $s_i \defined f_i(0)$ as its {\it tentative Shamir-share}
    and $\{A_j(\alpha_i) - r'_{ji} \}_{P_j \in \Partyset}$ as its {\it tentative sub-shares}.
    \item {\bf Round III}: Same as Round III of $\FGGRS\mbox{-}{\Sh}$, except that 
    for every $P_i, P_j$ where $A_i(\alpha_j) \neq b_{ji}$, party $P_i$ broadcasts 
    $(f_i(\alpha_j), r_{ij})$, party $P_j$ broadcasts $(f_j(\alpha_i), r'_{ij})$ and $\D$ broadcasts $F(\alpha_j, \alpha_i)$.
     \item {\bf Local computation at the end of Round III --- each party $P_k$ does the following}:
      \begin{myitemize}
      \item[--] Compute the sets $\Unhappy$ and $\VCORE$ as in $\FGGRS\mbox{-}{\Sh}$, based on 
      every $P_i, P_j$ for which $A_i(\alpha_j) \neq b_{ji}$.
      \item[--] Remove $P_i$ from $\WCORE_j$, if $A_j(\alpha_i) \neq b_{ij}$ during Round II and $r_{ji} \neq r'_{ji}$ during Round III.
      \item[--] Remove $P_j$ from $\VCORE$ if there exists some $i \in \{1, \ldots, n \}$, such that $P_j$ broadcasts 
      $(f_j(\alpha_i), r_{ji})$ during Round III and $A_j(\alpha_i) \neq f_j(\alpha_i) + r_j(\alpha_i)$.
      \item[--] Remove $P_j$ from $\VCORE$, if $|\VCORE \cap \WCORE_j| < n - t$. Repeat, till no more parties can be removed from $\VCORE$. 
      If $|\VCORE| < n - t$, then discard $\D$.
        \end{myitemize}
      \item {\bf Computing shares --- each party $P_i \in \Partyset$}: compute the shares as in the protocol $\KKK_{\Sh}$. 
    \end{myitemize}
   \end{protocolsplitbox}
  \subsection{VSS Scheme with $n > 4t$}
 We present a perfectly-secure VSS scheme with $n > 4t$ and a $2$-round sharing phase due to Genarro et al.~\cite{GIKR01}.
   We first present a data structure called 
  $(n, t)$-$\STAR$ (which we often call as just $\STAR$).
  \begin{definition}[$(n, t)$-$\STAR$ \cite{BCG93}]
  \label{def:STAR}
  Let $G$ be an undirected graph over $\Partyset$. Then a pair of subsets of nodes 
   $(\CSet, \DSet)$ where $\CSet \subseteq \DSet \subseteq \Partyset$ is called an $(n, t)$-$\STAR$, if all the following hold.
   \begin{myitemize}
   \item[--] $|\CSet| \geq n - 2t$ and $|\DSet| \geq n - t$.
   \item[--] For every $P_i \in \CSet$ and every $P_j \in \DSet$, the edge $(P_i, P_j)$ is present in $G$.
   \end{myitemize}
  \end{definition} 
 Ben-Or et al.~\cite{BCG93} presents an efficient algorithm for checking the presence of a $\STAR$.
   Whenever the input graph contains a clique of size at least $n - t$, then the algorithm outputs a $\STAR$. 
   
 Protocol $\GIKRIII\mbox{-}{\Sh}$ is based on a simplification of the {\it asynchronous} VSS scheme of Ben-Or et al.~\cite{BCG93}, adapted 
  to the synchronous setting 
 (see Section \ref{sec:BCG}). 
 As done in the earlier protocols, $\D$ distributes row and column-polynomials to the respective parties. The goal is then to verify if the row and
 column-polynomials of all (honest) parties are derived from a single degree-$(t, t)$ bivariate polynomial. However, since $n > 4t$ (compared to $n > 3t$ in the earlier protocols),
 the above verification task is significantly simplified. For simplicity, 
  we first explain an {\it inefficient} verification, followed by the actual {\it efficient} method
  used in the protocol.
  
  Once the parties receive their respective polynomials, they perform the pair-wise consistency checks (based on the idea of using random pads).
  The parties next construct a {\it consistency-graph} $G$ over $\Partyset$,
  where there exists an edge between a pair of parties if no dispute is reported between them. 
  The parties next check for the presence of a {\it clique} of size $n - t$ in $G$, which is bound to exist if $\D$ is {\it honest}.
   If no clique is obtained then clearly $\D$ is {\it corrupt} and hence discarded.
   If a clique $\CSet$ of size $n - t$ is obtained, then the polynomials of the {\it honest} parties in $\CSet$
    are pair-wise consistent and lie on a single degree-$(t, t)$ bivariate polynomial, say $\starF(x, y)$.
    However, there could be up to $t$ (honest) parties {\it outside} $\CSet$
    and the goal is to let  each such ``outsider" $P_i \not \in \CSet$ obtain its degree-$t$ row-polynomial $\starF(x, \alpha_i)$.
   The crucial observation here is that since $n > 4t$, achieving this goal {\it does not} require any additional interaction.
    That is, each $P_i \not \in \CSet$ considers the set of $n - t \geq 3t + 1$ $g_{ji}$ values, received from the parties $P_j \in \CSet$ as part of the pair-wise
   consistency test.  Among these $g_{ji}$ values, at least $2t + 1$ are sent by the {\it honest} parties $P_j \in \CSet$, which
   uniquely define $\starF(x, \alpha_i)$. 
   Since $\starF(x, \alpha_i)$ is a degree-$t$ polynomial and there can be at most $t$ {\it corrupt} parties $P_j \in \CSet$ who may provide incorrect values of $g_{ji}$,
    party $P_i$ can error-correct these values and obtain $\starF(x, \alpha_i)$.
   
   The above method is {\it inefficient}, as finding a maximum-sized clique is an NP-complete problem. 
   Instead, the parties check for the presence of a $\STAR$. If $\D$
   is {\it honest} then the set of honest parties constitute a potential clique of size $n - t$ in $G$ and so a $\STAR$ is always obtained.
   If a $\STAR$ $(\CSet, \DSet)$ is obtained, then there are at least $|\CSet| - t = t + 1$ {\it honest} parties in $\CSet$ holding degree-$t$ row-polynomials 
   and at least
   $|\DSet| - t = 2t + 1$ {\it honest} parties in $\DSet$ holding degree-$t$ column-polynomials, which are pair-wise consistent and hence
    lie on a single degree-$(t, t)$ bivariate polynomial $\starF(x, y)$. 
   Any $P_i \not \in \CSet$ obtains its corresponding row-polynomial $\starF(x, \alpha_i)$ by
   applying the error-correction procedure as discussed above on the $g_{ji}$ values received from $P_j \in \DSet$.
      Protocol $\GIKRIII\mbox{-}{\Sh}$ is presented in Fig \ref{fig:GIKR}; the reconstruction protocol $\GIKRIII\mbox{-}{\Rec}$ is the same as
      $\BGW\mbox{-}\Rec$.
      \begin{protocolsplitbox}{$\GIKRIII\mbox{-}{\Sh}$}{The $2$-round $\GIKRIII\mbox{-}{\Sh}$ protocol with $n > 4t$ due to Genarro, Ishai, Kushilevitz and Rabin \cite{GIKR01}.}{fig:GIKR}
\justify
\begin{myitemize}
\item {\bf Round I}: $\D$ picks $F(x, y)$ and distributes $f_i(x), g_i(y)$ lying on $F(x, y)$ to each $P_i$. 
In parallel, each $P_i$ sends a random mask $r_{ij}$ to each $P_j$.
\item {\bf Round II}: Each $P_i$ broadcasts $a_{ij} = f_i(\alpha_j) + r_{ij}$ and
        $b_{ij} = g_i(\alpha_j) + r'_{ji}$, where $r'_{ji}$ is the pad, received from $P_j$.
\item {\bf Local computation at the end of round II --- each party $P_i$ does the following}:
   \begin{myitemize}
       \item[--] Construct an undirected graph $G$ over $\Partyset$, where the edge $(P_l, P_m)$ is present 
          if $a_{lm} = b_{ml}$ and $a_{ml} = b_{lm}$ holds.  Run the $\STAR$-finding algorithm over $G$.   
        \item[--] If no $\STAR$ is obtained, then discard $\D$. 
         Else let $(\CSet, \DSet)$ be the $\STAR$ obtained in $G$.
                 \begin{myitemize}
                 \item[--] If $P_i \in \CSet$, then output the {\it share} $f_i(0)$.
                 \item[--] Else recompute the row-polynomial $f_i(x)$ as follows and output the {\it share} $f_i(0)$.
                   \begin{myitemize}
                   \item[--] $\forall P_j \in \DSet$, compute $g_{ji} =  b_{ji} - r_{ij}$. Execute $\RSDec(t, t, S_i)$ to get $f_i(x)$, where 
                   $S_i \defined \{g_{ji} \}_{P_j \in \DSet}$. 
                   \end{myitemize}
                 
                 \end{myitemize}
    
    \end{myitemize}
\end{myitemize}
\end{protocolsplitbox}
%

  \subsection{VSS Scheme with a Single Round}
 Genarro et al.~\cite{GIKR01} presented a 
 VSS scheme (Fig \ref{fig:OneVSS}) with $1$-round sharing-phase, where $n = 5$ and $t = 1$. 
  For simplicity, let $P_1$ be the dealer.
  During the sharing phase, $P_1$ distributes Shamir-shares of its secret.
  Since no additional rounds are available, the parties cannot verify whether $\D$ has distributed consistent shares.
   In the reconstruction phase, the dealer is {\it not allowed} to participate.
   The remaining parties exchange their respective shares and try to
   error-correct one potential incorrect share. If the error-correction is successful then the parties output the constant term of the reconstructed degree-$1$ polynomial, else they output
   $\bot$. Since there can be one corrupt party, there are two possible cases.
    If $P_1$ is {\it honest}, then {\it privacy} is ensured and the shares of $P_2, P_3, P_4$ and $P_5$
   lie on a degree-$1$ polynomial. Hence during the reconstruction phase, even if a potentially corrupt party provides incorrect share,
   it can be error-corrected, thus guaranteeing the {\it correctness} property. 
   
   The case when $P_1$ is {\it corrupt} can be divided into {\it three} sub-cases.
   If $P_1$ has distributed valid Shamir-shares (all lying on degree-$1$ polynomial), then the underlying Shamir-shared value
   will be reconstructed correctly. The second sub-case is when $P_1$ has distributed valid Shamir-shares to {\it exactly three} parties,
   say $P_2, P_3$ and $P_4$, and let their shares  lie on a degree-$1$ polynomial $\starq(\cdot)$. 
   Hence during the reconstruction phase, all honest parties reconstruct $s^{\star} = \starq(0)$ by error-correcting the share provided by 
    $P_5$.
   The remaining sub-case is when the shares of no three parties among
   $\{P_2, P_3, P_4, P_5 \}$ lie on a degree-$1$ polynomial. In this case we define $s^{\star} \defined \bot$, where $\bot \not \in \F$, indicating
    that the sharing dealt by $P_1$ is ``invalid".
    During the reconstruction phase, the error-correction will fail (since the shares of no three parties lie on a degree-$1$ polynomial)
   and hence the honest parties output $\bot$. Thus in all the $3$ sub-cases, {\it strong commitment} property is achieved.
   %
   \begin{schemesplitbox}{$\OneVSS$}{The one round perfectly-secure VSS scheme of Genarro, Ishai, Kushilevitz and Rabin \cite{GIKR01}.}{fig:OneVSS}
	\justify
\centerline{\algoHead{Sharing Phase: Protocol $\OneVSS\mbox{-}{\Sh}$}}
The dealer $P_1$ on having input $s \in \F$, picks a random degree-$1$ polynomial $q(\cdot)$ over $\F$ such that $q(0) = s$. 
    For $i = 2, \ldots, 5$, it sends the {\it share} $s_i = q(\alpha_i)$ to $P_i$. \\[.1cm]
\justify
\centerline{\algoHead{Reconstruction Phase: Protocol $\OneVSS\mbox{-}{\Rec}$}}
Each party $P_i \in \{P_2, P_3, P_4, P_5 \}$ does the following:
 \begin{myitemize}
 \item Send  $s_i$ to every $P_j \in \Partyset \setminus \{P_1 \}$.
   On receiving $s_j$ from $P_j$, include $s_j$ in a list $W_i$. Execute $\RSDec(1, 1, W_i)$.
    \begin{myitemize}
    \item[--] If $\RSDec$ outputs a degree-$1$ polynomial $q(\cdot)$, then output $q(0)$.
    Else output $\bot$.
    \end{myitemize} 
 \end{myitemize}
\end{schemesplitbox}


\begin{center}
{\bf Part III :  Asynchronous Communication Setting}
\end{center}

\section{Preliminaries and Definitions}
\label{sec:APrelim}
In the {\it synchronous} communication setting, 
  each party
 knows in advance how long it has to wait for an expected message, and if the message does not arrive within that time-bound, then
  the sender party is {\it corrupt}.
  Unfortunately, it is impossible to ensure such strict time-outs in real-world networks like the Internet.
  Motivated by this, \cite{BCG93,CanettiThesis} introduced the {\it asynchronous} communication model with {\it eventual message delivery}.
   Apart from a better modelling of real-world networks,  asynchronous protocols have the advantage of running at
   the {\it actual}  speed of  the  underlying network. More specifically, for a synchronous protocol, the participants have to
   pessimistically set the global delay $\Delta$ to a large value to ensure that the messages sent by every party at the beginning of
   a round reach their destination within time $\Delta$. But if the {\it actual} delay $\delta$ is such that
   $\delta << \Delta$, then the protocol fails to take advantage of the faster network and its running time will be proportional to
   $\Delta$.

    In the asynchronous model,
    the messages  can be {\it arbitrarily, but finitely} delayed.
   The only guarantee is that any sent message is {\it eventually} delivered, but probably in a different order.
    The sequence of message delivery 
   is controlled by a {\it scheduler} and to model the worst case scenario, we assume that the scheduler is under the control of $\Adv$.
    Due to the lack of any upper bound on the message delays, no party can wait to receive
  communication from {\it all} its neighbours to avoid an endless wait (as a corrupt neighbour may not send any message).
  As a result,  a party
  can afford to wait for messages from {\it at most} $n - t$ parties (including itself), thus ignoring communication from
  $t$ potentially {\it honest} neighbours. Consequently, all synchronous VSS protocols become insecure when executed
  in an asynchronous environment, as they depend upon the fact that the messages of {\it all} the honest parties
  are considered. Due to the absence of any global clock, the execution of any asynchronous protocol is event-based, rather than round-based.
  
  Informally, an AVSS scheme consists of an {\it asynchronous} sharing and an {\it asynchronous} reconstruction
  phase, providing privacy, correctness and strong commitment guarantees. However, 
  we need these properties to hold {\it eventually}. Additionally, we need termination guarantees. 
  Namely, if $\D$ is {\it honest}, then we require that these phases eventually terminate. 
  On the other hand, if $\D$ is {\it corrupt}, then the termination demands are ``weaker". Namely, we require
  the honest parties to terminate the sharing and reconstruction phase {\it only if} some honest party has terminated the sharing phase. This models the fact
  that a potentially corrupt $\D$ may not invoke the sharing phase in the first 
  place (this is unlike synchronous VSS, where protocols always terminate after  a ``finite" number of communication rounds).
  \begin{definition}[\bf Perfectly-Secure AVSS \cite{CanettiThesis}]
  \label{def:AVSS}
  Let $(\Sh, \Rec)$ be a pair of asynchronous protocols for the $n$ parties, where a designated {\it dealer} $\D \in \Partyset$ has a private input $s$ for $\Sh$ and where each
   (honest) party who completes $\Sh$, subsequently invokes $\Rec$, with its local output of $\Sh$. Then $(\Sh, \Rec)$ constitute a 
   {\it perfectly-secure 
   AVSS scheme}, if all the following holds for every possible $\Adv$.
   \begin{myitemize}
      \item {\bf Termination}:
           \begin{myitemize}
           \item[--] If $\D$ is {\it honest}, then every honest party will eventually complete $\Sh$.
           \item[--] If some honest party has completed $\Sh$, then all honest parties will eventually complete $\Sh$.
           \item[--] If some honest party has completed $\Sh$, then it will eventually complete $\Rec$.
           \end{myitemize}
    \item {\bf Privacy and Correctness}: Same as for synchronous VSS.
    \item {\bf Strong Commitment}: If $\D$ is {\it corrupt} and some honest party completes $\Sh$, 
    then the joint view of the honest parties at the end of $\Sh$ defines a value $s^{\star}$ (possibly different from $s$), such that all honest parties eventually output
    $s^{\star}$ at the end of $\Rec$.
   \end{myitemize}
  \end{definition}    
We can have Type-I and Type-II AVSS schemes. All the existing perfectly-secure AVSS schemes are of Type-II, where
 the secret is {\it always} shared as per the Shamir's secret-sharing scheme. 
 \subsection{Asynchronous Tools}
  \paragraph{\bf Asynchronous Reliable-Broadcast (ACast)}
   An ACast  protocol allows a designated {\it sender}
  $\Sender \in \Partyset$ to identically send a message $m$ to all the parties.
   If
   $\Sender$ is {\it honest}, then all honest parties eventually terminate with output $m$.
   While, if $\Sender$ is {\it corrupt} {\it and} some honest party terminates with output $m^{\star}$, then  eventually every
  other honest party should terminate with output $m^{\star}$.
  Hence the termination guarantees are ``weaker" than {\it synchronous} reliable-broadcast (RB), where
  the protocol {\it always} terminates, irrespective of $\Sender$. 
  Bracha \cite{Bra84} presented a very elegant instantiation of ACast for any $n > 3t$. 
    We use the term $P_i$ {\it broadcasts} $m$ to mean that $P_i$ acts as $\Sender$ and invokes an instance of
    ACast protocol to broadcast $m$. Similarly, the term $P_j$ {\it receives $m$ from the broadcast of} $P_i$ means that $P_j$ completes 
    the instance of ACast protocol where $P_i$ is \Sender, with
 $m$ as output. 
  \paragraph{\bf Online Error-Correction (OEC)}
      Let $s$ be 
   $d$-shared among $\Partyset' \subseteq \Partyset$ such that $d < (|\Partyset'| - 2t)$ holds.
   That is, there exists some degree-$d$ polynomial $q(\cdot)$ with $q(0) = s$ and each $P_i \in \Partyset'$ has a share $q(\alpha_i)$.
       The goal is 
    to make some {\it designated} party, say $P_R$, reconstruct $s$
     (actually OEC allows $P_R$ to  reconstruct $q(\cdot)$). 
   In the {\it synchronous} setting, this is achieved by letting {\it every} party in $\Partyset'$ to send its
 share to $P_R$, who can apply the algorithm $\RSDec$ and error-correct up to $t$ potentially incorrect shares.
  Given that $d < (|\Partyset'| - 2t)$, 
  the reconstruction will be robust. In the {\it asynchronous} setting, achieving the same goal requires a bit of trick. 
    The intuition behind OEC is that $P_R$  keeps waiting till it receives
 $d + t + 1$ shares, all of which lie on a {\it unique} degree-$d$ polynomial, which eventually happens for $P_R$ (even if the {\it corrupt} parties
  in $\Partyset'$ do not send their shares to $P_R$)
  as there are at least $|\Partyset'| - t \geq d + t + 1$ honest parties in $\Partyset'$. 
 This step requires $P_R$ to repeatedly apply $\RSDec$ and try recover $q(\cdot)$, upon asynchronously
  receiving every new share from the parties in $\Partyset'$. Once $P_R$ receives $d + 1 + 1$ shares lying on a degree-$d$ polynomial, say
  $q'(\cdot)$, then $q'(\cdot) = q(\cdot)$. This is because among the $d + t + 1$ shares, at least $d + 1$ are from the {\it honest}
  parties, which uniquely determine $q(\cdot)$.
   We denote the OEC procedure by 
 $\OEC(\Partyset', d)$.

\section{Perfectly-Secure AVSS Schemes}
\label{sec:AThreshold}
We discuss the various perfectly-secure AVSS schemes \cite{BCG93,PCR15,CHP13}. 
  While the sharing phase requires $n > 4t$, with the exception of \cite{PCR15} the reconstruction phase requires $n > 3t$.
  The necessity of the condition $n > 4t$ follows from \cite{BKR94,ADS20}, where it is shown that in any AVSS scheme designed
  with $n \leq 4t$, there is a {\it non-zero} probability in the termination property. We summarize the AVSS schemes in 
   Table \ref{tab:AVSSSummary}.
    While the degree of sharing $d$ is $t$ for 
    \cite{BCG93,CHP13}, the degree $d$ could be more than $t$ for \cite{PCR15}.

  \begin{table}[!htbp]
    \centering
    \resizebox{\textwidth}{!}{  
    \begin{tabular}{|c|c|c|c|c|c|c|c|}
        \hline
   & \multicolumn{5}{c|}{Sharing Phase} & \multicolumn{2}{c|}{Reconstruction Phase} \\ \hline
 Scheme & $n$ & $d$ & $\param$ & $|\F|$ & Communication Complexity (CC) & $n$ & CC         
  \\ \hline
  $\BCG$ \cite{BCG93} & $n > 4t$ & $t$ & $1$ & $|\F| > n$ & $\Order(n^2 \log{|\F|} + \BC(n^2 \log n))$ & $n > 3t$ & $\Order(n^2 \log{|\F|})$ \\ \hline
  $\PCR$ \cite{PCR15} & $n > 4t$ & $t < d < n - 2t$ & $1$ & $|\F| > n$ & $\Order(n^2 \log{|\F|} + \BC(n^2 \log n))$ & $n > 4t$ & $\Order(n^2 \log{|\F|})$ \\ \hline
  $\CHP$ \cite{CHP13} & $n > 4t$ & $t$ & $\geq n - 3t$ & $|\F| > 2n - 3t$ & $\Order(\param \cdot n^2 \log{|\F|} + \BC(n^2 \log n))$ & $n > 3t$ & 
   $\Order(\param \cdot n^2 \log{|\F|})$ \\ \hline
    \end{tabular}
    }
      \caption{\label{tab:AVSSSummary}Summary of the perfectly-secure AVSS schemes. Here $\param$ denotes the number of values shared
      through a single AVSS instance
      and $\BC$ denotes the communication happening through ACast.}
\end{table}

  \subsection{$\BCG$ Scheme}
  \label{sec:BCG}
  The $\BCG$ scheme is presented in Fig \ref{fig:BCG}. The sharing phase protocol $\BCG\mbox{-}{\Sh}$ is 
  a slightly modified and simplified version of the original protocol \cite{BCG93}, based on the simplifications suggested in \cite{BH07,PCR15}.   
   Protocol $\BCG\mbox{-}{\Sh}$ is similar to $\GIKRIII\mbox{-}{\Sh}$ (see Fig \ref{fig:GIKR}), executed in the
  asynchronous setting. The protocol has four stages, each of which is executed asynchronously. 
  
  During the first stage, $\D$ distributes the row and column-polynomials to the respective parties. During the second stage, 
   the parties perform perform pair-wise consistency checks
  and publicly announce the results, based on which parties build a consistency-graph. 
  Since the results of the consistency checks are broadcasted {\it asynchronously}, the consistency-graph  might be different for different parties (however, the edges which
  are present in the graph of one honest party will be {\it eventually} included in the graph of every other honest party). 
  During the third stage, $\D$ checks for a $\STAR$ $(\CSet, \DSet)$ in its consistency-graph,
  which it then broadcasts as a ``proof" that
  the row-polynomials of the (honest) parties in $\CSet$ and the column-polynomials of the (honest) parties in $\DSet$ lie a single degree-$(t, t)$ bivariate polynomial, 
  which is considered as $\D$'s ``committed" bivariate polynomial.
  A party upon receiving $(\CSet, \DSet)$ from $\D$ {\it accepts} it, when $(\CSet, \DSet)$ constitutes a $\STAR$ in its own consistency-graph. 
   Fo an {\it honest} $\D$, the set of honest parties eventually constitute a clique and hence
   $\D$ eventually finds a $\STAR$, which will be eventually accepted by every honest party. 
  
  Once a $\STAR$ is accepted by $P_i$
   then in the last stage, its goal
  is to compute its share, for which $P_i$ should hold its degree-$t$ row-polynomial, lying on
  $\D$'s committed bivariate polynomial. 
  If $P_i \in \CSet$, then it already has this polynomial.
  Else, $P_i$ waits for the common values on the required row-polynomial from the parties in $\DSet$ and error-corrects the incorrectly received values using
  OEC.  Since $P_i$'s desired row-polynomial has degree-$t$ and since each $P_j$ in $\DSet$ holds a share of this polynomial in the form of a
  common value on its column-polynomial, OEC eventually outputs the desired row-polynomial for $P_i$, as $|\DSet| \geq 3t + 1$ and $\DSet$ contains at most
  $t$ corrupt parties\footnote{For this step, it is necessary that $n > 4t$. Else $|\DSet| \leq 3t$ and
  OEC will fail to let $P_i$ obtain its desired row-polynomial.}.
  
  During reconstruction phase, every party sends its share to every other party. 
   The parties then reconstruct the secret by using OEC on the received shares (this step will work even if $n > 3t$).
  
  \begin{schemesplitbox}{$\BCG$}{The perfectly-secure AVSS scheme of Ben-Or, Canetti and Goldreich \cite{BCG93}.}{fig:BCG}
	\justify
\centerline{\algoHead{Sharing Phase: Protocol $\BCG\mbox{-}{\Sh}$}}
\begin{myitemize}
\item {\bf Stage I : Distributing Polynomials --- the dealer $\D$ does the following}
      \begin{myitemize}
	    \item[--] On having the input $s \in \F$, pick a random degree-$t$ {\it Shamir-sharing polynomial} $q(\cdot)$, such that $q(0) = s$ holds.
	    Then pick a random degree-$(t, t)$ bivariate polynomial $F(x, y)$, such that $F(0, y) = q(\cdot)$ holds.
	     \item[--] For $i = 1, \ldots, n$, send the polynomials $f_i(x) = F(x, \alpha_i)$ and $g_i(y) = F(\alpha_i, y)$.
	    to $P_i$. 
	  \end{myitemize}
\item {\bf Stage II : Pair-wise consistency checks and building consistency-graph --- each party $P_i$}
   \begin{myitemize}
      \item[--] Upon receiving $f_i(x), g_i(y)$ from $\D$, send 
      $f_{ij} = f_i(\alpha_j)$ and $g_{ij} = g_i(\alpha_j)$ to $P_j$, for $j = 1, \ldots, n$.
       \item[--] Upon receiving $f_{ji}, g_{ji}$ from $P_j$, broadcast $(\OK, i, j)$ if $f_{ji} = g_i(\alpha_j)$ and $g_{ji} = f_i(\alpha_j)$ hold.
      \item[--] Construct a graph $G_i$ over $\Partyset$. Add the edge $(P_j, P_k)$ in $G_i$, if $(\OK, j, k)$ and $(\OK, k, j)$ are received from
        the broadcast of $P_j$ and $P_k$ respectively. Keep updating $G_i$, upon receiving new $(\OK, \star, \star)$ messages.
   \end{myitemize}
 \item {\bf Stage III : Finding $\STAR$ in the consistency-graph --- the dealer $\D$ does the following}
    \begin{myitemize}
    \item[--] Let $G_D$ be the consistency-graph built by $\D$. After every update in $G_D$, run the star-finding algorithm to check for the presence of a $\STAR$
     in $G_D$.
     If a $\STAR$ $(\CSet, \DSet)$ is found in $G_D$, then broadcast $(\CSet, \DSet)$.
    \end{myitemize}
\item {\bf Stage IV : share computation --- each party $P_i$ does the following}
    \begin{myitemize}
    \item[--] If $(\CSet, \DSet)$ is received from the broadcast of $\D$, then {\it accept} it if $(\CSet, \DSet)$ is a $\STAR$ in the graph $G_i$.
    \item[--] If $(\CSet, \DSet)$ is accepted, then compute the {\it share} $s_i$ as follows and terminate.
           \begin{myitemize}
           \item[--] If $P_i \in \CSet$, then set $s_i = f_i(0)$, where $f_i(x)$ is the degree-$t$ row-polynomial received from $\D$.
           \item[--] Else initialize $W_i$ to $\emptyset$. Upon receiving $g_{ji}$ from $P_j \in \DSet$, include $g_{ji}$ to $W_i$.
            Keep updating $W_i$ and keep executing $\OEC(W_i, t)$ till a degree-$t$ polynomial $f_i(x)$ is obtained. 
           Then set $s_i = f_i(0)$.  \\[.2cm]
           \end{myitemize}    
    \end{myitemize}
\end{myitemize}
\justify
\centerline{\algoHead{Reconstruction Phase: Protocol $\BCG\mbox{-}{\Rec}$}}
Each party $P_i \in \Partyset$ does the following.
  \begin{myitemize}
  \item[--] Send the share $s_i$ to every party $P_j \in \Partyset$.
  \item[--] Initialize a set $R_i$ to $\emptyset$. Upon receiving $s_j$ from $P_j$, include $s_j$ to $R_i$.
            Keep updating $R_i$ and executing $\OEC(R_i, t)$ till a degree-$t$ polynomial $q(\cdot)$ is obtained. 
            Then output $s = q(0)$ and terminate.            
  \end{myitemize}
\end{schemesplitbox}
For the sake of completeness, we prove the properties of the scheme $\BCG$, as stated in Theorem \ref{thm:BCG}. The proof for the follow-up AVSS 
 schemes also use similar arguments.
\begin{theorem}
\label{thm:BCG}
 $(\BCG\mbox{-}{\Sh}, \BCG\mbox{-}{\Rec})$ constitute a Type-II perfectly-secure AVSS scheme with respect to 
 Shamir's $t\mbox{-out-of-}n$ secret-sharing scheme. 
\end{theorem}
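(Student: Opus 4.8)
The plan is to verify, one at a time, the four requirements of Definition~\ref{def:AVSS} --- termination, privacy, correctness and strong commitment --- reusing the asynchronous primitives ACast and OEC together with the Pair-wise Consistency Lemma (Lemma~\ref{lemma:bivariate}) and the bivariate-privacy Lemma (Lemma~\ref{lemma:bivariateprivacy}). Throughout let $\Bad$ (with $|\Bad| \leq t$) denote the corrupt parties and $\Honest \defined \Partyset \setminus \Bad$ the honest ones. The central object is the pair $(\CSet, \DSet)$ eventually broadcast by $\D$: I will show that whenever it is accepted by an honest party, one can extract from it a well-defined degree-$(t,t)$ bivariate polynomial $\starF(x,y)$ on which the row/column polynomials of all honest parties in $\CSet$, resp.\ $\DSet$, lie, and then set $q^{\star}(\cdot) \defined \starF(0,y)$ and $s^{\star} \defined \starF(0,0)$; the whole proof reduces to showing this extraction is always possible and that every phase makes progress.

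For \emph{termination} I would argue as follows. If $\D$ is honest, every pair of honest parties eventually receives each other's row/column evaluations, finds them consistent and ACasts the matching $\OK$ messages, so the honest parties eventually form a clique of size $\geq n-t$ in $\D$'s graph $G_D$; by the guarantee of the $\STAR$-finding algorithm $\D$ then broadcasts some $\STAR$ $(\CSet, \DSet)$. Since every $\OK$-edge of $G_D$ is delivered (via ACast) to every honest $P_i$, the pair $(\CSet, \DSet)$ eventually becomes a $\STAR$ in each $G_i$ and is accepted. A party in $\CSet$ then outputs its share immediately, while a party $P_i \notin \CSet$ runs $\OEC(W_i, t)$ on the values $g_{ji}$ received from $P_j \in \DSet$; this terminates because $|\DSet| \geq n-t \geq 3t+1$ forces at least $2t+1$ honest contributors reporting the correct evaluations of $P_i$'s degree-$t$ row-polynomial --- the bound $n > 4t$ is used here in an essential way. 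If $\D$ is corrupt but some honest party has completed $\Sh$, that party accepted a pair $(\CSet, \DSet)$ received from $\D$'s ACast; ACast agreement together with the eventual delivery of all underlying $\OK$ messages then forces every honest party to receive the same $(\CSet, \DSet)$, to see it become a $\STAR$ in its own graph, and to complete $\Sh$ by the same share-computation argument. Finally, once an honest party has completed $\Sh$, all honest parties have (an agreed) degree-$t$ sharing, so in $\Rec$ the call to $\OEC(\Partyset, t)$ terminates because $n \geq 3t+1$.

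For \emph{privacy} with an honest $\D$, I would note that the entire view of $\Adv$ during $\Sh$ --- the polynomials $\{f_i(x), g_i(y)\}_{P_i \in \Bad}$ handed out by $\D$, the pair-wise values sent by honest parties to corrupt ones, the public $\OK$ messages, and $(\CSet, \DSet)$ --- is a randomized function of $\{f_i(x), g_i(y)\}_{P_i \in \Bad}$ alone, since each value an honest $P_i$ sends to a corrupt $P_j$ equals $F(\alpha_j,\alpha_i)$ or $F(\alpha_i,\alpha_j)$, already readable from $f_j, g_j$. By Lemma~\ref{lemma:bivariateprivacy} this tuple of corrupt polynomials has distribution independent of the embedded $q(\cdot)$, hence of $s$. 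For \emph{correctness} (honest $\D$), the termination argument shows every honest party ends up holding $f_i(0) = F(0,\alpha_i) = q(\alpha_i)$ (directly if $P_i \in \CSet$, via OEC otherwise), i.e.\ a Shamir-share of $s$ on $q(\cdot) = F(0,\cdot)$; in $\Rec$, OEC corrects the at most $t$ wrong shares, recovers $q(\cdot)$, and every honest party outputs $q(0) = s$.

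The heart of the argument, and the step I expect to be most delicate, is \emph{strong commitment} for a corrupt $\D$, because in the asynchronous model different honest parties hold different, incomplete consistency graphs, so one must argue carefully that they all ultimately accept the \emph{same} $\STAR$ and extract the \emph{same} bivariate polynomial (via ACast agreement on $\D$'s broadcast and on the $\OK$ messages). Given an accepted $(\CSet, \DSet)$, the bounds $|\CSet| \geq n-2t$, $|\DSet| \geq n-t$ with $n > 4t$ give at least $t+1$ honest parties in $\CSet$ (holding degree-$t$ row-polynomials) and at least $2t+1$ honest parties in $\DSet$ (holding degree-$t$ column-polynomials); the $\STAR$-edges --- including, for an honest party lying in $\CSet \subseteq \DSet$, the self-loop $(\OK,i,i)$, which is exactly the check $f_i(\alpha_i) = g_i(\alpha_i)$ --- supply precisely the pair-wise consistency hypotheses of Lemma~\ref{lemma:bivariate}, yielding a unique degree-$(t,t)$ polynomial $\starF(x,y)$ carrying all these honest polynomials. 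Every honest $P_i$ then outputs the share $\starF(0,\alpha_i)$: directly if $P_i \in \CSet$, and otherwise because OEC over $\DSet$ recovers $\starF(x,\alpha_i)$ (the honest parties in $\DSet$ report $g_j(\alpha_i) = \starF(\alpha_j,\alpha_i)$, and $|\DSet| \geq 3t+1$). Hence $s^{\star} = \starF(0,0)$ is determined by the honest parties' joint view, is $t$-shared via $q^{\star}(\cdot) = \starF(0,\cdot)$, and is robustly reconstructed by OEC in $\Rec$; this simultaneously establishes the Type-II property with respect to Shamir's scheme.
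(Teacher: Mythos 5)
Your proposal is correct and follows essentially the same route as the paper's proof: honest parties eventually form a clique so an honest $\D$'s $\STAR$ is found, broadcast and accepted, ACast agreement on $(\CSet,\DSet)$ plus Lemma~\ref{lemma:bivariate} pins down the committed polynomial $\starF(x,y)$ for a corrupt $\D$, OEC over $\DSet$ (using $n>4t$) gives the missing row-polynomials, privacy follows from Lemma~\ref{lemma:bivariateprivacy}, and OEC in $\BCG\mbox{-}{\Rec}$ yields robust reconstruction of $s$ or $s^{\star}$. Your added details (e.g., the explicit termination case analysis and the self-consistency/self-loop remark) only refine the same argument.
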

\begin{proof}
Let us first consider an {\it honest} $\D$. The {\it privacy} simply follows from the fact that during $\BCG\mbox{-}{\Sh}$, the adversary learns at most $t$ rows and column-polynomials, lying on
 $F(x, y)$. If $\D$ is {\it honest}, then every pair of {\it honest} parties $P_i, P_j$ {\it eventually} receive their respective polynomials and exchange the common points on their polynomials.
  Since the pair-wise consistency test will pass, $P_i$ and $P_j$ {\it eventually} broadcasts $(\OK, i, j)$ and $(\OK, j, i)$ messages respectively and from the properties of ACast, these messages are 
   {\it eventually} delivered
  to every honest party. Since there are at least $n - t$ honest parties, the honest parties will {\it eventually} constitute a clique in every honest party's consistency graph.
  Consequently, $\D$ {\it eventually} finds a $\STAR$ $(\CSet, \DSet)$ in its consistency graph and broadcasts it, which is {\it eventually} delivered to every honest party.
  Moreover, $(\CSet, \DSet)$ is {\it eventually} accepted by every honest party, as $(\CSet, \DSet)$ {\it eventually} forms a $\STAR$ in every honest party's consistency graph.
  Now consider an arbitrary {\it honest} $P_i$. If $P_i \in \CSet$, then it already has the polynomial $f_i(x)$ and hence it outputs $f_i(0)$ as its share and terminates $\BCG\mbox{-}{\Sh}$.
  On the other hand, even if $P_i \not \in \CSet$, it {\it eventually} receives the common point on  its row polynomial from the honest parties in $\DSet$ as part of the pair-wise
  consistency checks and hence by applying OEC, it {\it eventually}
  computes $f_i(0)$ as its share and terminates $\BCG\mbox{-}{\Sh}$. During $\BCG\mbox{-}{\Rec}$, the share of every honest party is {\it eventually} delivered to every honest party.
  Consequently, by applying OEC, every honest party {\it eventually} reconstructs $s$. This proves the {\it correctness} property.
  
  Next consider a {\it corrupt} $\D$ and let $P_h$ be the {\it first honest} party to terminate $\Sh$. This implies that $P_h$ has received $(\CSet, \DSet)$ from the broadcast of
  $\D$, which constitutes a $\STAR$ in $P_h$'s consistency graph. From the properties of ACast, every honest party will eventually receive and accept $(\CSet, \DSet)$. 
  This implies that the row-polynomials of the {\it honest} parties in $\CSet$ and the column-polynomials of the {\it honest} parties in $\DSet$ are pair-wise consistent
  and lie on a single degree-$(t, t)$ bivariate polynomial, say $F^{\star}(x, y)$. Let $q^{\star}(\cdot) = F^{\star}(0, y)$ and  $s^{\star} = q^{\star}(0)$. Consider an 
  arbitrary {\it honest} $P_i$. If $P_i \in \CSet$, then it already has the row-polynomial $F^{\star}(x, \alpha_i)$ and hence it outputs $F^{\star}(0, \alpha_i) = q^{\star}(\alpha_i)$ as its share
  of $s^{\star}$.  On the other hand, even if $P_i \not \in \CSet$, it {\it eventually} receives the common point on  its row polynomial from the honest parties in $\DSet$ as part of the pair-wise
  consistency checks and hence by applying OEC, it {\it eventually}
  computes $F^{\star}(0, \alpha_i)$ as its share. Moreover, it is easy to see that during $\BCG\mbox{-}{\Rec}$, every honest party eventually outputs $s^{\star}$.
  This proves the {\it strong commitment} property.
\end{proof}

  \subsection{$\PCR$ Scheme}
   Patra, Choudhury and Rangan \cite{PCR15} observed that
  the $\BCG$ scheme can be modified in a non-trivial way to generate a $d$-sharing of 
  $\D$'s input in a verifiable fashion, for {\it any} given $d$ in the range $t \leq d < n - 2t$. The resultant scheme is presented
   in Fig \ref{fig:PCR}. The main motivation for an AVSS with
  the degree of sharing {\it greater} than $t$ is to get efficient MPC protocols (see \cite{DN07,BH07,BH08,PCR15}).
  
   To $d$-share $s$, $\D$ picks a random degree-$d$ {\it Shamir-sharing polynomial}
  $q(\cdot)$ with $q(0) = s$ and embeds it in a random degree-$(d, t)$ bivariate polynomial $F(x, y)$ at $y = 0$.
  Thus the row and column-polynomials have {\it different} degrees.
    This is in {\it contrast} to the earlier schemes  where $q(\cdot)$ has degree-$t$  and
   where $q(\cdot)$ is embedded at $x = 0$ (instead of $y = 0$) in a random 
   degree-$(t, t)$ bivariate polynomial. $\D$ then distributes the row and column-polynomials and
    the parties publicly announce the results of
        pair-wise consistency checks and build
    consistency-graphs.
    $\D$ then proves that it has distributed consistent polynomials to ``sufficiently many" parties,
   derived from a single degree-$(d, t)$ bivariate polynomial, say $\starF(x, y)$ (where $\starF(x, y) = F(x, y)$ for an {\it honest} $\D$).
   This stage is {\it different} from $\BCG\mbox{-}{\Sh}$ and constitutes the core of $\PCR\mbox{-}{\Sh}$. 
   In a more detail, $\D$ publicly proves that it has delivered degree-$d$ row-polynomials lying on $\starF(x, y)$, to at least $n - t = 3t + 1$ 
   parties $\ESet$ and degree-$t$ column-polynomials lying on $\starF(x, y)$, to at least $n - t = 3t + 1$ 
   parties $\FSet$ (the sets $\ESet$ and $\FSet$ need not be the same). 
     Once the existence of $(\ESet, \FSet)$  is confirmed (we will discuss in the sequel how such
   sets are identified), 
   $\D$'s sharing is completed by ensuring that every $P_i$ gets its degree-$d$ column-polynomial
   $g_i(y)$. Party $P_i$ can then output $g_i(0)$ (which is the same as $\starF(\alpha_i, 0)$) as its share
   and the value $s^{\star} = \starF(0, 0)$ will be $d$-shared through $\starF(x, 0)$.
   If $P_i \in \FSet$ then it will already have its $g_i(y)$ polynomial.
   For $P_i \not \in \FSet$, we observe that {\it every} $P_j \in \ESet$ possesses a share on $g_i(y)$ in the form of
   a point on $P_j$'s row-polynomial and which $P_j$ would have sent to $P_i$ as part of pair-wise consistency test. 
   Since there are at least $3t + 1$ such parties $P_j$ in $\ESet$ and since $g_i(y)$ has degree-$t$, 
   party $P_i$ can reconstruct $g_i(y)$ by using the OEC mechanism.
   
      If $\D$ is {\it honest}, then $\Adv$ learns at most
   $t$ rows and column-polynomials lying on $F(x, y)$.
    Now similar to Lemma \ref{lemma:bivariateprivacy}, one can show that the probability distribution of these polynomials
    will be independent of $q(\cdot) = F(0, y)$. That is, for every candidate degree-$d$ polynomial $q(\cdot)$, there exists
   some degree-$(d, t)$ bivariate polynomial, consistent with the row and column-polynomials learnt by $\Adv$. Intuitively this is because 
   $(d+1)(t+1)$ distinct points are required to uniquely determine $F(x, y)$, but 
   $\Adv$ learns at most $t(d + 1) + t$ distinct points on $F(x, y)$ through the polynomials of corrupt parties, 
   leaving $d + 1 - t$ ``degree of freedom" from the view-point of $\Adv$.
   This ensures the privacy of $\D$'s input.
    We next discuss
    how the $(\ESet, \FSet)$ sets are identified.

    Dealer first finds a $\STAR$ $(\CSet, \DSet)$ in its consistency-graph, proving that the row and column-polynomials
    of the honest parties in $\CSet$ and $\DSet$ respectively lie on $\starF(x, y)$.
     This follows from Lemma \ref{lemma:bivariate} and the fact that the
     degree-$d$ row-polynomials and the degree-$t$ column-polynomials of {\it honest} parties in
     $\CSet$ and $\DSet$ respectively are pair-wise consistent, where $\CSet$ and $\DSet$ has 
     $t + 1$ and $n - 2t = d + 1$ {\it honest} parties respectively.
    To find $(\ESet, \FSet)$, the dealer finds {\it additional} ``supportive parties"  whose row and
    column-polynomials also lie on $\starF(x, y)$. The idea is that for an {\it honest} $\D$, the row and column-polynomials of {\it all} honest
    parties lie on $F(x, y)$ and there are $n - t$ honest parties. To hunt for these additional supportive parties, $\D$ follows the following two-stage non-intuitive 
    approach.
    \begin{myitemize}
    \item It first tries to ``expand''  ${\DSet}$ by identifying  additional parties whose degree-$t$ column-polynomials 
  also lie on $\starF(x, y)$. The expanded set ${\FSet}$, includes all the parties having edges with at least 
 $2t+1$ parties from ${\CSet}$. Thus $\DSet \subseteq \FSet$.
 It is easy to see that  the column-polynomial of every $P_j \in \FSet$ lies on $\starF(x, y)$, as it
  will be pair-wise consistent with the row-polynomials of at least $t+1$ {\it honest} parties from
 $\CSet$, all of which lie on $\starF(x, y)$. 
 \item $\D$ then tries to ``expand'' ${\CSet}$ by searching for the parties $P_j$, who have
 an edge with at least $d+t+1$ parties from ${\FSet}$.
 This will guarantee that $P_j$'s degree-$d$ row-polynomial lies on $\starF(x, y)$, as it will be pair-wise consistent with the column-polynomials of at least
 $d + 1$ {\it honest} parties from $\FSet$, all of which lie on $\starF(x, y)$. 
 Such parties $P_j$ are included by $\D$ 
  in a set ${\ESet}$. Notice that the parties in ${\CSet}$ will satisfy the above condition and so $\CSet \subseteq {\ESet}$.
    \end{myitemize}
Once $\D$ finds $(\ESet, \FSet)$, it broadcasts them and then the parties verify whether
 indeed they satisfy the above conditions.  However there is a subtle issue, as 
   an {\it honest} $\D$ may have to wait {\it indefinitely}  for the ``expansion'' of $\DSet$ and
 $\CSet$ sets, beyond their initial cardinalities. For instance, let
  $n = 4t + 1$, $d = 2t$ and let $\CSet$ and $\DSet$ be 
 of size $2t+1$ and $3t+1$ respectively, containing $t$ corrupt parties.
 If the {\it corrupt} parties $P_i$ in $\CSet$  choose to be {\it inconsistent} with the parties  $P_j$ {\it outside}  $\DSet$ (by {\it not} broadcasting the $(\OK, i, j)$ messages), then
   the  {\it honest}  parties $P_j$ {\it outside} $\DSet$ will have edges with only $t+1$ parties from  $\CSet$ and will {\it not} be included in the set $\FSet$.
 So $\FSet$ will remain the same as $\DSet$.
  Similarly, the {\it corrupt} parties in $\FSet$  may choose to be inconsistent with the parties {\it outside}  $\CSet$, due to which
  $\ESet$ will remain the same as $\CSet$.
  To deal with the above, Patra et al.~\cite{PCR15}
   observed that if $\D$ is {\it honest} then eventually the {\it honest} parties 
   form a clique in the consistency-graph. Moreover, if the star-finding algorithm
 is executed on ``this'' instance of the consistency graph,
  then the $\CSet$ component of the obtained
  $\STAR$  will have at least $2t+1$ {\it honest} parties. Now
  if $\CSet$ contains at least $2t+1$ honest parties, then eventually $\DSet$ will expand to $\FSet$, which will contain all $n - t$ honest parties
 and eventually $\CSet$ will expand to $\ESet$ containing $n - t = 3t+1$ parties. 
 This crucial observation is at the heart of $\PCR\mbox{-}{\Sh}$.
 However, it is difficult for $\D$ to identify an instance of its dynamic consistency-graph that 
 contains a clique involving at least $n - t$ honest parties.
  The way-out is to repeatedly run the star-finding algorithm  
  and try the expansion of every instance of $\STAR$ $(\CSet, \DSet)$ obtained in 
  the consistency-graph. 

    \begin{schemesplitbox}{$\PCR$}{The perfectly-secure AVSS scheme of Patra, Choudhury and Rangan \cite{PCR15}.}{fig:PCR}
	\justify
\centerline{\algoHead{Sharing Phase: Protocol $\PCR\mbox{-}{\Sh}$}}
\begin{myitemize}
\item {\bf Stage I : Distributing Polynomials} --- same steps as $\BCG\mbox{-}{\Sh}$ except that the  {\it Shamir-sharing polynomial} $q(\cdot)$ is of degree-$d$,
 the bivariate polynomial $F(x, y)$ is of degree-$(d, t)$ and $F(x, 0) = q(\cdot)$.
\item {\bf Stage II : Pair-wise consistency checks and building consistency-graph} --- same as $\BCG\mbox{-}{\Sh}$.
 \item {\bf Stage III : Finding $(\ESet, \FSet)$ in the consistency-graph --- the dealer $\D$ does the following}
         \begin{myitemize}
	    \item[--] After every update in $G_D$, run the star-finding algorithm to check for the
	              presence of a $\STAR$.  Let there are $\alpha$ number of 
	              distinct $\STAR$s that are found till now in $G_D$, where $\alpha \geq 0$.
                      \begin{myitemize}
			         \item[--] If a new $\STAR$ $(\CSet_{\alpha +1}, {\DSet}_{\alpha +1})$ is found in $G_D$, 
 					then do the following:
					    \begin{myenumerate}
					         \item  Add $P_j$ to a set ${\FSet}_{\alpha+1}$
						             if $P_j$ has an edge with at least $2t+1$ parties from $ {\CSet}_{\alpha + 1}$ in $G_D$.
					        \item  Add $P_j$ to a set ${\ESet}_{\alpha + 1}$
					               if $P_j$ has an edge with at least $d+t+1$ parties from ${\FSet}_{\alpha + 1}$ in $G_D$.
					         \item For $\beta=1, \ldots, \alpha$, update the existing ${\FSet}_{\beta}$ and ${\ESet}_{\beta}$ sets as follows:
					                \begin{myitemize}
						                    \item Add $P_j$ to ${\FSet}_{\beta}$, if $P_j \not \in {\FSet}_{\beta}$
						                          and $P_j$ has an edge with at least $2t+1$ parties from ${\CSet}_{\beta}$ in $G_D$.
						                    \item Add $P_j$ to ${\ESet}_{\beta}$, if $P_j \not \in {\ESet}_{\beta}$
					                          and $P_j$ has an edge with at least $d+t+1$ parties from $ {\FSet}_{\beta}$ in $G_D$.
					                \end{myitemize}
				         \end{myenumerate}
		         \item[--] If no new $\STAR$ is obtained, then update the existing sets ${\FSet}_{\beta}, {\ESet}_{\beta}$ by executing the step $3$ as above.
	         \end{myitemize}
           \item[--]   Let $({\ESet}_{\gamma}, {\FSet}_{\gamma})$ be the first pair among the generated pairs $({\ESet}_{\beta}, {\FSet}_{\beta})$
         such that $|{\ESet}_{\gamma}| \geq 3t+1$ and $|{\FSet}_{\gamma}| \geq 3t+1$. Then
          broadcast $(({\CSet}_{\gamma}, {\DSet}_{\gamma}), ({\ESet}_{\gamma}, {\FSet}_{\gamma}))$.
      \end{myitemize}
 \item {\bf Stage IV : share computation --- each party $P_i$ does the following}
    \begin{myitemize}
    \item[--] If $(({\CSet}_{\gamma}, {\DSet}_{\gamma}), ({\ESet}_{\gamma}, {\FSet}_{\gamma}))$ is received from the broadcast of $\D$, {\it accept} it if 
     all the following hold.
         \begin{myitemize}
         \item[--] $|{\ESet}_{\gamma}| \geq 3t+1$ and $|{\FSet}_{\gamma}| \geq 3t+1$ and 
          $({\CSet}_{\gamma}, {\DSet}_{\gamma})$ is a $\STAR$ in the consistency-graph $G_i$.
         \item[--] Every party $P_j \in {\FSet}_{\gamma}$ has an edge with at least $2t+1$ parties from ${\CSet}_{\gamma}$ in $G_i$.
	\item[--] Every party $P_j \in {\ESet}_{\gamma}$ has an edge with at least $d+t+1$ parties from ${\FSet}_{\gamma}$ in $G_i$.
         \end{myitemize}
    \item[--] If $(({\CSet}_{\gamma}, {\DSet}_{\gamma}), ({\ESet}_{\gamma}, {\FSet}_{\gamma}))$ is accepted, then compute the {\it share} $s_i$ as follows and terminate.
           \begin{myitemize}
           \item[--] If $P_i \in \FSet_{\gamma}$, then set $s_i = g_i(0)$, where $g_i(y)$ is the degree-$t$ column-polynomial received from $\D$.
           \item[--] Else initialize $W_i$ to $\emptyset$. Upon receiving $f_{ji}$ from $P_j \in \ESet$, include $f_{ji}$ to $W_i$.
            Keep updating $W_i$ and keep executing $\OEC(W_i, t)$ till a degree-$t$ polynomial $g_i(y)$ is obtained. 
           Then set $s_i = g_i(0)$.  \\[.2cm]
           \end{myitemize}    
    \end{myitemize}
 \end{myitemize}
 
 \justify
 \centerline{\algoHead{Reconstruction Phase: Protocol $\PCR\mbox{-}{\Rec}$}}
Same steps as $\BCG\mbox{-}{\Rec}$, except that the parties now run $\OEC(\star, d)$ to recover a degree-$d$ polynomial.            
   \end{schemesplitbox}
  \subsection{$\CHP$ Scheme}
  Protocol $\BCG\mbox{-}{\Sh}$ requires a communication of $\Order(n^2 \log{|\F|})$ bits over the pair-wise channels, apart from
  the broadcast of $\Theta(n^2)$ $\OK$ messages and the broadcast of $\STAR$.
    The protocol generates $t$-sharing of a {\it single} secret. If $\D$ wants to $t$-share
    $\param$ secrets,  then it can invoke $\param$ 
    instances of $\BCG\mbox{-}{\Sh}$. This makes the
    {\it broadcast-complexity} (namely the number of bits to be broadcast) proportional to $\param$. Instead  Choudhury, Hirt and Patra \cite{CHP13} proposed a  
   modification of \footnote{It is easy to see that the communication complexity of $\PCR\mbox{-}{\Sh}$ is the same as $\BCG\mbox{-}{\Sh}$.}
    $\PCR\mbox{-}{\Sh}$, which
   allows $\D$ to $t$-share $\param$ secrets for {\it any} given $\param \geq n - 3t$, {\it without} incurring any additional communication complexity.
    The broadcast-complexity of $\CHP\mbox{-}{\Sh}$ will be {\it independent} of $\param$, which is a significant saving.
   This is because each instance of the broadcast in the asynchronous setting
    needs to be emulated by running the costly Bracha's ACast protocol. 
   
   We explain the idea of $\CHP\mbox{-}{\Sh}$ assuming $\param = n - 3t$. If $\param > n - 3t$, then $\D$ can divide its inputs into multiple batches
   of $n - 3t$ and invoke an instance of $\CHP\mbox{-}{\Sh}$ for each  batch. Recall that in 
      $\PCR\mbox{-}{\Sh}$, if $\D$ is {\it honest}, then the adversary's view 
    leaves $d + 1 - t$ ``degree of freedom'' in the degree-$(d, t)$ bivariate polynomial $F(x, y)$, where $t < d < n - 2t$.
   If we consider  the {\it maximum} value $d_{max}$ of $d$ which is $n - 2t - 1$, this implies
    $n - 3t$ degree of freedom. 
   While $\PCR\mbox{-}{\Sh}$ uses this degree of freedom for generating a $d_{max}$-sharing
   of a {\it single} secret by embedding a {\it single} degree-$d_{max}$ 
    sharing-polynomial in $F(x, y)$,
    $\CHP\mbox{-}{\Sh}$ uses it for $t$-sharing of $n - 3t$ values by embedding $n - 3t$ degree-$t$ Shamir-sharing polynomials
     in $F(x, y)$. 
          
      In a more detail, given $s^{(1)}, \ldots, s^{(n - 3t)}$ for $t$-sharing,
    $\D$ picks $n - 3t$ random degree-$t$ Shamir-sharing polynomials $q^{(1)}(\cdot), \ldots, q^{(n - 3t)}(\cdot)$, where
    $q^{(k)}(0) = s^{(k)}$. These polynomials are embedded in a degree-$(d_{max}, t)$ bivariate polynomial,
    which is otherwise a random polynomial, except that $F(\beta_k, y) = q^{(k)}(\cdot)$ holds.
    Here $\beta_1, \ldots, \beta_{n - 3t}$ are distinct, publicly-known non-zero elements from $\F$, different from the
     evaluation-points
    $\alpha_1, \ldots, \alpha_n$ (this requires $|\F| > 2n - 3t$).
    Notice that the embedding  and the degree of the sharing-polynomials are different in
    $\PCR\mbox{-}{\Sh}$ and $\CHP\mbox{-}{\Sh}$. Accordingly, the shares of the parties are different (see Fig \ref{fig:PCR-CHP-Comparison}).
    The {\it shares} of $P_i$ in $\CHP\mbox{-}{\Sh}$ will be $\{F(\beta_k, \alpha_i) \}_{k \in \{1, \ldots,n - 3t \}}$.
    And to compute them, $P_i$ should get its row-polynomial $f_i(x) = F(x, \alpha_i)$, as $P_i$ can then compute its shares by
    evaluating $f_i(x)$ at $x = \beta_1, \ldots, \beta_{n - 3t}$.
    
      \begin{figure}[!h]
  \centering
  \begin{tikzpicture}[scale=0.85]
    \draw (0, 0) rectangle (7, 5);
    \draw (0, 2) -- (7, 2);
    \draw (0, 3) -- (7, 3);
    \draw (2.75, 0) -- (2.75, 5);
    \draw (4.25, 0) -- (4.25, 5);
    \draw [fill=yellow] (2.75, 2) rectangle (4.25, 3);
    \draw [rounded corners, fill=ForestGreen!30] (0, -2) rectangle (7, -1);
    \draw [rounded corners, fill=Blond!80] (8, 0) rectangle (11, 5);
    \draw [rounded corners, fill=BabyBlue!30] (12, 0) rectangle (15, 5);
    \draw [color=red, decorate,decoration={brace,amplitude=10pt,mirror,raise=4pt},yshift=0pt]
    (0, -2) -- (7, -2) node [red, midway,yshift=-0.8cm] {\footnotesize
      $[s]_{d_{max}}, s = q(0)$};
    \draw [color=Cobalt!80, decorate,decoration={brace,mirror,raise=4pt}, yshift=0pt]
    (8, 0) -- (11, 0) node [Cobalt!80,midway,yshift=-0.8cm] {\footnotesize $[s^{(1)}]_t$};
    \draw [color=Cobalt!80, decorate,decoration={brace,amplitude=10pt,mirror,raise=4pt},yshift=0pt]
    (12, 0) -- (15, 0) node [Cobalt!80,midway,yshift=-0.8cm] {\footnotesize $[s^{(\param)}]_t$};
    \draw (-1, 2.5) node {\footnotesize $f_i(x)$};
    \draw (-0.3, 2.5) node {\footnotesize $\Rightarrow$};
    \draw (-1, -1.5) node[text=red] {\footnotesize $q(\cdot)$};
    \draw (-0.3, -1.5) node[text=red] {\footnotesize $\Rightarrow$};
    \draw (3.5, 6) node {\footnotesize $g_j(y)$};
    \draw (3.5, 5.5) node {\footnotesize $\Downarrow$};
    \draw (9.5, 6) node[text=Cobalt!80] {\footnotesize $q^{(1)}(\cdot)$};
    \draw (9.5, 5.5) node[text=Cobalt!80] {\footnotesize $\Downarrow$};
    \draw (13.5, 6) node[text=Cobalt!80] {\footnotesize $q^{(\param)}(\cdot)$};
    \draw (13.5, 5.5) node[text=Cobalt!80] {\footnotesize $\Downarrow$};
    \draw (9.5, -2) node[text=Cobalt!80] {\footnotesize $s^{(1)} = q^{(1)}(0)$};
    \draw (13.5, -2) node[text=Cobalt!80] {\footnotesize $s^{(\param)} = q^{(\param)}(0)$};
    \draw (1, 4.5) node {\scriptsize $F(\alpha_1, \alpha_1)$};
    \draw (2.25, 4.5) node {\scriptsize $\ldots$};
    \draw (3.5, 4.5) node {\scriptsize $F(\alpha_j, \alpha_1)$};
    \draw (4.75, 4.5) node {\scriptsize $\ldots$};
    \draw (6, 4.5) node {\scriptsize $F(\alpha_n, \alpha_1)$};
    \draw (1, 3.5) node {\scriptsize $\rvdots$};
    \draw (2.25, 3.5) node {\scriptsize $\rvdots$};
    \draw (3.5, 3.5) node {\scriptsize $\rvdots$};
    \draw (4.75, 3.5) node {\scriptsize $\rvdots$};
    \draw (6, 3.5) node {\scriptsize $\rvdots$};
    \draw (1, 2.5) node {\scriptsize $F(\alpha_1, \alpha_i)$};
    \draw (2.25, 2.5) node {\scriptsize $\ldots$};
    \draw (3.5, 2.5) node {\scriptsize $F(\alpha_j, \alpha_i)$};
    \draw (4.75, 2.5) node {\scriptsize $\ldots$};
    \draw (6, 2.5) node {\scriptsize $F(\alpha_n, \alpha_i)$};
    \draw (1, 1.5) node {\scriptsize $\rvdots$};
    \draw (2.25, 1.5) node {\scriptsize $\rvdots$};
    \draw (3.5, 1.5) node {\scriptsize $\rvdots$};
    \draw (4.75, 1.5) node {\scriptsize $\rvdots$};
    \draw (6, 1.5) node {\scriptsize $\rvdots$};
    \draw (1, 0.5) node {\scriptsize $F(\alpha_1, \alpha_n)$};
    \draw (2.25, 0.5) node {\scriptsize $\ldots$};
    \draw (3.5, 0.5) node {\scriptsize $F(\alpha_j, \alpha_n)$};
    \draw (4.75, 0.5) node {\scriptsize $\ldots$};
    \draw (6, 0.5) node {\scriptsize $F(\alpha_n, \alpha_n)$};
    \draw (1, -0.5) node[text=red] {\scriptsize $\Downarrow$};
    \draw (2.25, -0.5) node[text=red] {\scriptsize $\ldots$};
    \draw (3.5, -0.5) node[text=red] {\scriptsize $\Downarrow$};
    \draw (4.75, -0.5) node[text=red] {\scriptsize $\ldots$};
    \draw (6, -0.5) node[text=red] {\scriptsize $\Downarrow$};
    \draw (1, -1.5) node[text=red] {\scriptsize $g_1(0)$};
    \draw (2.25, -1.5) node[text=red] {\scriptsize $\ldots$};
    \draw (3.5, -1.5) node[text=red] {\scriptsize $g_j(0)$};
    \draw (4.75, -1.5) node[text=red] {\scriptsize $\ldots$};
    \draw (6, -1.5) node[text=red] {\scriptsize $g_n(0)$};
    \draw (7.5, 4.5) node[text=Cobalt!80] {\scriptsize $\Rightarrow$};
    \draw (7.5, 3.5) node[text=Cobalt!80] {\scriptsize $\rvdots$};
    \draw (7.5, 2.5) node[text=Cobalt!80] {\scriptsize $\Rightarrow$};
    \draw (7.5, 1.5) node[text=Cobalt!80] {\scriptsize $\rvdots$};
    \draw (7.5, 0.5) node[text=Cobalt!80] {\scriptsize $\Rightarrow$};
    \draw (9.5, 4.5) node[text=Cobalt!80] {\scriptsize $f_1(\beta_1) = q^{(1)}(\alpha_1)$};
    \draw (9.5, 3.5) node[text=Cobalt!80] {\scriptsize $\rvdots$};
    \draw (9.5, 2.5) node[text=Cobalt!80] {\scriptsize $f_i(\beta_1) = q^{(1)}(\alpha_i)$};
    \draw (9.5, 1.5) node[text=Cobalt!80] {\scriptsize $\rvdots$};
    \draw (9.5, 0.5) node[text=Cobalt!80] {\scriptsize $f_n(\beta_1) = q^{(1)}(\alpha_n)$};
    \draw (11.5, 4.5) node[text=Cobalt!80] {\scriptsize $\ldots$};
    \draw (11.5, 3.5) node[text=Cobalt!80] {\scriptsize $\rvdots$};
    \draw (11.5, 2.5) node[text=Cobalt!80] {\scriptsize $\ldots$};
    \draw (11.5, 1.5) node[text=Cobalt!80] {\scriptsize $\rvdots$};
    \draw (11.5, 0.5) node [text=Cobalt!80] {\scriptsize $\ldots$};
    \draw (13.5, 4.5) node[text=Cobalt!80] {\scriptsize $f_1(\beta_\param) = q^{(\param)}(\alpha_1)$};
    \draw (13.5, 3.5) node[text=Cobalt!80] {\scriptsize $\rvdots$};
    \draw (13.5, 2.5) node[text=Cobalt!80] {\scriptsize $f_i(\beta_\param) = q^{(\param)}(\alpha_i)$};
    \draw (13.5, 1.5) node[text=Cobalt!80] {\scriptsize $\rvdots$};
    \draw (13.5, 0.5) node[text=Cobalt!80] {\scriptsize $f_n(\beta_\param) = q^{(\param)}(\alpha_n)$};
  \end{tikzpicture}
  \caption{Values distributed by $\D$ in $\PCR\mbox{-}{\Sh}$
    and $\CHP\mbox{-}{\Sh}$ on degree-$(d_{max}, t)$ polynomial $F(x, y)$, where
    $d_{max} = n - 2t - 1$. 
     In $\PCR\mbox{-}{\Sh}$, $s$
    is $d_{max}$-shared through $F(x, 0)$ (shown in red color), while
    in $\CHP\mbox{-}{\Sh}$, $s^{(1)}, \ldots, s^{(\param)}$ are $t$-shared through 
    $F(\beta_1, y), \ldots, F(\beta_{\param}, y)$ (shown in  blue color), where
    $\param = n - 3t$.}
  \label{fig:PCR-CHP-Comparison}
\end{figure}
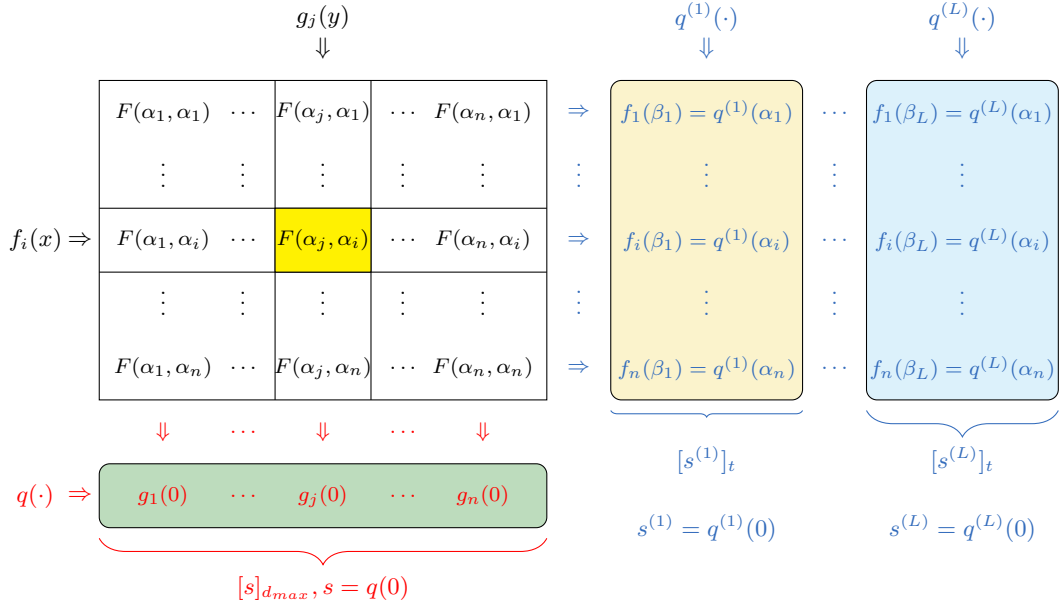

    To achieve the above goal, we observe that if $\D$ invokes $\PCR\mbox{-}{\Sh}$ (with the above modifications)
     and if the protocol 
     terminates, then it ensures that
    $\D$ has ``committed" a degree-$(d_{max}, t)$ bivariate polynomial $\starF(x, y)$,
    such that each (honest) party $P_j$ possesses its column-polynomial $g_j(y) = \starF(\alpha_j, y)$.
    We also observe that for each row-polynomial $f_i(x) = \starF(x, \alpha_i)$, every $P_j$ holds a share 
    $\starF(\alpha_j, \alpha_i)$ in the form of $g_j(\alpha_i)$.
    Moreover, the degree of $f_i(x)$ is $d_{max} = n - 2t - 1$.
    Hence, if every party $P_j$ sends its share $g_j(\alpha_i)$ of $f_i(x)$ to $P_i$, then $P_i$ can reconstruct its desired row-polynomial $f_i(x)$
    by applying OEC on the received values. Hence the values
    $\vec S' =  (\starF(\beta_1, 0), \ldots, \starF(\beta_{n - 3t}, 0))$ will be $t$-shared, where  $\vec S' = (s^{(1)}, \ldots, s^{(n - 3t)})$ for an {\it honest} $\D$.      
 %

\begin{center}
{\bf Part III :  Hybrid Communication Setting}
\end{center}

\section{Preliminaries and Definitions for Hybrid Communication Setting}
\label{sec:HPrelim}
Even though the asynchronous model is practically more relevant compared to the synchronous setting,
 there are some inherent downsides in general with asynchronous protocols. 
 Synchronous protocols offer better {\it resilience}, compared to asynchronous protocols
  (for instance $t < n/4$ for AVSS compared to $t < n/3$ for VSS). Additionally, asynchronous protocols are more complex. 
  An inherent drawback of {\it asynchronous} MPC (AMPC) protocols is the lack of {\it input-provision}
  from {\it all honest} parties; i.e.~inputs of up to $t$ {\it honest} parties {\it may not} be considered for the computation \cite{BH07}.
  To get rid of these drawbacks, several forms of ``synchronization" have been considered 
  in the literature (\cite{DGKN09,BHN10,GPS19,MNR19,AMN0Y20,PR18}).
  One of these is the {\it hybrid} communication setting \cite{BH07,BHN10,PCR15,PR18,CH20,C20}, which is a ``mix" of synchronous and
  asynchronous setting. Namely, the first $R$ rounds are assumed to be synchronous, after which the network behaves {\it asynchronously}.
   There are several practical advantages of the hybrid setting compared to a completely asynchronous setting. 
   For instance, one can guarantee {\it input-provision} in MPC protocols. Moreover, one can design protocols with the 
   {\it same} resilience, as synchronous protocols, {\it without} letting the network to be synchronous for the {\it entire} duration of the protocol.
   Also, it is theoretically interesting to study the trade-off between network synchrony 
   and resilience, round and communication-complexity of various distributed-computing tasks. 
   In the context of VSS, Patra and Ravi \cite{PR18} have shown that
   one can design a Type-II perfectly-secure AVSS satisfying Definition \ref{def:AVSS} with $t < n/3$ in the {\it hybrid} setting (which otherwise would require $t < n/4$ in a
    {\it completely asynchronous} setting), 
   where $R = 1$. 
  Notice that the protocol has optimal resilience as well as it requires the optimal number of synchronous rounds.
  %
  %
%

We next define {\it weak polynomial sharing} (WPS), which is used in \cite{PR18}.
 The primitive allows a dealer $\D$ to distribute shares of a degree-$t$ polynomial held by $\D$. If $\D$ is {\it honest}, then every honest party
  should eventually terminate with its share. Moreover, even if $\D$ is {\it corrupt} and some honest party terminates $\Sh$, then it is ensured that
  $\D$ has distributed shares of some degree-$t$ polynomial to at least $t + 1$ honest parties.
  Property-wise, WPS is a {\it weaker} primitive than WSS, as it has only a sharing phase and {\it may not} allow even a ``weak reconstruction" of $\D$'s shared polynomial.  
\begin{definition}[\bf Weak Polynomial Sharing (WPS) \cite{PR18}]
  \label{def:AWPS}
  Let $\Sh$ be protocol for the $n$ parties in the hybrid setting, where a designated {\it dealer}
  $\D \in \Partyset$ has a degree-$t$ polynomial $f(x)$ 
  over $\F$ as input for $\Sh$. Then $\Sh$ constitutes a {\it perfectly-secure WPS}, if all
  the following hold.
   \begin{myitemize}
   \item {\bf Termination and Privacy}: Same as AVSS.
    \item {\bf Correctness}: If some honest party terminates $\Sh$, then there exists a degree-$t$ {\it weakly
       committed polynomial} $\starf(x)$ over $\F$ such that.
     \begin{myitemize}
     \item[--] If $\D$ is {\it honest}, then $\starf(x) = f(x)$ and each honest $P_i$ outputs
       $f(\alpha_i)$ at the end of $\Sh$.
     \item[--] If $\D$ is {\it corrupt}, then every honest $P_i$ outputs either $\starf(\alpha_i)$
       or some default value $\default$, with at least $t+1$ honest parties outputting $\starf(\alpha_i)$.
     \end{myitemize}
   \end{myitemize}
  \end{definition}    

\section{Hybrid AVSS Protocol with $t < n/3$}
\label{sec:HThreshold}
In the AVSS schemes, the parties not receiving their shares from $\D$, deploy OEC
 to recompute their shares from the sub-shares received from the parties, who have received their shares from $\D$. 
 This inherently requires $n > 4t$. On contrary, the hybrid 
  AVSS of Patra et al.~\cite{PR18} is designed with $n > 3t$. The presence of a synchronous round at the beginning 
    simplifies certain aspects of verifiability and completely avoids the need for OEC. We first start with the WPS construction of \cite{PR18}, which 
    is similar to the $\KKKWSS\mbox{-}{\Sh}$ protocol.
  The dealer embeds its degree-$t$ polynomial in a random {\it symmetric} degree-$(t,t)$ bivariate polynomial and
  distributes its row-polynomials. In parallel, the parties pair-wise exchange random pads. Since the pads are exchanged during the {\it synchronous} round,
  each $P_i$ receives the pad selected for it by every other party, at  the end of the synchronous round. 
  The sent and received pads are also ``registered" with $\D$ for the comparison purpose.
  Based on this, $\D$ sends to $P_i$ its list of conflicting-parties $\Bad_i$, 
   who did not concur on the pads.
   Based on $\Bad_i$, party $P_i$ broadcasts its common values in a masked fashion for the parties who are not in $\Bad_i$
   and in an unmasked fashion for the parties who are in $\Bad_i$.
   The dealer then checks whether $P_i$'s public values are consistent with the bivariate polynomial and the pads registered with $\D$
   and accordingly includes $P_i$ to a set $\WCORE$. Once $\WCORE$ achieves the size of $2t+1$ (which {\it eventually} happens for an {\it honest} $\D$), 
   $\D$ broadcasts $\WCORE$. It is then publicly verified that no pair of parties in $\WCORE$  publicly conflicts over
    their supposedly common values that are either in padded or in clear form.
    
    If $\WCORE$ is successfully verified, then the row-polynomials of the (honest) parties in $\WCORE$ lie on a single degree-$(t, t)$
    symmetric bivariate polynomial $\starF(x, y)$. While every $P_i \in \WCORE$ can output the constant term of its row-polynomial
    as its share, any $P_i \not \in \WCORE$ tries to compute its row-polynomial by interpolating the common values with the parties in $\WCORE$. If
     $P_i$ has a conflict with any party in $\WCORE$, then the common value is
	publicly available. Else, it subtracts the pad it sent to that party in the synchronous round from 
      the padded value available publicly. If the interpolation does not give a degree-$t$ polynomial (which can happen only for a  {\it corrupt} $\D$),
      then $P_i$ outputs $\bot$.
\begin{protocolsplitbox}{$\AWPS$}{WPS protocol of Patra and Ravi  \cite{PR18}.}
  {fig:AWPS}
  \centerline{\algoHead{Synchronous Phase}}
  \begin{myitemize}
  \item {\bf Sending polynomials and exchanging random pads}:
    \begin{myitemize}
    \item[--] $\D$ with input $f(\cdot)$ chooses a random symmetric degree-$(t, t)$ bivariate
      $F(x, y)$ such that $F(0, y) = f(\cdot)$ and sends $f_i(x) = F(x, \alpha_i)$ to each party
      $P_i \in \Partyset$
    \item[--] Each party $P_i \in \Partyset$ picks a random pad $r_{ij}$ for every $P_j \in \Partyset$
      and sends $r_{ij}$ to $P_j$.
    \item[--] Each $P_i$ sends $\{r_{ij} \}_{P_j \in \Partyset}$ to $\D$. Let  $\{r^{(1)}_{ij} \}_{P_j \in \Partyset}$ be the list of {\it sent-pads} received by $\D$ from $P_i$. \\[0.05cm]
    \end{myitemize}
\end{myitemize}    
    \justify
    \centerline{\algoHead{Asynchronous Phase}}
\begin{myitemize}    
  \item {\bf Verifying masks}:
    \begin{myitemize}
    \item[--] Each $P_i$ sends the pads $\{r'_{ji}\}_{P_j \in \Partyset}$ received from various parties to $\D$. 
    Let $\{r_{ji}^{(2)}\}_{P_j \in \Partyset}$ be the list of {\it received-pads} which $\D$ receives from $P_i$.
    \item[--] Upon receiving $\{r_{ji}^{(2)}\}_{P_j \in \Partyset}$ from $P_i$, 
    dealer $\D$ sends to $P_i$ a set $\Bad_i = \{P_j :  r_{ji}^{(2)} \neq r_{ji}^{(1)}\}$.
    \end{myitemize}
  \item {\bf Broadcasting masked/unmasked common values --- each party $P_i$}: Broadcasts $(\mathcal{A}_i, \mathcal{B}_i, \Bad_i)$, where:
    \begin{myitemize}
    \item[--] $\mathcal{A}_i = \{a_{ij} = f_i(\alpha_j) + r_{ij}\}_{P_j \in \Partyset}$.
    \item[--] $\mathcal{B}_i = \{b_{ij}\}_{P_j \in \Partyset}$, where $b_{ij} = f_i(\alpha_j)$ if $P_j \in \Bad_i$
      and $b_{ij} = f_i(\alpha_j) + r'_{ji}$ otherwise.
    \end{myitemize}
  \item {\bf Constructing and broadcasting $\WCORE$ --- Dealer $\D$ does the following}:
    \begin{myitemize}
    \item[--] Upon receiving $(\mathcal{A}_i, \mathcal{B}_i, \Bad_i)$ from $P_i$, mark $P_i$ as {\it correct} and include in $\WCORE$, if all the following holds.
          \begin{myitemize}
      \item[--] $a_{ij} - r^{(1)}_{ij} = F(\alpha_j, \alpha_i)$
      \item[--] $b_{ij} = F(\alpha_j, \alpha_i)$ for all $P_j \in \Bad_i$ and
        $b_{ij} - r^{(2)}_{ji} = F(\alpha_j, \alpha_i)$ otherwise
      \item[--] $\Bad_i$ is the same set sent by $\D$ to $P_i$.
      \end{myitemize}
    \item[--] Wait until $|\WCORE| \ge 2t+1$ and then broadcast $\WCORE$.
     \end{myitemize} 
   \item {\bf Verifying $\WCORE$ --- each party $P_i$}:  {\it Accept} $\WCORE$ received from the broadcast of $\D$ if all the following hold.
         \begin{myitemize}
         \item[--] $|\WCORE| \ge 2t+1$ and $(\mathcal{A}_j, \mathcal{B}_j, \Bad_j)$ is received from the broadcast of each $P_j \in \WCORE$.
         \item[--] Every $P_j, P_k \in \WCORE$ are {\it pair-wise} consistent, as per the following conditions.
            \begin{myitemize}
	      \item[--] if $P_j \in \Bad_k$ and $P_k \in \Bad_j$ then $b_{jk} = b_{kj}$
	      \item[--] if $P_j \in \Bad_k$ and $P_k \not\in \Bad_j$ then $a_{kj} = b_{jk}$
	      \item[--] if $P_j \not\in \Bad_k$ and $P_k \in \Bad_j$ then $a_{jk} = b_{kj}$
	      \item[--] Else $a_{jk} = b_{kj}$ and $a_{kj} = b_{jk}$
	      \end{myitemize}
         \end{myitemize}
  \item {\bf Output stage --- each party $P_i$}: If $\WCORE$ is accepted, then terminate with output $s_i$, computed as follows.
    \begin{myitemize}
         \item[--] If $P_i \in \WCORE$ then set $s_i = f_i(0)$.
      \item[--] Else interpolate the  points
        $\{(\alpha_j, s_{ij})\}_{P_j \in \WCORE}$ where $s_{ij} = b_{ji}$ if $P_i \in \Bad_j$ and
        $s_{ij} = b_{ji} - r_{ij}$ otherwise. If the interpolation outputs a degree-$t$ polynomial $f_i(x)$ then 
        set $s_i =  f_i(0)$, else set $s_i = \bot$.
    \end{myitemize}
  \end{myitemize}
\end{protocolsplitbox}
\noindent \paragraph{\bf From WPS to VSS}
$\AWPS$ fails to serve as a VSS because if $\D$ is {\it corrupt}, then the parties outside $\WCORE$ may output $\bot$.
 Protocol $\PR\mbox{-}{\Sh}$ fixes this shortcoming. The protocol has two ``layers" of communication.
 The  first layer is similar to $\AWPS$ and identifies
  a set $\VCORE$ of $2t + 1$ parties whose row-polynomials lie on a single
 degree-$(t, t)$ symmetric bivariate polynomial $\starF(x, y)$. The second layer 
  enables the parties {\it outside} $\VCORE$ to obtain their row-polynomials lying on $\starF(x, y)$. In a more detail, every
 $P_j$ picks a random {\it blinding-polynomial} $r_j(\cdot)$ and shares it by invoking an instance $\AWPS_j$ of $\AWPS$. Additionally, 
 it makes public the polynomial $r_j(\cdot) + f_j(x)$. The idea is that if later $P_j$ is a part of $\VCORE$, then any
  $P_i \not \in \VCORE$ can compute the point $f_j(\alpha_i)$ (which is the same as $f_i(\alpha_j)$) on $P_i$'s row-polynomial, if $P_i$ obtains the output
  $r_j(\alpha_i)$ during $\AWPS_j$.  While an {\it honest} $P_j \in \VCORE$ makes public the correct $r_j(\cdot) + f_j(x)$, care has to be taken to ensure that even a 
  {\it corrupt} $P_j \in \VCORE$ has made public the correct polynomial. This is done as follows.
  First, each $P_k$ participates {\it conditionally} during $\AWPS_j$ only if
   the blinded polynomial of $P_k$ is consistent with respect to its received point on $r_k(\cdot)$ during $\AWPS_k$
   and the supposedly common value $f_k(j)$. Second, $P_j$ is included in $\VCORE$ only when 
   during $\AWPS_j$ the generated $\WCORE$ set $\WCORE_j$ is {\it accepted} and which has an overlap of 
   $2t + 1$ with $\VCORE$. 
   
  For a {\it corrupt} $P_j  \in \VCORE$,  an {\it honest}
   $P_i \not \in \VCORE$ may end up obtaining $\bot$ during $\AWPS_j$. However there will be
   at least $t + 1$ {\it honest} $P_j \in \VCORE$, corresponding to whom $P_i$ eventually obtains $r_j(\alpha_i)$ during $\AWPS_j$, using which
   $P_i$ obtains $t + 1$ points on $f_i(x)$, which are sufficient to compute $f_i(x)$.
\begin{protocolsplitbox}{$\PR\mbox{-}{\Sh}$}{The hybrid AVSS protocol of Patra and Ravi \cite{PR18}.}{fig:PR}
  \justify
  \centerline{\algoHead{Synchronous Phase}}
    $\D$ and parties execute the same steps as in the synchronous phase of $\AWPS$.
    Additionally, each $P_i$ picks a random degree-$t$ {\it blinding-polynomial} $r_i(\cdot)$ and as a dealer
    invokes an instance $\AWPS_i$ of $\AWPS$ to share $r_i(\cdot)$. Moreover, $P_i$ also participates in the synchronous
      phase of the instance $\AWPS_j$ for every $P_j \in \Partyset$. \\[.1cm]
    \centerline{\algoHead{Asynchronous Phase}}
   \begin{myitemize}
	  \item {\bf Verifying masks}: Parties and $\D$ execute the same steps as in $\AWPS$.
	      \item {\bf Broadcasting values --- each party $P_i$}: Broadcast $(\mathcal{A}_i, \mathcal{B}_i, \Bad_i, d_i(x))$, where
	      $\mathcal{A}_i, \mathcal{B}_i, \Bad_i$ are same as in $\AWPS$ and $d_i(x) = r_i(\cdot) + f_i(x)$.
	  \item{\bf Participating in $\AWPS$ instances --- each party $P_i$}: For $j = 1, \ldots, n$, participate in $\AWPS_j$
	             if a  degree-$t$ polynomial $d_j(x)$ is received from the broadcast of $P_j$ and
	             if $d_j(\alpha_i) = r_j(\alpha_i) + f_i(\alpha_j)$ holds.
	    \item {\bf Computing and Broadcasting $\VCORE$ --- the $\D$}: compute and broadcast $(\VCORE, \{\WCORE_i \}_{P_i \in \VCORE})$, such that:
	       \begin{myitemize}
	       \item[--] Every $P_i \in \VCORE$ is marked as {\it correct} (by satisfying the same conditions as in $\AWPS$).
	       \item[--] For every $P_i \in \VCORE$, the set $\WCORE_i$ is {\it accepted} during the instance $\AWPS_i$.
	       
	         \item[--] $|\VCORE| \geq 2t+ 1$ and $|\VCORE \cap \WCORE_i| \geq 2t+1$ for every $P_i \in \VCORE$.	       
	       \end{myitemize}
	        \item {\bf Verifying $\VCORE$ --- each party $P_i$}: {\it Accept} $(\VCORE, \{\WCORE_i \}_{P_i \in \VCORE})$ received
	        from the broadcast of $\D$, if:
	    		  \begin{myitemize}
		      \item[--] Every $P_j, P_k \in \VCORE$ are {\it pair-wise consistent} (using the same criteria as in $\AWPS$).
		      \item[--] For all $P_j \in \VCORE$, the set $\WCORE_j$ is {\it accepted} during the instance $\AWPS_j$.
		      \item[--] $|\VCORE| \ge 2t+1$ and for every $P_j \in \VCORE$, the condition $|\VCORE \cap \WCORE_j| \ge 2t+1$ holds.
		      \end{myitemize}
  \item {\bf Output stage --- each party $P_i$}: If $(\VCORE, \{\WCORE_i \}_{P_i \in \VCORE})$ is accepted then 
   terminate with output $s_i$, where:
         \begin{myitemize}
      \item If $P_i \in \VCORE$ then set $s_i = f_i(0)$.
      \item Else compute the output $r_{ji}$ in $\AWPS_j$ for every $P_j \in \VCORE$,
      interpolate degree-$t$ polynomial $f_i(x)$ through the points $\{(\alpha_i, s_{ij} = d_j(\alpha_i) - r_{ji}) \}_{P_j \in \VCORE \wedge r_{ji} \neq \bot}$
      and set $s_i = f_i(0)$.
      \end{myitemize}
  \end{myitemize}
\end{protocolsplitbox}

\section{Open Problems}
 We identify the following open problems in the domain of {\it perfectly-secure} VSS.
 \begin{myitemize}
 \item The communication complexity of existing {\it efficient} VSS schemes with a $3$-round sharing phase is 
  $n$ times more compared to the VSS schemes which allow four or more rounds in the sharing phase
  (see Table \ref{tab:VSSSummary}). It is interesting to see if one can close this gap. 
  \item We are unaware of any non-trivial lower bound on the communication complexity of perfectly-secure VSS schemes. One could explore
   deriving any non-trivial lower bound.
  \item The {\it broadcast complexity} (namely the number of bits broadcasted) of all VSS schemes  with $n > 3t$
   is proportional to the number of values $L$ shared by the scheme (namely $\Order(L \cdot n^3 \log{|\F|})$ bits).
    This is unlike the AVSS scheme of Patra et al.~\cite{CHP13}, where the broadcast complexity is $\Order(n^2 \log n)$ for sharing $L$ values.
    Given that each instance of broadcast needs to be emulated by running a costly {\it reliable broadcast} (RB) protocol,
    it is interesting to see if one can design a 
    VSS scheme with $n > 3t$ where the broadcast complexity is {\it independent} of $L$.
  \item The $\AKP$ VSS scheme \cite{AKP20} uses the broadcast channel during two rounds, while the optimal usage is {\it one} round.  
  One could explore to achieve the same
  properties as the $\AKP$ scheme, with the broadcast channel being used only during one round.
 \end{myitemize}

\bibliographystyle{plain}
\bibliography{main}


\end{document}